\documentclass{article}

\usepackage{arxiv}

\usepackage[utf8]{inputenc}
\usepackage[T1]{fontenc}
\usepackage{hyperref}
\usepackage{url}
\usepackage{booktabs}
\usepackage{amsfonts}
\usepackage{nicefrac}
\usepackage{microtype}
\usepackage{graphicx}
\usepackage{subcaption}
\usepackage{natbib}
\usepackage{doi}
\usepackage{wrapfig}
\usepackage{array}
\usepackage{xcolor}
\IfFileExists{colortbl.sty}{\usepackage{colortbl}}{}
\providecommand{\rowcolors}[3]{}
\providecommand{\columncolor}[2][]{}
\providecommand{\rowcolor}[2][]{}
\usepackage{amsmath}
\usepackage{amssymb}
\usepackage{mathtools}
\usepackage{amsthm}
\usepackage[capitalize,noabbrev]{cleveref}
\usepackage{listings}
\usepackage[section]{placeins}
\usepackage{tikz}
\usepackage{tabularx}
\IfFileExists{makecell.sty}{\usepackage{makecell}}{%
}
\IfFileExists{enumitem.sty}{\usepackage{enumitem}}{%
  \let\origitemize\itemize
  \let\endorigitemize\enditemize
  \renewenvironment{itemize}[1][]{\origitemize}{\endorigitemize}
  \let\origenumerate\enumerate
  \let\endorigenumerate\endenumerate
  \renewenvironment{enumerate}[1][]{\origenumerate}{\endorigenumerate}
}
\IfFileExists{adjustbox.sty}{\usepackage[export]{adjustbox}}{}

\usetikzlibrary{arrows.meta,positioning,calc,fit}

\newcolumntype{L}[1]{>{\raggedright\arraybackslash}p{#1}}
\newcolumntype{R}[1]{>{\raggedright\arraybackslash}p{#1}}
\newcolumntype{Y}{>{\raggedright\arraybackslash}X}
\newcolumntype{C}{>{\centering\arraybackslash}p{0.18\linewidth}}
\newcolumntype{D}{>{\centering\arraybackslash}p{0.22\linewidth}}

\definecolor{blkModel}{RGB}{255,245,204}
\definecolor{blkData}{RGB}{230,245,255}
\definecolor{blkLoop}{RGB}{240,235,255}
\definecolor{blkStep}{RGB}{226,250,236}
\definecolor{thmBox}{RGB}{255,240,240}
\definecolor{eagerBox}{RGB}{240,255,240}

\newcommand*\HLModel{\color{black}\colorbox{blkModel}}
\newcommand*\HLData {\color{black}\colorbox{blkData}}
\newcommand*\HLLoop {\color{black}\colorbox{blkLoop}}
\newcommand*\HLStep {\color{black}\colorbox{blkStep}}

\lstdefinestyle{pytorchBox}{
  language=Python,
  basicstyle=\ttfamily\scriptsize,
  frame=single, rulecolor=\color{black!35}, framerule=0.4pt, framesep=5pt,
  columns=fullflexible, keepspaces=true, showstringspaces=false, tabsize=2,
  keywordstyle=\color{blue!70!black}\bfseries,
  commentstyle=\color{gray!65!black}\itshape,
  stringstyle=\color{green!45!black},
  backgroundcolor=\color{gray!3},
  moredelim=**[is][\HLModel]{@M@}{@M@},
  moredelim=**[is][\HLData ]{@D@}{@D@},
  moredelim=**[is][\HLLoop ]{@L@}{@L@},
  moredelim=**[is][\HLStep ]{@S@}{@S@},
}

\newlength{\PanelH}
\lstdefinelanguage{Lean4}{
  sensitive=true,
  morekeywords={open,def,let,do,IO,Unit},
  morecomment=[l]{--},
  morestring=[b]"
}

\lstdefinestyle{torchleanBox}{
  language=Lean4,
  basicstyle=\ttfamily\scriptsize,
  frame=single, rulecolor=\color{black!35}, framerule=0.4pt, framesep=5pt,
  columns=fullflexible, keepspaces=true, showstringspaces=false, tabsize=2,
  keywordstyle=\color{blue!70!black}\bfseries,
  commentstyle=\color{gray!65!black}\itshape,
  stringstyle=\color{green!45!black},
  backgroundcolor=\color{gray!3},
  emph={NN,API,TorchLean,Module,instantiate,IEEE32Exec},
  emphstyle=\color{teal!65!black}\bfseries,
  moredelim=**[is][\HLModel]{@M@}{@M@},
  moredelim=**[is][\HLData ]{@D@}{@D@},
  moredelim=**[is][\HLLoop ]{@L@}{@L@},
  moredelim=**[is][\HLStep ]{@S@}{@S@},
}

\makeatletter
\IfFileExists{tcolorbox.sty}{
  \@ifundefined{NewStructureName}{%
    \providecommand{\NewStructureName}[2][]{}
    \providecommand{\UseStructureName}[1]{}
    \providecommand{\AssignStructureRole}[2]{}
  }{}
  \@ifundefined{tagstructbegin}{%
    \providecommand{\tagstructbegin}[1][]{}
    
  }{}

  \usepackage[most]{tcolorbox}
  \tcbuselibrary{theorems,breakable}
  \newtcbtheorem[number within=section]{BlueTheorem}{Theorem}{
    enhanced,
    breakable,
    colback=black!3,
    colframe=blue!60!black,
    colbacktitle=blue!60!black,
    coltitle=white,
    fonttitle=\bfseries,
    boxrule=0.6pt,
    arc=2pt,
    left=6pt,right=6pt,top=5pt,bottom=5pt,
    borderline west={2pt}{0pt}{blue!60!black},
    before upper={\sloppy\emergencystretch=1.5em},
  }{thm}
}{
  \newenvironment{tcolorbox}[1][]{%
    \par\smallskip\noindent
    \begin{center}\begin{minipage}{0.98\linewidth}\hrule\smallskip
  }{%
    \smallskip\hrule\end{minipage}\end{center}
    \par\smallskip
  }
  \newenvironment{BlueTheorem}[2]{\begin{theorem}[#1]\label{thm:#2}}{\end{theorem}}
}
\makeatother

\newcommand{\ttbr}[1]{\texttt{#1}}
\IfFileExists{dblfloatfix.sty}{\usepackage{dblfloatfix}}{}

\newcommand{\bpG}[2]{\mathrm{bp}_{G}(#1,#2)}
\newcommand{\EvalG}{\mathrm{evalVec}\,G}
\newcommand{\Tensor}{\mathrm{Tensor}}
\newcommand{\sig}{\mathrm{sig}}
\newcommand{\Fin}{\mathrm{Fin}}
\newcommand{\checkCert}{\mathrm{checkCert}}
\theoremstyle{plain}
\newtheorem{theorem}{Theorem}[section]

\theoremstyle{definition}

\theoremstyle{remark}

\lstdefinelanguage{Lean}{
  keywords={def, inductive, structure, theorem, example, let, in, if, then, else, match, with, type, namespace, import, open, using, where},
  morecomment=[l]{--},
  morestring=[b]",
}
\lstdefinestyle{leanBox}{
  language=Lean,
  basicstyle=\ttfamily\footnotesize,
  frame=single,
  columns=fullflexible,
  keepspaces=true,
  breaklines=true,
  showstringspaces=false,
  keywordstyle=\color{blue}\bfseries,
  commentstyle=\color{gray}\itshape,
  stringstyle=\color{red},
  backgroundcolor=\color{gray!3},
}

\newcommand{\Lean}{\textup{\textsc{Lean}}}
\newcommand{\TorchLean}{\textup{\textsc{TorchLean}}}
\newcommand{\llbracket}{\mathopen{[\![}}
\newcommand{\rrbracket}{\mathclose{]\!]}}
\newcommand{\huan}[1]{}

\title{\TorchLean{}: Formalizing Neural Networks in Lean}
\date{}

\author{
Robert Joseph George$^{1}$,
Jennifer Cruden$^{1}$,
Will Adkisson$^{2}$,\\
\textbf{Xiangru Zhong}$^{3}$,
\textbf{Huan Zhang}$^{3}$,
\textbf{Anima Anandkumar}$^{1}$\\
$^{1}$California Institute of Technology\\
$^{2}$Washington University in St. Louis\\
$^{3}$University of Illinois Urbana-Champaign\\
}

\renewcommand{\shorttitle}{\TorchLean{}: Formalizing Neural Networks in Lean}
\date{}

\hypersetup{
  hidelinks,
  pdftitle={TorchLean: Formalizing Neural Networks in Lean},
  pdfsubject={Semantic infrastructure for formal verification of neural network systems in Lean},
  pdfauthor={Robert Joseph George, Jennifer Cruden, Will Adkisson, Xiangru Zhong, Huan Zhang, Anima Anandkumar},
  pdfkeywords={neural networks, verification, formalization, automatic differentiation, Lean, CUDA, reinforcement learning, certificates},
}

\begin{document}
\maketitle

\begin{abstract}
Neural networks are increasingly deployed in scientific, safety critical, and mission critical pipelines, yet verification and analysis are often performed outside the programming environment that defines, runs, and exports the model. This separation creates a semantic gap between the executed network and the analyzed artifact: guarantees can depend on implicit conventions about operator semantics, tensor layouts, preprocessing, floating-point behavior, graph transformations, accelerated kernels, and externally produced certificates. We present \textsc{TorchLean}, a unified framework for formalizing, executing, and verifying neural networks in Lean 4. \textsc{TorchLean} treats learned models as executable programs and first-class mathematical objects with a shared semantics for computation, verification, and theorem proving. The framework provides a PyTorch style API for typed tensors, layers, objectives, optimizers, automatic differentiation, and graph programs, together with eager and compiled execution paths that lower to a common computation-graph representation. \textsc{TorchLean} supports exact and finite-precision tensor semantics, verified reverse-mode differentiation for supported graph programs, interval and affine bound propagation, CROWN/LiRPA style certificate checking, import and export workflows, and CUDA-backed execution through explicit FFI boundaries. It also includes formal semantic layers for modern learning systems, including attention and FlashAttention, state-space sequence models, diffusion and sampling processes, probability kernels, reinforcement-learning objectives and Markov decision processes, and self-supervised objectives such as masked autoencoding, JEPA-style predictive views, and variance/correlation-based anti-collapse losses. Together, these components provide a semantic foundation for verified machine learning, where executable neural network artifacts, verification procedures, runtime boundaries, and mathematical claims can be stated and related inside a single theorem-proving environment\footnote{Code is available at\url{https://github.com/lean-dojo/TorchLean}}.
\end{abstract}

\keywords{neural networks \and verification \and formalization \and automatic differentiation \and Lean}

\section{Introduction}
\label{sec:intro}

Neural networks are increasingly embedded in systems where mistakes have physical or societal cost, and the machine learning community is correspondingly placing more weight on guarantees that go beyond empirical accuracy: robustness to perturbations, satisfaction of hard safety and stability constraints, and conservative bounds on quantities derived from a network's computation. This has driven an active verification ecosystem spanning \emph{solver-based} approaches that reduce verification queries to constraint solving, such as Reluplex and Marabou for piecewise-linear networks~\citep{katz2017reluplex,katz2019marabou}, as well as scalable relaxation and abstract interpretation methods that compute certified enclosures by bound propagation \cite{gowal2018ibp,zhang2018crown,singh2019abstract}. Surveys and systems papers emphasize that such techniques are central to validating learning-enabled components in autonomy, control, and other high-assurance pipelines \cite{xiang2018verification}. Despite this progress, turning verifier outputs into dependable guarantees remains difficult in practice. The issue is not only the strength of a verifier, but also the meaning of the artifact being verified. A trained network may appear as a PyTorch \texttt{nn.Module}, an exported ONNX or FX graph, a compiler IR, a CUDA-backed runtime computation, an external CROWN certificate, or a mathematical object in a paper proof. These artifacts are connected by exporters, graph rewrites, scalar semantics, layout conventions, and runtime libraries. If these boundaries are not made explicit, a property can be proved about one object while the deployed system executes another. We highlight four concrete failure modes.

\begin{figure}[!ht]
    \centering
    \includegraphics[width=1.0\linewidth]{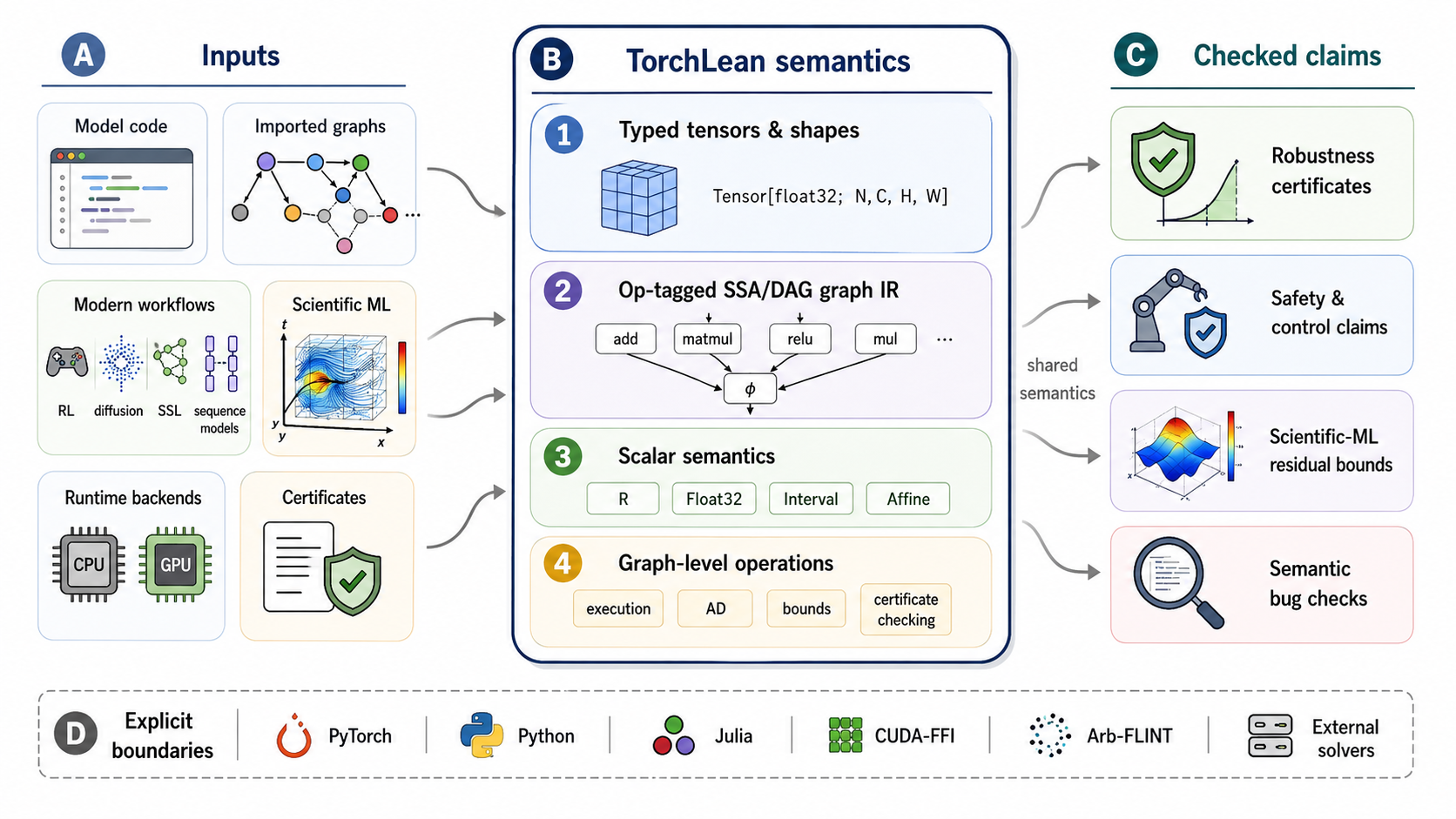}
\caption{\textbf{\TorchLean{} provides a shared semantic layer for verified machine learning.}
Model code, imported graphs, modern ML workflows, scientific ML artifacts, and certificates are funneled into a common typed tensor and operator-tagged SSA/DAG graph semantics in Lean. Execution, automatic differentiation, bound propagation, and certificate checking are all performed against this shared representation, enabling checked claims in robustness, control, scientific machine learning, and bug detection while keeping external runtimes and solvers as explicit boundaries.}
\label{fig:placeholder}
\vspace{-0.4cm}
\end{figure}
\textbf{(1) Export boundaries create semantic drift.}
In modern workflows, the trained model is a program, while verifiers typically consume an exported or recompiled surrogate, such as TorchScript, ONNX, FX, or a custom verification IR. This conversion boundary is exactly where subtle mismatches arise. Trace-based export can miss control flow and dynamic behavior, and even when export succeeds, the resulting graph may only reflect the operations executed in a particular run \cite{pytorch_onnx}. Interchange formats also version operator sets and require models to declare the opsets they rely on \cite{onnx_versioning}. Thus, the meaning of an exported network becomes entangled with tool, version, and backend assumptions that can drift across runtimes. Figure~\ref{fig:semantic-gap} summarizes this gap.

\begin{figure}[!h]
  \centering
  \vspace{-0.5cm}
  \begin{tikzpicture}[
      >=Latex,
      font=\scriptsize,
      node distance=2.5mm and 4mm,
      box/.style={draw, rounded corners=1.5pt, inner sep=2pt, minimum height=5.5mm},
      arr/.style={->, shorten <=0.5pt, shorten >=0.5pt},
      drift/.style={font=\tiny, fill=white, inner sep=0.5pt},
      container/.style={draw, rounded corners=2pt, inner sep=3pt, thick}
    ]
    \node[align=center] (std) at (0,0) {\textbf{Standard pipeline}};
    \node[box, below=3mm of std, xshift=-18mm] (P) {PyTorch};
    \node[box, right=of P] (E) {Export};
    \node[box, right=of E] (O) {ONNX};
    \node[box, right=of O] (S) {Verifier (CROWN/LiRPA)};
    \draw[arr] (P) -- (E);
    \draw[arr] (E) -- (O);
    \draw[arr] (O) -- (S);
    \node[drift, above=0.5mm of E, xshift=2.5mm] {?};
    \node[drift, above=0.5mm of O, xshift=2.5mm] {?};

    \node[align=center] (lnn) at (0,-14mm) {\textbf{\TorchLean{} pipeline}};
    \node[box, below=3mm of lnn, xshift=-12mm] (TL) {\TorchLean{} program};
    \node[box, right=12mm of TL] (IR) {Shared IR};
    \node[box, right=12mm of IR] (Sol) {Verifier (CROWN)};
    \draw[arr] (TL) -- (IR);
    \draw[arr] (Sol) -- (IR);

    \node[container, fit=(TL) (IR),
          label={[font=\tiny]above:Lean~4 (machine-checked semantics)}] (LEANBOX) {};

  \end{tikzpicture}
  \caption{Semantic gap: the standard pipeline (top) chains PyTorch $\to$ Export $\to$ ONNX $\to$ Verifier, with potential drift at each conversion step; \TorchLean{} (bottom) keeps a single shared IR as the semantic target for both execution and verification (Verifier artifacts are checked against this same meaning).}
  \label{fig:semantic-gap}
  \vspace{-0.5em}
\end{figure}
\textbf{(2) Floating-point semantics can invalidate idealized guarantees.}
Most verification methods are stated over real arithmetic or over simplified numerical models, while deployment uses IEEE-754 arithmetic with rounding, overflow, underflow, NaNs, infinities, signed zeros, mixed precision, and backend-specific algorithm choices \cite{ieee754_2019}. Several lines of work show that treating floating point as an implementation detail can invalidate verification conclusions. \citet{jia2020float} demonstrated that numerical error can be exploited to refute robustness claims, motivating conservative modeling of finite-precision effects. More recently, \citet{szasz2025nosoundness} argue that theoretical soundness for a real-valued model does not automatically imply practical soundness for deployed floating-point networks, and \citet{hwang2025fpua} show that robust approximation results for floating-point networks require explicit finite-precision semantics.

\textbf{(3) Certification often stops short of the executed workflow.}
Neural networks are brittle to worst-case perturbations and distribution shifts, motivating certified robustness and safety properties rather than empirical testing alone \cite{szegedy2013intriguing,goodfellow2014explaining}. This need is especially acute for learning-enabled components in control and autonomy, where verification must connect to the implemented artifact \cite{xiang2018verification}. It also appears in scientific machine learning. For example, physics-informed neural networks use automatic differentiation to enforce PDE residuals, so meaningful certificates may need to cover both function values and derivatives \cite{raissi2019pinn}. These settings require a pipeline in which execution, differentiation, and certification refer to the same model semantics.

\textbf{(4) Real ML bugs often live at semantic boundaries.}
Empirical studies of deep learning bugs show that many failures are not exotic theorem-proving corner cases, but ordinary semantic boundary errors: wrong tensor shapes, unintended broadcasting, dtype conversions, layout assumptions, padding conventions, train/eval state, masking mistakes, data preprocessing mismatches, and compiler or runtime transformations \cite{islam2019bugCharacteristics,humbatova2020taxonomy,shen2021dlCompilerBugs}. These bugs are especially dangerous for verification because they can preserve surface-level compatibility while changing the property actually being checked. A robustness certificate for the wrong padding convention, attention mask, batch axis, or exported graph is still a certificate, but not for the intended computation.

We address these challenges with \TorchLean{}, a unified framework for formalizing, executing, and verifying neural networks in \Lean{}~4. Lean is both a practical functional programming language and an interactive theorem prover with a small trusted kernel, extensible automation, metaprogramming, and code generation \cite{moura2021lean4,ebner2017metaprogramming}. This makes it well suited to our goal: not only to state properties of neural networks, but to implement the surrounding pipeline while making its semantic assumptions explicit. \TorchLean{} provides a PyTorch style API for typed tensors, layers, models, objectives, optimizers, automatic differentiation, and training workflows. Programs execute eagerly or lower to a shared operator-tagged SSA/DAG computation-graph IR, which serves as the common semantic target for execution, differentiation, verification, and certificate checking. Prior work has formalized neural-network reasoning in other proof assistants, including Isabelle/HOL tooling for importing models and proving safety or correctness properties \cite{bruckerstell2023isabelle}. We build in Lean because it combines proof checking with a growing ecosystem for mathematics and computer science, including \texttt{mathlib} (\citealp{mathlib2020}) and recent efforts such as \emph{CSLib} (\citealp{barrett2026cslib}). In this setting, \TorchLean{} treats the network definition as the semantic ground truth and organizes specification, runtime execution, and verification around the same graph semantics.

We summarize our contributions as follows.
\begin{itemize}[leftmargin=*, itemsep=0em]
\vspace{-0.2cm}
\item \textbf{\TorchLean{}: PyTorch style neural-network programming in \Lean{}.}
We provide a Lean based API for defining tensors, layers, models, objectives, optimizers, data loaders, and multi-epoch training loops. The runtime supports eager execution and a compiled mode that lowers programs to the shared IR. CPU execution is supported by default, while CUDA-backed dense/matmul and batched-matmul paths, reductions, broadcasts, views, convolution and pooling families, FFT/spectral kernels, normalization kernels, attention/fused-attention paths, gather/scatter utilities, positional/RoPE helpers, and selective-scan-style sequence kernels are exposed through explicit native FFI boundaries. These kernels are treated as runtime interfaces, not as silently verified code.

\item \textbf{A shared IR for execution, differentiation, and verification.}
\TorchLean{} compiles supported programs to an operator-tagged SSA/DAG computation graph with a precise denotation. The same graph object is used by the executor, compiled evaluator, reverse-mode automatic differentiation, IBP/CROWN style bound propagation, and certificate checking. For the supported fragment, we prove a graph-parametric reverse-mode theorem: backprop computes the adjoint Fr\'echet derivative of $\llbracket G \rrbracket$ for any well-typed graph $G$ satisfying local derivative-correctness hypotheses.

\item \textbf{Explicit scalar and finite-precision semantics.}
The same definitions can be instantiated over exact mathematical scalars for proofs, executable finite-precision scalars for runtime behavior, and interval or affine domains for bounds. We provide a rounding model for proofs, executable IEEE-style binary32 components, endpoint interval reasoning, and explicit external-producer boundaries, including Arb/FLINT-style oracles for rigorous transcendental enclosures when needed.

\item \textbf{Verification and certificate checking on the shared semantics.}
On the shared IR, \TorchLean{} implements IBP, CROWN/LiRPA style affine relaxations, and certificate checking. External artifacts are not accepted as guarantees by fiat; they are checked against the graph semantics consumed by Lean. We validate this pipeline on robustness, control-style safety, PINN and derivative-bound examples, VNN-COMP style ONNX/VNN-LIB slices, and regression tests designed to expose semantic boundary failures.

\item \textbf{Modern ML workflows and semantic bug checks.}
\TorchLean{} includes runnable and formalized components for supervised vision and tabular models, GPT-style attention, FlashAttention specifications, Mamba/state-space layers, diffusion and sampling processes, FNO/PINN-style scientific ML, PPO/Gymnasium rollouts, Markov-kernel MDP semantics, self-supervised objectives such as MAE and JEPA-style predictive views, PyTorch round trips, 3D vision certificates, and a bug zoo of semantic boundary checks. Python, Julia, Arb/FLINT, PyTorch, Gymnasium, CUDA, and CROWN family solvers appear as explicit producers or runtimes; Lean remains the checker for the formal artifacts that cross those boundaries.

\end{itemize}

\section{Methodology}
\label{sec:method}

\paragraph{Computation graphs and our IR (definitions).}
Modern ML systems commonly represent a model as a \emph{computation graph}: a directed graph in which \emph{nodes} are primitive operations (e.g., \texttt{MatMul}, \texttt{Conv}, \texttt{ReLU}) and \emph{edges} carry tensors (intermediate values) between operations. This is the standard representation used by deployment/exchange formats such as ONNX, where a graph is a side-effect-free computation composed of nodes that call operators and whose dataflow must admit a topological evaluation order. A graph is a \emph{directed acyclic graph (DAG)} if it has no directed cycles; equivalently, it admits a topological evaluation order.

An \emph{intermediate representation (IR)} is a compiler style, machine- and language-independent program representation designed to be a stable target for analysis and transformation. In \TorchLean{}, the IR is an operator-tagged computation graph: each node carries an explicit \emph{operator tag} identifying the primitive it denotes, such as \texttt{linear}, \texttt{relu}, \texttt{conv2d}, \texttt{attention}, or \texttt{softmax}, together with shape metadata. Execution and verification therefore interpret nodes by the same primitive semantics rather than by an implicit convention inherited from an external framework. We store graphs in \emph{static single assignment (SSA)} form: every intermediate value is defined exactly once and then referenced by subsequent nodes. SSA is a standard compiler IR discipline that simplifies dataflow reasoning and enables deterministic evaluation and induction over the node list \cite{cytron1991ssa}.

Mathematically, for a scalar semantics $\alpha$ and tensor shape $s$, write $\Tensor_\alpha(s)$ for tensors of shape $s$ with scalar entries interpreted in $\alpha$. Each operator tag $\tau$ has a typed signature
$
  \sig(\tau) = (s_1,\ldots,s_k) \to s
$
and a denotation
\[
  \llbracket \tau \rrbracket_\alpha :
  \Tensor_\alpha(s_1) \times \cdots \times \Tensor_\alpha(s_k)
  \to \Tensor_\alpha(s).
\]
A well-typed SSA/DAG graph $G$ denotes a function
\[
  \llbracket G \rrbracket_\alpha :
  \Tensor_\alpha(s_1)\times\cdots\times\Tensor_\alpha(s_m)
  \to
  \Tensor_\alpha(t_1)\times\cdots\times\Tensor_\alpha(t_r),
\]
obtained by evaluating nodes in topological order. The same syntax can be interpreted over exact reals, finite-precision execution domains, intervals, or affine relaxations:
\[
  \llbracket G \rrbracket_{\mathbb{R}},
  \qquad
  \llbracket G \rrbracket_{\texttt{Float32}},
  \qquad
  \llbracket G \rrbracket_{\texttt{Interval}},
  \qquad
  \llbracket G \rrbracket_{\texttt{Affine}}.
\]

Our goal is to eliminate the semantic gap between the \emph{model that is executed} during training or inference and the \emph{model that is analyzed} by bounds or certificates. \TorchLean{} treats the network definition as the single semantic reference point and provides multiple interpretations over the same meaning. Concretely, a model compiles to a shared operator-tagged computation-graph IR in SSA/DAG form, so the denotation is total and deterministic by evaluation in topological order. This IR is the common target for (i) training and inference, (ii) verified reverse-mode automatic differentiation, and (iii) Lean based bound propagation and certificate checking. External verifiers are optional producers rather than trusted authorities: their outputs are interpreted as \emph{certificates} checked against the Lean semantics of the shared graph, reducing the trusted computing base.

Beyond standard feed-forward networks, the same interface also supports modern workflow-level operators and objectives. Attention and FlashAttention style fused operators are specified by equivalence to masked scaled dot-product attention,
\[
  \mathrm{Attn}(Q,K,V,M)
  =
  \mathrm{softmax}\!\left(\frac{QK^\top}{\sqrt d}+M\right)V,
\]
where the mask is part of the formal operator semantics. Diffusion/sampling steps, reinforcement-learning returns and Bellman operators, state-space sequence layers, and self-supervised masking or view objectives are represented as typed specifications connected to the same graph and tensor semantics.

\subsection{Specification layer}
\label{sec:method-spec}

Mainstream ML frameworks represent tensor shapes dynamically: tensors carry runtime shapes, and incompatibilities surface late, often as runtime exceptions or subtle semantic bugs. Empirically, tensor shape faults are among the most prevalent classes of deep learning bugs and frequently lead to crashes \cite{wu2021empiricalstudytensorshape}. \TorchLean{} takes a semantic view: shape constraints are part of the \emph{type} of a tensor, so ill-shaped programs are unrepresentable. This makes theorem statements cleaner, reduces repeated ``shapes match'' side conditions, and moves many plumbing errors from testing to type checking.

\textbf{Core datatypes (shapes and tensors).}
A tensor is indexed by a scalar domain $\alpha$ and a shape $s$; shapes form an inductive tree, where common ML shapes are nested applications of \texttt{dim}. Tensors are represented structurally as total index functions. This functional view is well suited for proofs: tensor operations are defined by recursion on shape, and extensional equality reduces tensor equality to pointwise equality. The specification layer intentionally does \emph{not} commit to a storage layout such as row-major or column-major order, making it suitable as a semantic reference across execution backends.

\vspace{-0.6em}
\begin{lstlisting}[language=Lean,basicstyle=\ttfamily\scriptsize,frame=single]
/-- Shape-indexed tensors. -/
inductive Tensor (a : Type) : Shape -> Type where
  | scalar : a -> Tensor a .scalar
  | dim    : forall {n s}, (Fin n -> Tensor a s) ->
             Tensor a (.dim n s)
\end{lstlisting}
\vspace{-0.35em}

In this model,
\[
  \Tensor_\alpha(\mathtt{scalar}) = \alpha,
  \qquad
  \Tensor_\alpha(\mathtt{dim}(n,s)) = \Fin(n) \to \Tensor_\alpha(s).
\]
Thus a tensor is a total function over a finite index set. Pointwise maps, zips, reductions, reshapes, broadcasts, and masks are explicit typed operations. In particular, broadcasting is not a silent runtime convention; it is represented by a shape transformation with its own typing and semantic lemmas.

\textbf{Typed modules (compositional networks).}
At the specification level, each layer is a typed morphism between tensor spaces,
\[
  L : \Tensor_\alpha(s_{\mathrm{in}})
      \to
      \Tensor_\alpha(s_{\mathrm{out}}),
\]
so composing networks is analogous to composing \texttt{nn.Module}s in PyTorch, but with shape contracts enforced by the type checker. In \TorchLean{}, trainable parameters are carried as a shape-indexed heterogeneous list, one tensor per parameter shape, so parameter routing is type-driven. Composition is only definable when intermediate shapes match; ill-shaped networks do not typecheck.

For example, a linear layer with
$
  W : \Tensor_\alpha(d_{\mathrm{out}}\times d_{\mathrm{in}}),
  \qquad
  b : \Tensor_\alpha(d_{\mathrm{out}})
$
denotes
\[
  \mathrm{Linear}_{W,b}(x)_i
  =
  b_i + \sum_{j<d_{\mathrm{in}}} W_{ij}x_j.
\]
The same definition can be interpreted over $\mathbb{R}$ for proof level reasoning, over \texttt{Float32} or \texttt{IEEE32Exec} for execution, or over interval and affine domains for verification. Because repeated functional updates, such as SGD or Adam steps, can build long closure chains, \TorchLean{} provides \emph{materialization}, which rebuilds a tensor into an array-backed normal form with the same extensional meaning but faster evaluation.

\textbf{Modern operator contracts.}
The specification layer also records contracts for modern neural-network components. Masked attention is specified directly as a mathematical operator, with causal, padding, or arbitrary masks included in the input semantics. FlashAttention style fused or tiled implementations are related to this denotation by equality theorems at the specification level, rather than by treating a fused kernel as a new informal primitive. Sequence/state-space layers are specified as finite recurrences; diffusion and sampling components are specified by noising and denoising transition maps; reinforcement-learning components are specified by returns, temporal-difference residuals, and Bellman-style operators; and self-supervised objectives are specified by explicit view, mask, and target contracts.

\begin{figure*}[!t]
\centering
\begin{subfigure}[t]{0.485\linewidth}
\centering
\begin{lstlisting}[style=pytorchBox]
@M@import torch.nn as nn@M@

# Model + optimizer (semantics live in the runtime)
@M@model = nn.Linear(2, 1)@M@
@M@opt = torch.optim.SGD(model.parameters(), lr=0.2)@M@

# Training data
@D@x = torch.tensor([[1.,0.],[0.,1.],[1.,1.]])@D@
@D@y = torch.tensor([[2.],[-3.],[-1.]])@D@

# Training loop (autodiff + updates external)
@L@for step in range(10):@L@
@S@  opt.zero_grad()@S@
@S@  loss = nn.functional.mse_loss(model(x), y)@S@
@S@  loss.backward(); opt.step()@S@
\end{lstlisting}
\end{subfigure}\hfill
\begin{subfigure}[t]{0.485\linewidth}
\centering
\begin{lstlisting}[style=torchleanBox]
@M@open TorchLean@M@

-- Model + samples (compiled mode)
@M@def model := sequential [linear 2 1, relu]@M@
@D@def samples := [(x1, y1), (x2, y2), (x3, y3)]@D@

-- Explicit Float32 semantics for this run
def main : IO Unit := do
  let mod <- instantiate model (scalar := IEEE32Exec) .compiled
  let opt := Optim.sgd (scalar := IEEE32Exec) (lr := 0.2)
  for step in [0:10] do
    let loss := mse (mod.forward batch.x) batch.y
    backward loss
    opt.step mod.params
\end{lstlisting}
\end{subfigure}

\caption{\textbf{PyTorch vs.\ \TorchLean{}.} We highlight comparable blocks: model/setup (yellow), data (blue), loop structure (purple), and per-step updates (green). In \TorchLean{}, the same model can be executed, lowered to the shared IR, differentiated, and passed to verification.}
\label{fig:pytorch-vs-torchlean}
\vspace{-0.5cm}
\end{figure*}

\subsection{Runtime layer}
\label{sec:method-runtime}

\textbf{Design target.}
The runtime layer connects proof level definitions to runnable training and inference. The user-facing workflow mirrors PyTorch: define a module, run a forward pass, call backward, and update parameters. The key difference is that runtime execution is organized around a verifier semantic target. A \TorchLean{} program can run eagerly, recording a dynamic tape, or compile to the shared operator-tagged SSA/DAG IR consumed by differentiation, bound propagation, and certificate checking.

\textbf{Training and data path.}
The runtime includes reusable ML infrastructure rather than only isolated forward passes: CSV/NPY loaders, minibatch tensor loaders, JSON training logs, parameter import/export, checkpointing, and optimizers including SGD, momentum, Adagrad, RMSProp, Adam, AdamW, and Adadelta. A typical training loop has the usual mathematical shape
\[
  \theta_{t+1}
  =
  \mathrm{Opt}\bigl(\theta_t,\nabla_{\theta}L(\theta_t;B_t),s_t\bigr),
\]
where \(B_t\) is the minibatch, \(s_t\) is optimizer state, and the loss is computed by a \TorchLean{} program with a chosen scalar semantics. The important point is that the same model parameters \(\theta_t\), loss graph, and gradient computation can later be lowered to the verifier IR rather than reconstructed through an informal export path.

\textbf{Two execution modes, one semantic target.}
\textbf{Eager mode} executes operations immediately and records a tape in the style of define-by-run systems. Each tape entry stores the operator tag, parent references, forward values needed by the VJP rule, and an accumulator for cotangents. \textbf{Compiled mode} lowers the same program to the shared operator-tagged SSA/DAG IR and evaluates that graph. This is not \texttt{torch.compile}-style kernel optimization. Its role is semantic: it produces the graph object that theorems and checkers consume.

For a program \(p\) with parameters \(\theta\), lowering produces a graph \(G\) and parameter store \(P_\theta\). The intended correspondence is
\[
  p \Downarrow (G,P_\theta)
  \qquad\Longrightarrow\qquad
  \mathrm{eval}_{\mathrm{src}}(p,\theta,x)
  =
  \llbracket G \rrbracket_\alpha(P_\theta,x),
\]
for the supported forward fragment and scalar semantics \(\alpha\). This is the semantic bridge that allows us to develop in a PyTorch-like style while proving properties of the graph artifact.

\textbf{Reverse-mode AD.}
Reverse-mode automatic differentiation is proved once at the graph level. Let \(G\) be a well-typed SSA/DAG graph over \(\mathbb{R}\), let \(\llbracket G\rrbracket\) be its denotation, and let \(\bar y\) be an output cotangent. Backpropagation computes the adjoint derivative:
\[
  \mathrm{backprop}(G,x,\bar y)
  =
  \bigl(D\llbracket G\rrbracket(x)\bigr)^{\!*}\bar y .
\]
Operationally, this is implemented by a reverse topological sweep:
\[
  \bar v_i
  \;{+}{=}\;
  \mathrm{VJP}_{\tau_j,i}
  \bigl(v_{i_1},\ldots,v_{i_k},\bar v_j\bigr)
  \qquad
  \text{for every edge } i\to j .
\]
Accumulation is essential: if a value is consumed by multiple downstream nodes, the corresponding cotangent is the sum of all downstream contributions. The graph theorem follows by composing local derivative-correctness lemmas for primitive operators.

\begin{BlueTheorem}{Reverse-mode AD correctness}{rev-correct}
Let \(G\) be a well-typed SSA/DAG computation graph over \(\mathbb{R}\) with input context \(\Gamma\). Assume
\ttbr{GraphFDeriv\allowbreak Correct}\,\(G\). Define
\(\bpG{x}{seed}:=\mathrm{backpropVec}(G,x,seed)\). Then for any input \(x\) and cotangent \(seed\),
\[
  \bpG{x}{seed}
  =
  \bigl(D\,(\EvalG)(x)\bigr)^\top seed .
\]
For non-smooth primitives, such as ReLU at \(0\), we use the pointwise variant
\ttbr{GraphFDeriv\allowbreak CorrectAt} under explicit side conditions; the executable convention is deterministic at kink points.
\end{BlueTheorem}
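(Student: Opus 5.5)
The plan is induction on the SSA node list of \(G\), using the evaluation-in-topological-order semantics as the inductive structure. Write \(G = G' \cdot n\), where \(n\) is the last node with operator tag \(\tau\), parents \(i_1,\ldots,i_k\), and \(G'\) is the prefix graph. For the prefix, define the context-extending evaluator \(\Phi_{G'} : x \mapsto (x, v_1,\ldots,v_{|G'|})\) that returns every intermediate value, not just the outputs. The final node then gives \(\Phi_G(x) = (\Phi_{G'}(x), \llbracket\tau\rrbracket(\pi_{i_1,\ldots,i_k}\Phi_{G'}(x)))\). The output evaluation \(\EvalG\) is a linear projection of \(\Phi_G\). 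Since \((D(P\circ\Phi))^\top = (D\Phi)^\top P^\top\), it suffices to prove a strengthened invariant. Backpropagating a cotangent on \emph{all} nodes of the context through the node list should equal \((D\Phi_G(x))^\top\) applied to that cotangent. The stated theorem then follows by seeding with \(P^\top seed\), i.e.\ the seed placed at the output slots and zero elsewhere.

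The base case is the empty graph. There \(\Phi\) is the identity, backprop returns its seed restricted to the inputs, and the identity is self-adjoint.

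For the inductive step, apply the chain rule for Fréchet derivatives from mathlib to \(\Phi_G = \langle \mathrm{id}, \llbracket\tau\rrbracket\circ\pi\rangle \circ \Phi_{G'}\). This gives
\[
D\Phi_G(x)^\top \bar v = D\Phi_{G'}(x)^\top\Bigl(\bar v_{\le |G'|} + \pi^\top\, D\llbracket\tau\rrbracket(\pi\Phi_{G'}(x))^\top \bar v_n\Bigr).
\]
The inner term is exactly one step of the reverse sweep. Take the cotangent \(\bar v_n\) at the new node, apply the local VJP (which equals the adjoint of \(D\llbracket\tau\rrbracket\) by the local hypothesis bundled in \ttbr{GraphFDeriv\allowbreak Correct}), and scatter-add it into the parent slots; that scatter-add is \(\pi^\top\). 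The outer term is the induction hypothesis applied to \(G'\) with this updated cotangent. Accumulation corresponds precisely to the adjoint of the duplicating map \(\langle\mathrm{id},\cdot\rangle\), so fan-out is handled automatically. No separate argument is needed for shared nodes.

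I expect the main obstacle to be bookkeeping rather than analysis. The crux is matching the executable \(\mathrm{backpropVec}\), which processes nodes in reverse order with mutable-style accumulators over heterogeneous shape-indexed tensors, with the adjoint of a linear map on the product space of all node values. I would first prove a lemma that the scatter-add of a gathered tuple equals \(\pi^\top\) for the projection \(\pi\), via inner-product identities \(\langle \pi u, w\rangle = \langle u, \pi^\top w\rangle\) on flattened vectors. I would then check that the backprop recursion unfolds one node per step in the same order as the evaluator's recursion. Working with flattened \(\mathrm{Vec}\) representations (hence \ttbr{backpropVec}) keeps the adjoint a plain matrix transpose.

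For the pointwise variant \ttbr{GraphFDeriv\allowbreak CorrectAt}, the same induction goes through by chain rule at a point (HasFDerivAt composition). The side conditions must be threaded through as differentiability of each primitive at the actual values \(\pi\Phi_{G'}(x)\).
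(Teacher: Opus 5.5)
Your proof is correct, but it is organized differently from the paper's.

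\textbf{The paper's route.} The paper argues in two stages.
\begin{itemize}
\item Stage~1 is a scalar-generic algebraic law, Theorem~\ref{thm:alg-adjointness}: $\langle \mathrm{JVP}_G(x;\delta x),\bar y\rangle = \langle \delta x,\mathrm{VJP}_G(x;\bar y)\rangle$ over any \texttt{Context}. It is derived from local JVP/VJP adjointness and bilinearity of the context dot product, and this is where fan-out accumulation is justified.
\item Stage~2 specializes to $\mathbb{R}$. Node-level calculus facts plus the chain rule along the SSA order identify the graph JVP with $D\,\mathsf{EvalVec}_G(x)$ applied to the tangent. The tensor--vector inner-product bridge (Theorem~\ref{thm:tensor-vector-bridge}) then turns the Stage~1 identity into the transpose statement.
\end{itemize}

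\textbf{Your route.} You never introduce forward mode. A single induction on a strengthened all-nodes invariant proves the transpose statement directly. It uses $(AB)^\top = B^\top A^\top$ on the factorization $\Phi_G = \langle \mathrm{id},\llbracket\tau\rrbracket\circ\pi\rangle\circ\Phi_{G'}$. Accumulation appears as the adjoint of the duplicating map, which is the same bilinearity fact seen from the linear-map side.

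\textbf{What each buys.} Your route is shorter and needs the local hypothesis only in the form $\mathrm{VJP}_\tau = (D\llbracket\tau\rrbracket)^\top$. The paper's factorization buys an adjointness law valid for every scalar backend, not just $\mathbb{R}$. The paper reuses that law to justify its deterministic kink conventions algebraically, where no Fr\'echet derivative exists. It also confines the analysis to the single step that the JVP computes the derivative.

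\textbf{One bookkeeping point.} The chain rule at your inductive step needs $\Phi_{G'}$ to be differentiable at $x$. So the invariant should carry \texttt{HasFDerivAt} for $\Phi_G$ at $x$ with an explicit linear map $L_G$, together with $\mathrm{backprop} = L_G^\top$. An equation about $D\Phi_G$ alone does not supply that differentiability, especially since mathlib's \texttt{fderiv} silently returns $0$ at non-differentiable points.
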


For classification losses and embedding-style models, the runtime separates differentiable tensor values from non-differentiable integer data. Labels, token ids, and row indices are provided through a separate index channel. Gradients flow through floating-point tensor values, such as an embedding table, but not through the integer selectors:
\[
  W \mapsto W[\mathrm{idx}]
  \quad\text{is differentiable in }W,
  \qquad
  \mathrm{idx}\text{ is a read-only selector.}
\]
This avoids mixed-dtype differentiable graphs while still supporting common ML patterns such as cross-entropy with integer labels and embedding lookup.

\textbf{CUDA and Lean FFI.}
\TorchLean{} also provides an optional native runtime path through Lean's foreign-function interface. In the default build, CPU implementations and stubs keep the repository portable. With CUDA enabled, Lean external declarations call C/CUDA wrappers that allocate or receive device buffers, dispatch kernels, and return runtime handles or copied tensors to Lean. The CUDA surface includes dense matrix operations, cuBLAS-backed matmuls and batched matmuls, deterministic reductions, broadcasts, views, reshapes, convolution/pooling and transposed-convolution kernels, FFT/spectral kernels for FNO-style examples, softmax/log-softmax, normalization kernels, attention-oriented and fused-attention kernels, gather/scatter utilities, positional/RoPE helpers, and selective-scan-style sequence kernels.

A typical boundary has the following shape: Lean exposes a typed wrapper, while the implementation lives in C/CUDA. The CUDA path is exposed through Lean \texttt{@[extern]} declarations; the detailed FFI surface and native wrapper contracts are described in Appendix~\ref{app:trust}. A representative boundary is:

\begin{lstlisting}[style=leanBox,
caption={Representative CUDA runtime boundary through Lean FFI.},
label={lst:cuda-ffi-boundary}]
@[extern "torchlean_cuda_buffer_bmm"]
opaque bmm
  (A B : Buffer) (batch m n p : UInt32) : Buffer

@[extern "torchlean_cuda_buffer_flash_attention_fwd"]
opaque flashAttentionFwd
  (Q K V mask : Buffer)
  (hasMask batch n d : UInt32) (scale : Float) : Buffer
\end{lstlisting}
The CUDA path is deliberately not hidden inside the logic. Native kernels are not Lean theorems and are not part of Lean's trusted kernel. Each native call is attached to a formally specified operator tag \(\tau\), whose meaning is already given by the shared semantics. The intended conformance statement is
$
  \mathrm{native}_{\tau}(x)
  \approx
  \llbracket \tau \rrbracket_{\texttt{Float32}}(x),
$
under explicit preconditions about shape, dtype, layout, device memory, aliasing, rounding mode, reduction order, and determinism. When this agreement is tested rather than proved, it is recorded as a runtime conformance assumption. Thus CUDA provides an execution backend, not a second semantics.

This distinction matters for kernels whose hardware behavior may depend on scheduling or library choices. For example, parallel reductions and some pooling/attention backward passes may use nondeterministic accumulation orders; cuBLAS and custom CUDA kernels may select implementation variants; and fused attention kernels may change memory layout and evaluation schedule. In \TorchLean{}, the theorem-facing claim is stated against the graph/operator semantics, while the CUDA implementation is required to refine that semantics under a declared deployment configuration.

\textbf{Interop as producer/checker plumbing.}
Runtime support also includes PyTorch state-dict import/export, PyTorch graph-capture import into the \texttt{torchlean.ir.v1} dialect, IR-to-PyTorch code generation, JSON artifact loaders, Julia subprocess wrappers for small certificate producers, a Gymnasium bridge for RL rollouts, PPO-style rollout helpers, and an Arb/FLINT oracle path for validated numeric enclosures. These integrations are not hidden inside theorem statements. External tools produce weights, traces, rollouts, candidate bounds, or enclosures; Lean either checks the property-bearing artifact or records a named oracle/conformance assumption.

The runtime layer therefore has two roles. It makes \TorchLean{} usable as an ML programming environment, and it preserves the semantic path from executable artifacts to checked claims. The price is that high performance backends and external producers remain explicit boundaries. This is intentional: the paper's formal claims are about the Lean graph semantics and the artifacts Lean checks, not about unverified native code by default.

\subsection{Floating Point Semantics}
\label{sec:method-floats}

Many verification claims hinge on subtle numerical behavior such as rounding, overflow/underflow, NaN/Inf propagation, signed zeros, non-associativity, mixed precision, and library-level conventions for operations such as \texttt{min}, \texttt{max}, pooling, reductions, and transcendental functions. IEEE~754 standardizes floating-point formats, rounding rules, and exceptional values \cite{ieee754_2019}. In Lean, however, the built-in runtime floating-point types are \emph{opaque to the kernel}: floating-point values are not encoded in the logic, so the kernel cannot compute with or reason about them without additional assumptions. Consequently, \TorchLean{} separates fast execution from proved semantics and makes the trust boundary explicit.

For \textbf{explicit numeric semantics}, \TorchLean{} makes numerical assumptions first-class by instantiating the same model over multiple scalar domains~$\alpha$, each serving a distinct role. We use $\alpha=\mathbb{R}$ for clean reference semantics in analytic reasoning and verified differentiation; enclosure domains such as $\alpha=\texttt{Interval}$ and affine domains for sound region-wise bounds in verification pipelines; and explicit finite-precision semantics for execution. For executable Float32, we provide \texttt{IEEE32Exec}, a Lean-defined bit-level model of IEEE-754 binary32, including signed zeros, subnormals, NaNs/Infs, and rounding behavior. In parallel, we provide proof level rounding-on-$\mathbb{R}$ models for compositional error envelopes, where each primitive is specified as ``compute in $\mathbb{R}$ then round'' with lemmas that bound and compose rounding error through whole graphs, in the spirit of verified floating-point libraries such as Flocq~\citep{boldo2011flocq}.

Lean's runtime \texttt{Float} and \texttt{Float32}, CUDA kernels, cuBLAS/cuFFT, and vendor libraries remain available for fast execution, but they are treated as explicitly trusted or validated backends due to kernel opacity and native runtime boundaries. In addition, \TorchLean{} builds an interval/enclosure layer spanning both proof level and executable semantics. We implement Float32 endpoint interval arithmetic on top of \texttt{IEEE32Exec} and prove operation-level soundness theorems that computed endpoint boxes conservatively enclose the corresponding real or extended-real interpretations. For rigorous transcendental bounds, we optionally integrate Arb/FLINT via an explicit oracle boundary, using it as a certificate/enclosure generator while keeping the trusted computing base explicit.

\begin{table}[t]
  \centering
  \vspace{-0.8cm}
\caption{Trust and scalar semantics in \TorchLean{}, including explicit native runtime and validated-numerics boundaries.}
  \label{tab:trust-semantics}
  \setlength{\tabcolsep}{3pt}
  \renewcommand{\arraystretch}{1.05}
  \begin{tabular}{@{}p{0.24\linewidth}p{0.32\linewidth}p{0.40\linewidth}@{}}
    \toprule
    \textbf{Mode} & \textbf{Purpose} & \textbf{Semantics and trust} \\
    \midrule
    $\mathbb{R}$ & Proofs, reference Autograd & Exact real arithmetic; reference semantics for theorem statements. \\
    Interval / affine & IBP, CROWN/LiRPA bounds & Sound enclosure domains with proved transfer rules where available. \\
    FP32\,/\,NF & Rounding-aware theorems & Round-on-$\mathbb{R}$ proof model with compositional error envelopes. \\
    \texttt{IEEE32Exec} & Executable Float32 model & Lean-defined binary32 semantics for core ops; hardware matching is separate. \\
    Arb/FLINT oracle & Rigorous transcendentals & External validated ball/interval enclosures; explicit non-kernel oracle boundary. \\
    Lean \texttt{Float32} & Fast execution & Runtime implementation opaque to the kernel; treated as a deployment assumption or validated backend. \\
    CUDA/native FFI & Accelerated examples & Optional Lean FFI-backed execution for selected kernels; governed by native primitive agreement contracts. \\
    \bottomrule
  \end{tabular}
  \vspace{-0.7cm}
\end{table}

\textbf{FP32/NF $\leftrightarrow$ \texttt{IEEE32Exec}: internal refinement.}
For the IEEE-754 core arithmetic implemented in \texttt{IEEE32Exec}, we prove an \emph{internal} refinement to the FP32 round-on-$\mathbb{R}$ model on finite executions, excluding NaN/Inf and overflow. We establish per-operator theorems of the form
$
  \texttt{toReal}(\mathrm{op}(x,y))
  =
  \mathrm{fp32Round}(\cdots)
$
and lift them to a compositional bridge showing that real-valued evaluation of \texttt{IEEE32Exec} expressions agrees with the corresponding FP32/NF specification on the finite path.

\subsection{Verification layer}
\label{sec:method-verification}

\textbf{Verification is a statement about semantics.}
The verification layer expresses goals as Lean theorems about the denotation of a compiled graph $\llbracket G \rrbracket$: robustness margins, output bounds, invariance and safety constraints, Lyapunov or barrier inequalities, PINN residual bounds, ODE corridor conditions, spline certificate obligations, and dataset-backed certified-accuracy claims. The verifier is not the claim. It is a mechanism for producing intermediate bounds, relaxations, witnesses, or constraints which are then checked and used to discharge a semantic property of $\llbracket G \rrbracket$. All enclosure theorems are stated against a specified execution semantics; relating results to hardware \texttt{Float32} or CUDA execution is an explicit refinement or validation assumption at the deployment boundary.

\textbf{Native bound propagation over the shared IR.}
Because the computation-graph IR is operator-tagged and typed, bound propagation operates on the same object used by execution, eliminating any separate export semantics to trust. Formally, bounds live in an abstract domain $\mathcal{D}$ with a concretization map $\gamma$ to sets of concrete values. A local abstract transformer $\widehat{\tau}$ for an operator $\tau$ is sound when
\[
  x_i \in \gamma(a_i)\ \text{for all } i
  \quad\Longrightarrow\quad
  \llbracket \tau \rrbracket(x_1,\ldots,x_k)
  \in
  \gamma\bigl(\widehat{\tau}(a_1,\ldots,a_k)\bigr).
\]
Soundness for a full graph follows by induction over the same topological node order that defines $\llbracket G \rrbracket$.

We include a sound core of interval bound propagation (IBP) \cite{gowal2018ibp}, which propagates node wise enclosures $[l_i,u_i]$ forward through the graph using per-operator transfer rules stated against the same operator denotations that define $\llbracket G \rrbracket$. Building on this, \TorchLean{} implements CROWN/LiRPA style affine propagation \cite{zhang2018crown,xu2020automatic}, including forward affine propagation and objective-dependent backward components. Affine relaxations capture correlations that intervals miss and typically yield substantially tighter output bounds through compositions of linear layers and monotone activations. Both IBP and affine propagation are anchored to the same IR denotation, rather than to an external verifier-side model.

\textbf{Certificates and a reduced trusted computing base.}
When stronger tightness is required than the native engines provide, \TorchLean{} adopts a certificate/checker architecture: external solvers act as producers of artifacts, while Lean checks the finite data needed to conclude a theorem about $\llbracket G \rrbracket$ \cite{necula1997pcc}. The checker validates shapes, operator tags, topological order, parent bounds, local inequalities, phase constraints, and final property obligations. The intended theorem shape is
\[
  \checkCert(G,\Phi,C)=\texttt{true}
  \quad\Longrightarrow\quad
  \Phi(\llbracket G \rrbracket).
\]
In this style, we support an $\alpha/\beta$-CROWN certificate dialect in which the certificate supplies IBP pre-activation boxes, per-node affine bounds, $\alpha$ parameters for unstable-ReLU lower relaxations, and optional $\beta$ phase vectors encoding active/inactive constraints consistent with the IBP intervals \cite{xu2021fast,wang2021betacrown}. The checker replays the same per-node step semantics in Lean and accepts only if the provided bounds match the recomputed or checked obligations under the declared canonicalization policy.

\textbf{Broader verification workflows.}
The current verification layer also includes graph-CROWN/LiRPA entry points, executable certificate checkers, robustness workflows, VNN-COMP style ONNX/VNN-LIB slices, PINN graph builders and derivative-residual bound helpers, ODE corridor certificates, spline and piecewise-polynomial certificate checks, 3D geometry certificates, Lyapunov/CROWN oracle-backed workflows, and regression tests for graph rewrite or backend drift. These workflows differ in how much is proved internally and how much is delegated to external producers, but they follow the same discipline: each accepted claim is tied to a graph, tensor, or certificate semantics stated in Lean.

\begin{figure*}[!t]
\centering
\begin{tikzpicture}[
    >=Latex,
    node distance=5mm and 18mm,
    every node/.style={font=\scriptsize},
    scale=0.96,
    transform shape,
    module/.style={
        draw,
        rounded corners,
        align=center,
        inner sep=6pt,
        minimum width=3.1cm,
        minimum height=0.92cm
    },
    core/.style={
        draw,
        rounded corners,
        align=center,
        inner sep=4pt,
        fill=blue!10,
        minimum width=3.1cm,
        minimum height=0.72cm
    },
    block/.style={
        draw,
        rounded corners,
        align=center,
        inner sep=4pt,
        minimum width=3.1cm,
        minimum height=0.72cm
    },
    layer/.style={block, fill=green!10},
    model/.style={block, fill=yellow!10},
    runtime/.style={block, fill=orange!10},
    verify/.style={block, fill=red!10},
    source/.style={block, fill=cyan!10},
    ffi/.style={block, fill=orange!18},
    ir/.style={
        draw,
        rounded corners,
        align=center,
        inner sep=5pt,
        fill=purple!10,
        dashed,
        minimum height=0.82cm
    },
    lab/.style={
        font=\tiny,
        fill=white,
        inner sep=1pt
    }
]

\node (a1) {$\alpha=\mathbb{R}$ (proofs)};
\node[right=of a1] (a2)
  {$\alpha=\texttt{Rat}/\texttt{Float32}/\texttt{IEEEExec}$ (execution)};
\node[right=of a2] (a3)
  {$\alpha=\texttt{Interval}/\texttt{FP32}$ (bounds)};

\node[module, fill=blue!20, below=2mm of a1] (spec)
  {\textbf{Specification}\\\textit{mathematical semantics}};
\node[module, fill=orange!20, below=2mm of a2] (rt)
  {\textbf{Runtime}\\\textit{execution and training}};
\node[module, fill=red!20, below=2mm of a3] (vf)
  {\textbf{Verification}\\\textit{bounds and certificates}};

\node[core, below=4mm of spec] (spec_core)
  {Core\\Tensors, shapes\\Ops over $\alpha$};

\node[layer, below=3mm of spec_core] (spec_layers)
  {Layers\\Linear, conv, attention\\Normalization, etc.};

\node[model, below=3mm of spec_layers] (spec_models)
  {Models\\MLP, CNN, Transformer\\RNN/LSTM/GRU, SSM, diffusion};

\draw[->] (spec) -- (spec_core);
\draw[->] (spec_core) -- (spec_layers);
\draw[->] (spec_layers) -- (spec_models);

\node[runtime, below=4mm of rt] (autograd)
  {Autograd\\Reverse-mode AD\\Tape and graph};

\node[runtime, below=3mm of autograd] (torchlean)
  {\textbf{\TorchLean}\\PyTorch style API\\Eager and compiled};

\node[runtime, below=3mm of torchlean] (optim)
  {Training and workflows\\Optimizers, RL, SSL\\Sampling, import/export};

\draw[->] (rt) -- (autograd);
\draw[->] (autograd) -- (torchlean);
\draw[->] (torchlean) -- (optim);

\node[verify, below=4mm of vf] (ibp)
  {IBP\\Interval bounds\\Sound propagation};

\node[verify, below=3mm of ibp] (crown)
  {\textbf{CROWN/LiRPA}\\Affine relaxations\\Dual certificates};

\node[verify, below=3mm of crown] (cert)
  {Certificate checking\\Native checkers + CLI\\external artifacts optional};

\draw[->] (vf) -- (ibp);
\draw[->] (ibp) -- (crown);
\draw[->] (crown) -- (cert);

\node[source, below left=7mm and 6mm of optim] (lowering)
  {Optional import / lowering\\Python, Julia, PyTorch\\ONNX, FX};

\node[ffi, below right=7mm and 6mm of optim] (native)
  {Optional execution backend\\CUDA, cuBLAS, cuFFT\\custom kernels};

\node[
    draw,
    dashed,
    rounded corners,
    fit=(lowering)(native),
    inner sep=4pt,
    label={[font=\tiny]above:optional interfaces}
] (optbox) {};

\node[ir, below=18mm of optim, minimum width=0.82\textwidth] (irnode)
  {\textbf{Operator tagged shared IR}\\
   \textit{SSA/DAG computation graph}\\
   \textit{single semantic target for execution, verification, and lowering}};

\draw[->, dashed]
  (spec_models.south) to[out=-95, in=115]
  node[lab, pos=0.34] {compile}
  (irnode.west);

\draw[->, dashed]
  (torchlean.south) to[out=-88, in=90]
  node[lab, pos=0.55] {build / compile}
  (irnode.north);

\draw[->, dashed]
  (lowering.south) to[out=-90, in=120]
  node[lab, pos=0.52] {lower / import}
  ([xshift=-26mm]irnode.north);

\draw[->, dashed]
  (cert.south) to[out=-95, in=0]
  node[lab, pos=0.30] {verify / check}
  (irnode.east);

\draw[->, dashed]
  ([xshift=28mm]irnode.north) to[out=60, in=-90]
  node[lab, pos=0.55] {execute via FFI}
  (native.south);

\end{tikzpicture}
\caption{
Comprehensive architecture of \TorchLean{}. The specification, runtime, and verification layers share a single operator-tagged SSA/DAG computation-graph IR as the semantic target. Models are defined once, executed or trained by the runtime, and verified through CROWN bounds and certificate checking against the same graph semantics.
}
\label{fig:torchlean_architecture}
\vspace{-0.8em}
\end{figure*}
\vspace{-0.3cm}
\section{Results}
\label{sec:results}
We evaluate \TorchLean{} as a semantic ML systems stack. A model definition can be executed for training or inference, lowered to a typed operator-tagged graph, analyzed by native bound and differentiation passes, and validated by certificate checkers against one mathematical denotation. The goal is not to outperform specialized training frameworks or solver-optimized verifiers, but to make the object being trained, differentiated, bounded, checked, and accelerated semantically visible. The results are organized around three components. First, we show that the framework supports recognizable modern ML workflows rather than only toy verifier networks. Second, we keep the VNN-COMP style mini-suite and the original three case studies: certified robustness, neural-controller verification, and PINN residual bounds. Third, we add semantic bug examples showing how typed shapes, masking contracts, and explicit numerical semantics catch failures that can otherwise survive ordinary execution or export.

\begin{figure*}[t]
  \centering
  \includegraphics[width=0.99\textwidth]{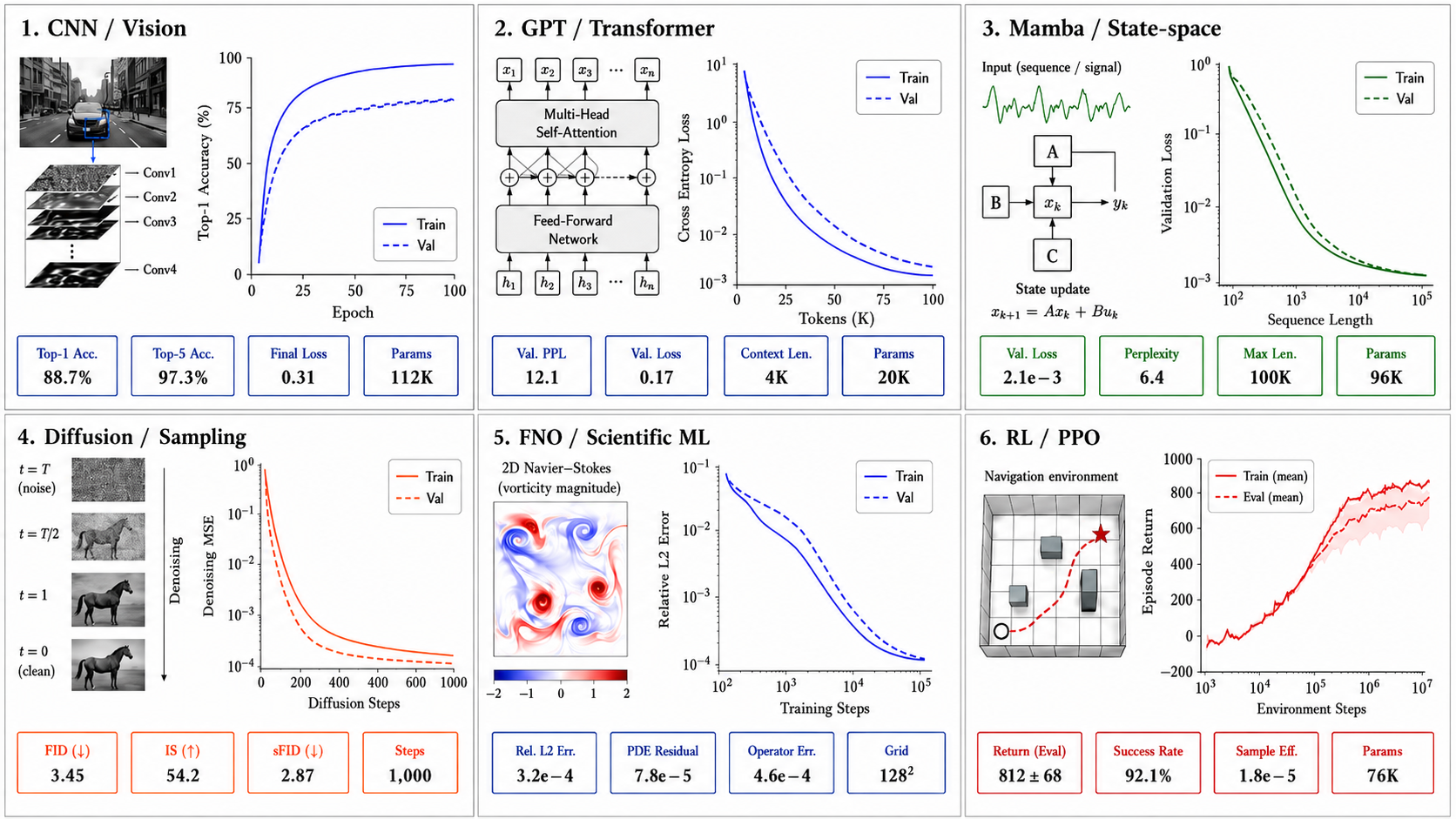}
  \caption{\textbf{Representative model families exercised in \TorchLean{}.}
  The examples cover vision, transformer-style language models, Mamba/state-space sequence models, diffusion/sampling, Fourier-neural-operator/scientific ML workflows, and reinforcement learning. The plotted curves illustrate executable training and evaluation workflows across these model families. The formal role of the examples is that each can be connected to the typed tensor, runtime, graph-lowering, finite-precision, and checking infrastructure described in the methodology.}
  \label{fig:architecture-results-grid}
  \vspace{-0.5cm}
\end{figure*}
\textbf{Model families and modern workflow coverage.}
Figure~\ref{fig:architecture-results-grid} summarizes the current executable surface. The examples exercise ordinary ML structure: minibatch loaders, multiple epochs, optimizer state, saved parameters, loss curves, sampling loops, rollout tensors, imported weights, and generated certificates. They also exercise the newer parts of the codebase: CUDA/native FFI paths for dense/matmul and batched-matmul kernels, reductions, broadcasts, views, convolution/pooling and transposed-convolution kernels, FFT/spectral operators, softmax/log-softmax, normalization, attention/fused attention, gather/scatter, positional/RoPE helpers, selective-scan-style sequence kernels, and shape operations; PyTorch/ONNX/VNN-LIB interop; Arb/FLINT validated-numerics calls; Gymnasium/PPO-style RL bridges; and semantic regression tests.

These examples are not leaderboard training results; they show that verification-oriented semantics can be attached to workflows that look like recognizable ML programs. A CNN or FNO example stresses tensor shapes, convolutional/spectral operators, finite-precision execution, and optional CUDA kernels. A GPT-style model stresses tokenization, causal masking, generation, and attention semantics. A Mamba/state-space example stresses recurrence, scan-like execution, and non-anticipation. Diffusion and self-supervised examples stress masking, views, schedules, latent variables, and reconstruction or prediction losses. RL examples stress rollouts, returns, advantages, policy/value artifacts, and simulator boundaries.

\begin{wraptable}[10]{r}{0.44\linewidth}
\vspace{-1.2em}
\centering
\caption{VNN-COMP style MNIST-FC slice. Checked rows are Lean-verified; CUDA is an untrusted GPU producer.}
\vspace{-0.6em}
\scriptsize
\setlength{\tabcolsep}{3.5pt}
\renewcommand{\arraystretch}{1.03}
\begin{tabular}{@{}llcc@{}}
\toprule
\textbf{Path} & \textbf{Method} & \textbf{Safe/$n$} & \textbf{Time} \\
\midrule
Lean & IBP & 0/30 & 6.05s \\
Lean & CROWN-Obj & 6/30 & 22.60s \\
Lean & CROWN+$\alpha$ & 6/30 & 14.21s \\
\midrule
CPU producer & $\alpha$-CROWN & 13/30 & 8.39s \\
CUDA producer & $\alpha$-CROWN & 13/30 & 7.67s \\
\bottomrule
\label{tab:vnncomp-mini}
\end{tabular}
\vspace{-1.2em}
\end{wraptable}

For these model families, the meaningful result is not only that an example runs, but that the surrounding development exposes a specification surface. FlashAttention style blocked or fused implementations are related to this denotation at the specification level; CUDA kernels remain execution backends behind Lean FFI boundaries. State-space models are specified by finite recurrences and scan contracts, diffusion models by noising and denoising transitions, RL by rollout algebra, returns, advantages, finite MDP views, Markov-kernel MDP semantics, and Bellman-style operators, and self-supervised learning by finite mask/view/target objectives.

\vspace{-0.37cm}
\paragraph{Verification artifacts and producer/checker boundaries.}
\label{sec:results-verification}
The verification stack is evaluated as a checker-oriented system. Native IBP and CROWN/LiRPA style passes operate over the shared IR. External solvers, Python/Julia scripts, Arb/FLINT, CUDA kernels, and imported ONNX or VNN-LIB files are treated as producers or runtime boundaries. Lean checks the shape, graph, interval, affine, and schema obligations that are sufficient for the theorem being claimed, or records a named assumption when the producer itself is outside the kernel. Thus a successful verification result is a statement about the Lean-side graph semantics, not an implicit claim that every external runtime or exporter has been verified. We adopt the VNN-COMP convention that each benchmark instance is an ONNX network paired with a VNN-LIB property specification \cite{vnncompSite,vnnlibSite}. A lightweight Python export step converts ONNX/VNN-LIB into compact JSON bundles consumed by a Lean runner, which checks a \emph{sufficient UNSAT} condition by replaying IBP/CROWN style bounds against the shared IR semantics (fast runtime \texttt{Float}). Table~\ref{tab:vnncomp-mini} reports both fully checked TorchLean results and CUDA-backed producer results on the same MNIST-FC slice. The checked rows are Lean-verified claims about the shared IR. The CUDA row uses an $\alpha$-CROWN GPU producer on an NVIDIA A100 to generate stronger candidate bounds; those GPU results become trusted TorchLean claims only when the exported artifacts are replayed by the Lean checker.

\label{sec:demos}
\textbf{Certified Robustness.}
We certify an $\ell_\infty$ margin condition for a digits linear classifier (\texttt{sklearn} digits, 8$\times$8$\rightarrow$64 features; 64$\rightarrow$10; $\varepsilon=0.02$). The model is nominally correct on $349/360$ test inputs and certified robust on $318/360$ by a Lean checker that replays the exported bound artifact and verifies the standard logit-margin inequality (Appendix~\ref{app:verification}). The checker is lightweight (0.032\,ms average, 0.057\,ms maximum per example), but the point is semantic rather than competitive: robustness is attached to the checked graph, not to an unchecked training script or exporter.

\textbf{Neural controller.}
We consider a two stage controller-verification workflow: external training/search proposes a neural feedback controller $u(x)$ together with a Lyapunov candidate $V(x)$, and Lean then certifies region-based safety/stability by checking CROWN/LiRPA enclosures for Lyapunov inequalities over input regions (Appendix~\ref{app:verification}). This setup mirrors recent \emph{two stage} stabilizing-controller pipelines: Stage~1 learns an initial region of attraction using Zubov-inspired boundary sampling, and Stage~2 fixes this region and iteratively refines the networks to eliminate counterexamples (CEGIS) discovered within it \cite{li2025two}. We bound $V(x)$ and $\dot V(x)=\nabla V(x)\cdot f(x,u(x))$ on a region, using verified autograd to compute $\nabla V$, and discharge the resulting Lyapunov/safety conditions as Lean-checked statements about the shared semantic model. Figure~\ref{fig:controller-three-settings} compares three execution settings for this workflow: (i) Python-only (Stage~1+2 in PyTorch Float32, with optional Lean checking), (ii) all-\TorchLean{}, and (iii) hybrid (Stage~1 in PyTorch, Stage~2 and post-checking in \TorchLean{}). Holding Stage~1 weights fixed and using a common Stage~2 baseline (width 100; 10 PGD candidates; 1 PGD step), Python Stage~2 takes $\approx 0.015$s and yields 9/10 positive-loss candidates, while \TorchLean{} Stage~2 under explicit IEEE-754 Float32 semantics (\texttt{IEEE32Exec}) takes $\approx 1212$s and yields 7/10 positives; \TorchLean{} additionally computes a native CROWN enclosure for the scalar loss over a small input box (e.g., an upper bound $\approx 0.1869$).

\begin{figure}[!t]
    \centering
    \includegraphics[width=0.65\linewidth]{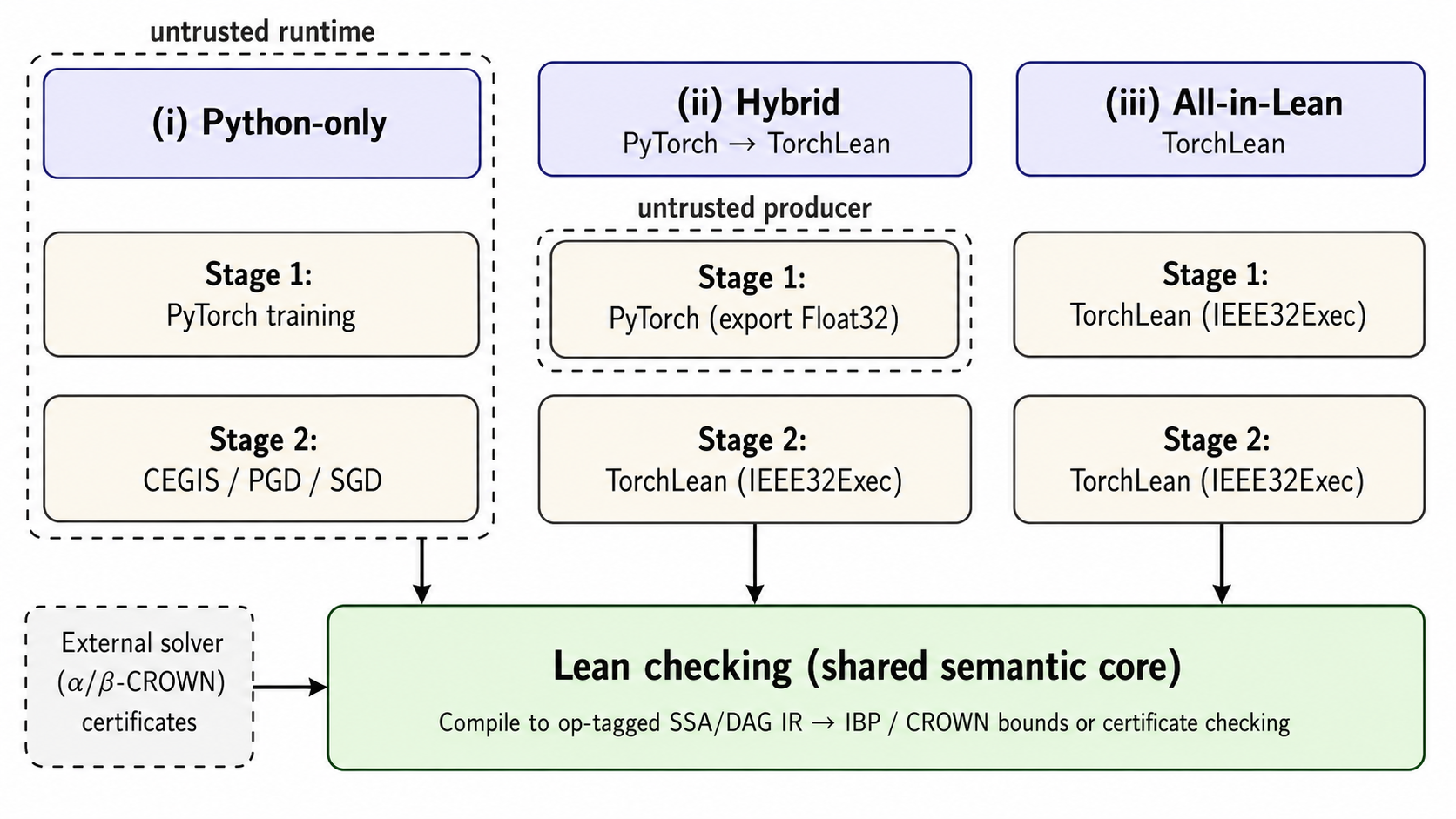}
\caption{Neural-controller workflow in three execution settings. Certification claims are checked against the shared operator-tagged SSA/DAG IR semantics in Lean; external optimizers and runtime systems act as untrusted producers.} \label{fig:controller-three-settings}
\vspace{-1.0cm}
\end{figure}

\textbf{PINN residual bounds.}
We demonstrate full verification of physics-informed neural networks trained to satisfy PDEs. The workflow begins with training a PINN model in Python (e.g., viscous Burgers equation $u_t + u \cdot u_x - \nu u_{xx} = 0$) using PyTorch autograd to compute derivatives for the PDE residual loss. After training, we export weights to JSON and load them into Lean. For verification, we compute bounds on $u(x)$ via IBP/CROWN, and compute interval enclosures for $u'(x)$ and $u''(x)$ via dedicated first/second-derivative bound-propagation passes on the same operator-tagged graph (covering the smooth ops used in the PINN demos, e.g.\ \texttt{tanh} and linear layers), rather than by recursively differentiating the backward graph. These bounds are then combined to certify that the PDE residual $|\mathcal{R}(u_\theta)(x)|$ is bounded within tolerance $\varepsilon$ at verification points (Appendix~\ref{app:verification}).

\textbf{Semantic bug examples.}
Figure~\ref{fig:bug-examples} shows three representative semantic boundary failures. The first is a shape/broadcasting bug: a bias $b\in\mathbb{R}^{C}$ should be applied to logits $X\in\mathbb{R}^{B\times C}$ as $Y_{ij}=X_{ij}+b_j$, but a wrong broadcast axis can silently change the function while still producing a plausible tensor. The second is a temporal bug in sequence models: an incorrect causal mask or stale KV-cache entry can allow future-token information to influence the current output. The third is a numerical bug: a naive log-sum-exp or log-softmax computation can overflow under Float32, while the stable formulation remains finite. These examples are small by design. They are not meant to be benchmark wins; they isolate the kind of error that makes verification pipelines fragile. In each case, the model may execute, export, or even produce a plausible downstream artifact, while the intended semantics has changed.
\begin{figure*}[!t]
  \centering
  \includegraphics[width=0.99\textwidth]{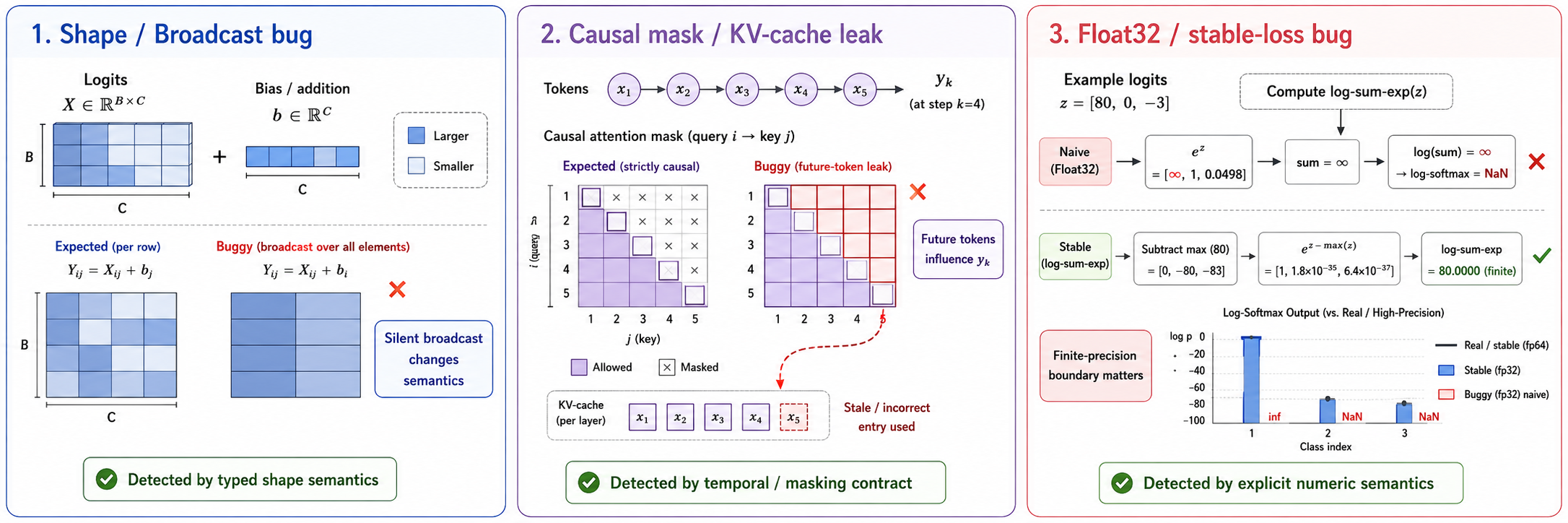}
  \caption{\textbf{Representative semantic boundary bugs.}
  Left: typed shape semantics distinguishes intended row-wise bias addition from an unintended broadcast. Middle: temporal and masking contracts expose causal-mask or KV-cache leaks in sequence models. Right: explicit finite-precision semantics separates a stable log-sum-exp/log-softmax computation from a naive Float32 implementation that can overflow or produce NaNs.}
  \label{fig:bug-examples}
  \vspace{-0.4cm}
\end{figure*}

\textbf{Numerical semantics stress tests.}
Numerical stress tests explain why we make scalar semantics explicit rather than treating floating point as an implementation detail. Endpoint evaluation alone is not a reliable enclosure mechanism for transcendentals, whereas the Arb-backed pipeline provides rigorous real enclosures (Appendix~\ref{app:trust}). For core arithmetic, directed rounding matters: a ties-to-even addition can be enclosed by \texttt{IEEE32Exec} directed endpoints but collapse under naive runtime \texttt{Float32}. Signed-zero guards can also force principled widening, for example when an interval denominator contains both real zero and signed-zero floating encodings. 

\vspace{-0.5cm}
\section{Related work}
\label{sec:related}
\vspace{-0.3cm}
\textbf{Neural network verification (solver-based and bound-propagation).}
Neural network verification is central in safety and robustness analysis \cite{kaulen20256th,li2025two,jiabstract,chen2024verification}. Broadly, the literature spans (i) \emph{solver-based} methods such as Reluplex and Marabou~\citep{katz2017reluplex,katz2019marabou}, which encode verification problems as satisfiability/constraint queries, and (ii) \emph{abstract-interpretation / relaxation} methods such as IBP \cite{gowal2018ibp,moore2009introduction} and linear bound propagation (CROWN/DeepPoly and related abstractions) \cite{dvijotham2018dual,gehr2018ai2,raghunathan2018certified,singh2018fast,singh2019abstract,wang2018efficient,wong2018provable,zhang2018crown}. These tools often operate on exported artifacts (ONNX/TorchScript/custom IRs) and therefore inherit an additional trust boundary at the export/interpretation step. This pipeline is standardized and stress-tested in VNN-COMP, where instances are packaged as an ONNX network together with a VNN-LIB property specification \cite{brix2024vnncomp,vnncompSite,vnnlibSite}, making the export/interface boundary an explicit part of the evaluation regime.

\textbf{CROWN family optimizers and certificate checking.}
CROWN family methods derive output bounds by propagating sound linear relaxations of nonlinearities and viewing the result through a dual objective; \(\alpha\)-CROWN~\citep{xu2021fast} tightens bounds by optimizing relaxation parameters, and \(\beta\)-CROWN integrates splitting/branch-and-bound to further tighten bounds~\cite{wang2021betacrown}. \TorchLean{} implements a Lean based CROWN/LiRPA core over our shared operator-tagged IR semantics (proved-sound IBP, a basic forward affine pass, and an objective-dependent backward/dual pass). {Reimplementing the full \(\alpha/\beta\)-CROWN optimization stack inside Lean including the parameter-optimization heuristics and branch-and-bound search is ongoing work.} When strongest current tightness is needed, we instead treat external solvers as untrusted producers and check their exported bounds/certificates against the same IR semantics, keeping the trusted computing base to a small checker.

\textbf{Formalization in theorem provers and Lean infrastructure.}
ITPs have been used to formalize ML-relevant mathematics (e.g., generalization bounds \cite{Bagnall_Stewart_2019}, activation-function libraries \cite{aleksandrov2023formalizingpiecewiseaffineactivation}, and network translations \cite{10.1007/978-3-031-98208-8_12}). Lean's ecosystem (notably \texttt{mathlib} (\citealp{mathlib2020})) makes these developments practical, but full pipeline neural verification remains hard because it requires tensor representations that scale to modern architectures, a sound differentiation story, and numerical semantics that are explicit enough to connect proofs to execution. \texttt{TensorLib} provides a verified tensor library for Lean~\citep{tensorlib} and, as of this writing, is under active development.

\textbf{SciLean, verified floats, and complementary developments.}
SciLean is a Lean based library for scientific computing with array abstractions and automatic differentiation \cite{scilean}. \TorchLean{} targets a different point in the design space: typed tensor computation graphs with shape-indexed tensors, a graph-parametric reverse-mode correctness theorem for SSA/DAGs, and certificate-checked verification workflows. On numerics, verified floating-point frameworks such as Flocq~\citep{boldo2011flocq} motivate the separation we use between rounding models for proofs (FP32/NF) and an executable IEEE-style kernel (\texttt{IEEE32Exec}), with an explicit trust boundary to runtime floats. HopfieldNet formalizes Hopfield/Boltzmann-style energy arguments in Lean \cite{cipollina2025hopfield}. Our Hopfield material is positioned as a complementary case study within a shared tensor/graph semantics: we mechanize the standard energy-decrease and convergence-style results in the same framework; see Appendix~\ref{app:verification-schemas}.

\vspace{-0.3cm}
\section{Discussion}
\label{sec:discussion}
\vspace{-0.3cm}
\TorchLean{} is best understood as a semantics layer for neural network systems rather than as a replacement for existing training frameworks or specialized verifiers. The main design choice is to make the network definition the semantic reference point and to connect execution, differentiation, bound propagation, certificate checking, and external artifacts to a shared operator-tagged SSA/DAG IR. This addresses a common source of fragility in verified ML workflows: the artifact that is trained, exported, optimized, or executed is often not exactly the artifact that is analyzed. In \TorchLean{}, this boundary is made explicit. Imported ONNX/VNN-LIB files, PyTorch or Julia producers, Arb/FLINT enclosures, CROWN family certificates, Gymnasium rollouts, and CUDA/native kernels are treated as producers or runtime interfaces whose artifacts must either be checked in Lean or recorded as named assumptions. This separation is also important for modern ML workflows. Verification is no longer only about small feed-forward classifiers. Attention masks, FlashAttention style fused operators, state-space scans, diffusion schedules, self-supervised views, RL returns, and scientific ML residuals all carry mathematical contracts that can be broken by small implementation choices. \TorchLean{} makes these contracts visible as typed specifications. The current system therefore supports two complementary uses: proving semantic properties of supported graph fragments, and detecting semantic boundary failures such as shape/broadcasting mistakes, causal-mask leaks, unstable losses, export drift, and certificate inconsistencies.

The main limitations are coverage and conformance. The verified fragment is necessarily smaller than the full surface of PyTorch, CUDA, and modern ML libraries. Extending \TorchLean{} requires adding operator semantics, local derivative rules, abstract transformers, finite-precision models, and certificate checkers for new primitives. Native CUDA execution is useful for scale, but it remains outside Lean's kernel; connecting it more tightly to the formal semantics requires stronger conformance testing, verified wrappers, or eventually verified kernels for important primitives. Similarly, external certificate producers are useful for tightness and performance, but the formal guarantee comes only from what the Lean checker validates. These limitations are deliberate: the system favors explicit boundaries over implicit trust.

\vspace{-0.4cm}
\section{Conclusion}
\label{sec:limits-concl}
\vspace{-0.4cm}
\TorchLean{} advances a semantic approach to verified machine learning. Models are written as executable neural network programs, lowered to a shared operator-tagged SSA/DAG IR, and interpreted by execution, automatic differentiation, bound propagation, and certificate checking through the same formal semantics. This makes it possible to state machine-checked claims about robustness, control-oriented safety, scientific ML residuals, VNN-style imported artifacts, and semantic regression tests without relying on an informal export pipeline. The framework combines a PyTorch style Lean API, typed tensors and layers, verified graph-level differentiation, explicit floating-point semantics, native IBP/CROWN style verification, certificate checking, optional CUDA-backed execution through Lean FFI, and interop with PyTorch, Python, Julia, Arb/FLINT, Gymnasium, and CROWN family tools. The contribution is not that every external producer or runtime is verified inside Lean. Rather, \TorchLean{} makes clear which claims are checked by Lean, which artifacts are replayed or validated, and which runtime or oracle assumptions remain outside the kernel.

The broader value of this approach is that it gives multiple communities a shared semantic object to build on. Verification researchers can check certificates against executable models; ML systems researchers can study export, compilation, numerical, and backend boundaries with formal contracts; and scientific ML, control, and robotics researchers can connect learned components to residual, safety, or stability claims without losing the semantics of the implemented system. Verified ML also needs shared infrastructure in the spirit of what PyTorch provided for empirical ML: a substrate where models, tools, transformations, and artifacts can be composed. Lean has demonstrated this model through \texttt{mathlib} (\citealp{mathlib2020}), and emerging libraries such as CSLib (\citealp{barrett2026cslib}) point toward similar foundations for verified software. \TorchLean{} contributes to this ecosystem by connecting neural-network computation, formal semantics, and certificate based verification in one theorem-proving environment.

\bibliographystyle{plainnat}
\bibliography{references,huan}
\appendix

\clearpage

\section{Appendix Overview}
\label{app:overview}
\begin{tcolorbox}[
  enhanced,
  breakable,
  colback=gray!2,
  colframe=black!55,
  boxrule=0.45pt,
  arc=2pt,
  left=5pt,right=5pt,top=5pt,bottom=5pt,
  title=Appendix guide,
  colbacktitle=black!8,
  coltitle=black,
  fonttitle=\bfseries
]
\setlength{\tabcolsep}{4pt}
\renewcommand{\arraystretch}{1.10}
\begin{tabularx}{\linewidth}{@{}p{0.13\linewidth}p{0.31\linewidth}X@{}}
\toprule
\textbf{Part} & \textbf{Topic} & \textbf{What to look for} \\
\midrule
\hyperref[app:torchlean]{Appendix~\ref{app:torchlean}} & \TorchLean{} frontend and graph semantics & Program modes, module API, lowering to the shared IR, graph-level AD, node correctness, and the native runtime/FFI boundary. \\
\hyperref[app:modern-workflows]{Appendix~\ref{app:modern-workflows}} & Modern workflow semantics & Attention/FlashAttention contracts, RL/MDP semantics, diffusion and probability layers, SSL objectives, and model-family contracts. \\
\hyperref[app:trust]{Appendix~\ref{app:trust}} & Numerical semantics & IEEE-style execution, rounded-real models, interval/enclosure semantics, Arb/FLINT oracles, and hardware/runtime trust assumptions. \\
\hyperref[app:verification]{Appendix~\ref{app:verification}} & Verification and certificates & IBP, CROWN/LiRPA, certificate schemas, VNN-COMP style import, controller/PINN checks, and worked examples. \\
\hyperref[app:uat]{Appendix~\ref{app:uat}} & Approximation theorems & Universal approximation over reals and the Float32 execution direction. \\
\hyperref[app:expanded-discussion]{Appendix~\ref{app:expanded-discussion}} & Limitations and discussion & Coverage limits, runtime conformance gaps, and future extensions. \\
\bottomrule
\end{tabularx}
\end{tcolorbox}

\begin{tcolorbox}[
  enhanced,
  breakable,
  colback=blue!2,
  colframe=blue!45!black,
  boxrule=0.45pt,
  arc=2pt,
  left=5pt,right=5pt,top=4pt,bottom=4pt,
  title=Reading conventions,
  colbacktitle=blue!10,
  coltitle=black,
  fonttitle=\bfseries
]
\textbf{Proved} means a Lean theorem over the stated semantics. \textbf{Checked} means a Lean executable checker validates a finite artifact or recomputes a local obligation. \textbf{Boundary} means an external runtime, solver, simulator, or numerical producer is used through an explicit assumption or conformance contract rather than silently trusted.
\end{tcolorbox}

\section{\TorchLean{}: a PyTorch style front end with a single semantic target}
\label{app:torchlean}

\TorchLean{} is our user-facing interface for building and training neural networks. It provides a PyTorch-like programming model where users write models using familiar operations (linear layers, convolutions, activations), but with a crucial difference: the same model definition can execute in two modes that share a unified semantic foundation. This design eliminates the semantic gap between training code and verification analysis.

\subsection{Core design: unified semantics for execution and analysis}
\label{app:torchlean-core}

Our core design choice is to treat the computation graph as the semantic target for \emph{both} execution and reasoning. In \TorchLean{}, programs lower to a typed, operator-tagged graph IR: nodes are primitive operators (e.g., matrix multiplication or ReLU) and edges carry tensor values. We store graphs in SSA/DAG form: each intermediate value is defined exactly once (SSA), and the dataflow graph is acyclic (DAG). These constraints are standard compiler discipline, but they are also exactly what we want here: evaluation is deterministic by a topological order, and proofs can proceed by induction over that order while reusing the same operator semantics that the runtime executes.
\begin{table*}[t]
  \centering
  \caption{Verified scope in \TorchLean{} (summary). ``Proved'' means a Lean theorem over the stated semantics; ``Checked'' means a small executable validator that re-computes or validates constraints; ``Assumed/External'' marks trust boundaries or future work. Appendix~\ref{app:verification} also tabulates operator level verification coverage.}
  \label{tab:scope-at-a-glance}
  \scriptsize
  \setlength{\tabcolsep}{4pt}
  \renewcommand{\arraystretch}{1.12}
  \begin{tabular}{@{}p{0.18\linewidth}p{0.46\linewidth}p{0.14\linewidth}p{0.18\linewidth}@{}}
    \toprule
    \textbf{Component} & \textbf{What the paper claims (scope)} & \textbf{Status} & \textbf{Where} \\
    \midrule
    Shared IR semantics &
      Operator tagged SSA/DAG IR with a precise denotation reused by execution and verification. &
      Kernel-checked definitions &
      Sections~\ref{sec:method}, \ref{app:torchlean} \\

    Reverse-mode AD &
      Backpropagation equals the adjoint Fr\'echet derivative for well-typed SSA/DAG graphs over $\mathbb{R}$, assuming local per-op derivative correctness (with a pointwise variant for non-smooth ops). &
      Proved &
      Theorem~\ref{thm:rev-correct}, Appendix~\ref{app:torchlean} \\

    \TorchLean{} compilation &
      ``Compiled mode'' produces the verifier IR; full compiler correctness for arbitrary \TorchLean{} programs is not yet proved. &
      Proved (fragment) / Future (full) &
      Theorems~\ref{thm:forward-lowering-correct}--\ref{thm:eager-graph-correspondence} \\

    IBP / CROWN operators &
      Sound transfer rules and relaxations for a curated set of operators; additional ops use conservative fallbacks or are treated as unsupported in the strongest theorems. &
      Mixed (proved subset) &
      Appendix~\ref{app:verification} \\

    Certificate checking &
      Small Lean checkers validate certificate structure and the final enclosure constraints needed to discharge a theorem about $\llbracket G\rrbracket$. &
      Checked (plus proved core) &
      Section~\ref{sec:results-verification}, Appendix~\ref{app:verification} \\

    External solvers (e.g.\ $\alpha/\beta$-CROWN) &
      Used only as untrusted certificate producers; full optimizer internals are outside the checked certificate scope in this work. &
      Assumed/External &
      Section~\ref{sec:results-verification}, Appendix~\ref{app:verification} \\

    Float32 semantics &
      Executable IEEE-754 binary32 model (\texttt{IEEE32Exec}) and proof-level rounding models (FP32/NF), with an internal refinement on finite executions. &
      Proved (internal) &
      Section~\ref{sec:method-floats}, Appendix~\ref{app:trust} \\

    Hardware Float32 link &
      Connecting deployed \texttt{Float32} hardware/runtime behavior to \texttt{IEEE32Exec} is target-specific and not discharged in this work. &
      Assumed/External &
      Section~\ref{sec:limits-concl}, Appendix~\ref{app:nf-compose} \\
    \bottomrule
  \end{tabular}
\end{table*}

\begin{figure*}[t]
\centering
\begin{tikzpicture}[
  node distance=10mm and 6mm,
  op/.style={draw, rounded corners, align=center, inner sep=3pt, font=\tiny},
  value/.style={draw, circle, inner sep=2pt, font=\tiny},
  >=Latex
]
  \node[value] (x) {$x$};
  \node[value, below=of x] (w) {$W$};
  \node[value, below=of w] (b) {$b$};
  
  \node[op, right=of x] (mul) {MatMul};
  \node[op, right=of mul] (add) {Add};
  \node[op, right=of add] (relu) {ReLU};
  \node[value, right=of relu] (y) {$y$};
  
  \draw[->] (x) -- (mul);
  \draw[->] (w) -- (mul);
  \draw[->] (mul) -- node[above,font=\tiny] {$v_1$} (add);
  \draw[->] (b) -- (add);
  \draw[->] (add) -- node[above,font=\tiny] {$v_2$} (relu);
  \draw[->] (relu) -- (y);
  
  \draw[->, dashed, gray] (y) -- node[above,font=\tiny,sloped] {$\bar y$} (relu);
  \draw[->, dashed, gray] (relu) -- node[above,font=\tiny,sloped] {$\bar{v}_2$} (add);
  \draw[->, dashed, gray] (add) -- node[above,font=\tiny,sloped] {$\bar{v}_1$} (mul);
\end{tikzpicture}
\caption{Example SSA/DAG computation graph: $y = \text{ReLU}(Wx + b)$. Each intermediate value ($v_1$, $v_2$) is assigned once. Forward pass (solid arrows) evaluates nodes in topological order. Backward pass (dashed arrows) propagates gradients in reverse topological order, accumulating at each node.}
\label{fig:ssa-dag-example}
\end{figure*}

In \TorchLean{}, the SSA/DAG representation serves multiple purposes simultaneously:
\begin{itemize}
  \item \textbf{Execution:} The graph can be evaluated to compute forward values and gradients.
  \item \textbf{Formal reasoning:} The graph has a precise mathematical denotation $\llbracket G\rrbracket$ that we can state theorems about.
  \item \textbf{Verification:} Bound propagation algorithms operate directly on the graph structure.
\end{itemize}
This unified representation is what makes the system work: there is no ``export step'' that might introduce semantic drift between the model as written and the model as analyzed.

\paragraph{The invariant used by the rest of the section}
The important invariant is simple: every runtime artifact that enters a theorem has a graph node, a shape, and a denotation. If a node id is \(i\), then the value table produced by execution, the cotangent table produced by reverse mode, and the bound table produced by verification all refer to the same \(i\). This prevents a common failure mode in mixed systems: the proof talks about one tensor while the runtime or verifier silently talks about another. The invariant can be summarized as
\[
  \forall i < |G|,\qquad
  v_i = \llbracket G_i\rrbracket_\alpha(\rho)
  \quad\text{and}\quad
  v_i \in \gamma(B_i)
\]
whenever evaluation and bound propagation both succeed. Here \(\rho\) is the input and parameter environment, \(B_i\) is the abstract value stored for node \(i\), and \(\gamma\) is the concretization map for the chosen abstract domain. The same statement is used in different ways: execution establishes the left equality, IBP or CROWN establishes the right containment, and certificate checking recomputes the finite obligations that connect the two.

\begin{lstlisting}[style=leanBox,caption={Schematic public model definition through the stable facade.},label={lst:public-model-facade}]
import NN.API.Public
open NN

-- One model definition is used by execution, compilation, and checking.
def tiny : Spec.Model :=
  Linear (in := 4) (out := 8)
    |>.then ReLU
    |>.then (Linear (in := 8) (out := 1))

#eval tiny.run (Tensor.ones [1, 4])
\end{lstlisting}

Listing~\ref{lst:public-model-facade} shows the intended surface shape: the user writes a small model once, then chooses an execution or verification view. The appendix code listings use the same convention. They are schematic when the exact repository module names are not part of the paper claim; the theorem statements remain about the graph semantics, not about a particular file layout.

\begin{lstlisting}[style=leanBox,caption={Schematic IR objects shared by evaluation and verification.},label={lst:ir-core-schematic}]
inductive OpKind where
| input | const | linear | matmul | add | relu | tanh | softmax
| conv2d | flatten | reduceSum | mseLoss

structure Node where
  id       : Nat
  kind     : OpKind
  parents  : List Nat
  outShape : Shape

structure Graph where
  nodes     : Array Node
  inputIds  : List Nat
  outputIds : List Nat

-- SSA/DAG well orderedness: parents of node i have smaller ids.
def wellOrdered (G : Graph) : Prop :=
  forall n, n in G.nodes.toList ->
    forall p, p in n.parents -> p < n.id
\end{lstlisting}

This IR is intentionally small. Adding a new primitive starts by adding a tag, a shape rule, and a denotation for that tag. Only after that do we add derivative rules, interval rules, affine relaxations, or native runtime implementations.

\subsection{Execution modes and compilation to IR}
\label{app:torchlean-modes}

\TorchLean{} supports two execution modes that share the same semantic foundation but differ in when and how the computation graph is constructed.
\textbf{Eager mode (imperative, tape-based).} In eager mode, operations execute immediately as you call them, similar to PyTorch's default behavior. As operations run, they are recorded into a \emph{computation tape}: a dynamic data structure that stores the sequence of operations and their dependencies. This tape is built incrementally during the forward pass, and then used during the backward pass to compute gradients. The key property of eager mode is that it matches the familiar ``define-by-run'' workflow: you write code that looks like ordinary function calls, and the tape is constructed implicitly as a side effect. This makes debugging straightforward (you can inspect intermediate values immediately) and allows dynamic control flow (if statements, loops with data-dependent bounds).

\begin{figure*}[t]
\centering
\begin{tikzpicture}[
  node distance=8mm,
  box/.style={draw, rounded corners, align=center, inner sep=4pt, font=\tiny},
  >=Latex,
  lab/.style={font=\tiny, fill=white, inner sep=1pt}
]
  \node[box] (code) {User code:\\[0.1em]\texttt{let y = relu(linear(x, W, b))}};
  
  \node[box, below left=12mm and 18mm of code] (eager)
    {Eager backend\\[0.1em]Runtime tape\\[0.1em](dynamic)};
  
  \node[box, below right=12mm and 18mm of code] (compiled)
    {Compiled backend\\[0.1em]SSA/DAG graph\\[0.1em](static)};
  
  \node[box, below=20mm of code] (semantic)
    {Unified semantic target:\\[0.1em]Well-typed SSA/DAG graph\\[0.1em]with denotation $\llbracket G\rrbracket$};
  
  \draw[->] (code) -- node[lab, sloped, above, pos=0.55] {interpret} (eager);
  \draw[->] (code) -- node[lab, sloped, above, pos=0.55] {compile} (compiled);

  \draw[->] (eager) -- node[lab, sloped, below, pos=0.60] {correspondence} (semantic);
  \draw[->] (compiled) -- node[lab, sloped, below, pos=0.60] {direct} (semantic);
\end{tikzpicture}
\caption{Two execution paths to a unified semantic target. The same \TorchLean{} program can be interpreted eagerly (building a tape dynamically) or compiled (building a static graph upfront). Both paths produce computations that share the same well-typed SSA/DAG denotation, enabling unified reasoning about execution and verification.}
\label{fig:eager-compiled-paths}
\end{figure*}

\textbf{Eager mode (tape semantics vs.\ graph semantics).} Operationally, eager execution builds a tape on the fly; semantically, each completed run induces a well-typed SSA/DAG graph \(G\) that matches the run's forward values, and whose backpropagation matches the tape's backward pass. This correspondence is the bridge that lets us develop and debug in eager mode while still applying graph-level theorems (autograd correctness, bounds, and certificate checking) to the same semantic target (Theorem~\ref{thm:eager-graph-correspondence}).

\textbf{Compiled mode (static graph).} In compiled mode, the \TorchLean{} program is analyzed upfront to construct a static SSA/DAG graph representation. This is ``compilation'' only in the sense of \emph{lowering} to a formal IR for reasoning and verification; it is not intended to compete with \texttt{torch.compile}-style kernel fusion or hardware-level optimization. The compiled graph is exactly the object that formal theorems reason about: it has a precise denotation $\llbracket G\rrbracket$, and we can state and prove correctness properties about it. The advantage of compiled mode is that the graph structure is explicit and available for analysis before execution, so bound propagation algorithms, certificate checkers, and proofs all operate on the same graph object that the runtime evaluates. See Appendix~\ref{app:torchlean-coredefs} for details.

\subsection{Native runtime and Lean FFI boundary}
\label{app:torchlean-ffi}

The runtime layer includes optional native execution paths implemented through Lean's FFI.  These paths are useful for larger examples, but they are intentionally separated from theorem statements. At the Lean level, a native call is an opaque or external function with checked metadata around it; at the C/CUDA level, it is ordinary systems code that manages memory and calls kernels. The purpose of the boundary is to keep the semantic claim precise: Lean proves facts about the graph/operator denotation, and native execution is admitted only through an explicit agreement contract.

A typical primitive therefore has three pieces:
\begin{enumerate}[leftmargin=*]
  \item a \textbf{specification}, such as a tensor operator $\llbracket \tau\rrbracket_{\alpha}$;
  \item a \textbf{runtime implementation}, such as a CPU, CUDA, cuBLAS, or cuFFT path;
  \item a \textbf{conformance condition}, stating when the runtime result agrees with the specified finite-precision semantics.
\end{enumerate}
For example, a CUDA matrix multiplication call is not itself a proof of matrix-multiplication correctness. The formal statement is conditional on a deployment contract such as
\[
  \mathsf{cudaBmm}(A,B)
  =_{\mathsf{tol}}
  \llbracket \mathtt{bmm}\rrbracket_{\texttt{Float32}}(A,B),
\]
where $=_{\mathsf{tol}}$ may be exact bit equality for a deterministic path or an explicit tolerance/conformance relation for a tested backend. The contract also records shape, dtype, layout, device, and determinism assumptions.

\begin{lstlisting}[style=leanBox,caption={Schematic checked call pattern for a native backend.}]
/-- The theorem-facing operator is still the spec operator. -/
def matmulSpec (A B : Tensor Float32 s) : Tensor Float32 t := ...

/-- The runtime path is an optional implementation boundary. -/
def matmulCudaChecked (A B : RuntimeTensor) : IO RuntimeTensor := do
  checkShape A.shape expectedLeft
  checkShape B.shape expectedRight
  checkDevice A.device .cuda
  Cuda.Buffer.bmm A.buffer B.buffer
\end{lstlisting}

This is also why CUDA examples in the paper are phrased as accelerated executions or conformance-tested runtime paths, not as verified CUDA kernels. The value of adding the FFI path is practical: it makes longer runs, larger tensors, FNO/spectral examples, and attention-style examples feasible while preserving the formal boundary that the checker relies on.

\textbf{Compilation process.} When you write a \TorchLean{} program, it's a backend-generic definition that can run in different modes. But for verification, we need a concrete graph structure that we can reason about formally. The compilation process transforms a \TorchLean{} program into an operator-tagged IR graph. Here's how it works: the compiler walks through the program structure, and for each operation (like \texttt{linear}, \texttt{relu}, \texttt{softmax}), it creates a corresponding IR node with the appropriate operation kind and shape information. Parameters are stored separately in a parameter store, and the graph structure captures the data dependencies between operations. For the forward-only fragment underlying our verification demos, we make this precise by proving a semantics-preservation theorem: the compiled IR evaluator agrees with the fragment's source evaluator for all inputs and parameters.

The proof is easier to read if the compiler is separated into two maps. The first map extends the graph with new nodes and returns their ids; the second map interprets the generated graph. The preservation theorem then states that the source evaluator and the graph evaluator build the same tensor, or fail for the same explicit reason. This exception-valued form is useful in Lean because it keeps shape errors visible instead of hiding them under partial functions.

\begin{lstlisting}[style=leanBox,caption={Schematic lowering interface used by the forward fragment.},label={lst:lower-forward}]
inductive EvalError where
| missingParent | shapeMismatch | unsupportedOp

structure CompileState where
  graph  : Graph
  params : ParamStore

def emitNode (st : CompileState) (k : OpKind)
    (parents : List Nat) (s : Shape) : Prod CompileState Nat :=
  -- returns the updated graph and the fresh SSA id
  ...

def lowerForward (p : ForwardProg) (st : CompileState) :
    Except EvalError (Prod CompileState Nat) :=
  match p with
  | .input j       => pure (st, st.graph.inputIds[j]!)
  | .linear W b x  => do
      let (st, xid) <- lowerForward x st
      let (st, wid) <- emitNode st .const [] W.shape
      let (st, bid) <- emitNode st .const [] b.shape
      let (st, mid) <- emitNode st .matmul [wid, xid] (linearOut W)
      emitNode st .add [mid, bid] (linearOut W)
  | .relu x        => do
      let (st, xid) <- lowerForward x st
      emitNode st .relu [xid] (shapeOf x)
\end{lstlisting}

The actual development uses the repository's concrete program and graph datatypes, but the proof obligation has this shape: every emitted node uses only earlier node ids, every emitted shape matches the operator contract, and the evaluator for the emitted node agrees with the source operation at the corresponding program point.

\textbf{Proved forward compiler correctness.}
We isolate a first-order, SSA-style forward fragment and prove that lowering it to the operator-tagged IR preserves semantics.
The theorem is stated over \emph{exception-valued} evaluation: both the source semantics and the IR semantics return either
a value (\texttt{ok}) or an explicit failure (\texttt{error}) when shape/type constraints are violated.
Stating the result at this level preserves successful runs and well-defined failure behavior. The same invariant later lifts from the forward fragment to graph-level reasoning.

\begin{BlueTheorem}{Forward lowering correctness}{forward-lowering-correct}
Let $\mathsf{lower}(p,\theta)$ be the IR obtained by lowering a forward program $p$ with parameters $\theta$,
and let $\mathsf{Eval}_{\textsf{src}}$ and $\mathsf{Eval}_{\textsf{ir}}$ denote exception-valued evaluation
of the source fragment and IR, respectively. Then for all inputs $x$,
\[
  \mathsf{Eval}_{\textsf{ir}}(\mathsf{lower}(p,\theta),\,x)
  \;=\;
  \mathsf{Eval}_{\textsf{src}}(p,\theta,\,x).
\]
\end{BlueTheorem}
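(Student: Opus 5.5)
The proof is by structural induction on the forward program $p$, following the clauses of \texttt{lowerForward} in Listing~\ref{lst:lower-forward}. We do not prove the stated equation directly; we first prove a stronger invariant over a running compile state. Call a state $\mathsf{st}$ with value environment $\rho$ \emph{consistent} if three conditions hold: $\mathsf{st.graph}$ is \texttt{wellOrdered}; every node id $i$ in it has a cached IR value $\rho(i)$; and $\rho(i)$ equals the exception-valued IR evaluation of node $i$ under the input $x$ and the parameter store $\mathsf{st.params}$.

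\textbf{Generalized lemma.} Let $\mathsf{st}$ be consistent, and suppose $\mathsf{lowerForward}(p,\mathsf{st})$ returns $(\mathsf{st}',k)$. Then:
\begin{enumerate}
  \item $\mathsf{st}'$ extends $\mathsf{st}$. Old nodes and parameters are unchanged, and new nodes get fresh ids larger than all old ones.
  \item $\mathsf{st}'$ is again consistent.
  \item $\mathsf{Eval}_{\textsf{ir}}$ at node $k$ equals $\mathsf{Eval}_{\textsf{src}}(p,\theta,x)$.
\end{enumerate}
If instead lowering or source evaluation raises an \texttt{EvalError}, the other side raises the same error. The theorem then follows by taking the initial state (only input nodes, which are trivially consistent) and observing that $\mathsf{Eval}_{\textsf{ir}}(\mathsf{lower}(p,\theta),x)$ reads off the output node.

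\textbf{Base and inductive cases.} The \texttt{input} case is immediate from the consistency of the input nodes.

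For \texttt{relu} and \texttt{linear}, first apply the induction hypothesis to the subprogram. This yields a consistent extended state whose designated node computes the subprogram's source value. Next, prove a small \emph{emit lemma} for \texttt{emitNode}. It states that appending a node whose parents already exist keeps the graph well ordered, since the fresh id exceeds all parent ids. It also states that evaluating the appended node applies the operator denotation $\llbracket\tau\rrbracket$ to the parents' values. A \emph{frame lemma} complements this: appending nodes does not change the evaluation of earlier nodes, because evaluation of node $i$ only reads ids below $i$. Both are proved by induction on the node array using \texttt{wellOrdered}.

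With these lemmas, the \texttt{linear} case reduces to a semantic identity: \texttt{add} applied to (\texttt{matmul} of the \texttt{const} nodes for $W$ and $x$) and the \texttt{const} node for $b$ equals the source $\mathrm{Linear}_{W,b}(x)$, which is the unfolding of $b_i+\sum_j W_{ij}x_j$. The \texttt{relu} case is a single emit step.

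For the error branches, both evaluators are written in the \texttt{Except} monad with the same shape checks. Shape mismatches are therefore detected at corresponding program points and produce the same \texttt{EvalError}. The \texttt{bind} structure makes both sides short-circuit on the first failure.

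\textbf{Main obstacle.} I expect the hardest step to be the frame/monotonicity lemma together with correct threading of node ids through \texttt{Except} binds. The difficulty is showing that the ids returned for earlier subterms (such as \texttt{xid}) remain valid and keep their values after later emissions (\texttt{wid}, \texttt{bid}, \texttt{mid}). I would state the extension relation $\mathsf{st}\preceq\mathsf{st}'$ explicitly as a preorder and prove that consistency is preserved along it, so that every case only composes transitivity steps. I would also make sure that the shape recorded by \texttt{emitNode} matches the shape computed by the source, so that the IR's shape check never fails spuriously. This needs a side lemma relating \texttt{linearOut}$\,W$ and \texttt{shapeOf} to the types of the source values.
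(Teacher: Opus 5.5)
Your proposal is correct and follows the paper's own argument. The paper likewise splits lowering into an id-returning emission map and a graph interpreter, and reduces preservation of both values and explicit failures to exactly your obligations: emitted nodes reference only earlier ids, emitted shapes match the operator contract, and each emitted node's evaluation agrees with the source operation at that program point. Your frame lemma and state-extension preorder spell out the induction the paper leaves implicit. One small caution: the theorem is meant for every scalar semantics $\alpha$, including \texttt{Float}/\texttt{IEEE32Exec}, and \texttt{Context} provides no commutativity or associativity laws. So the \texttt{linear} step must hold by definitional unfolding of the same \texttt{matmul}/\texttt{add} composite in the same order, not by rearranging $b_i+\sum_j W_{ij}x_j$ as you would over $\mathbb{R}$.
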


The following theorem targets the verifier/demo fragment. A single theorem for arbitrary user-written \TorchLean{} programs in the current higher-order/tagless-final embedding requires a logical-relations or parametricity development, or a different frontend encoding; that extension remains future work.

\begin{BlueTheorem}{Eager--graph correspondence}{eager-graph-correspondence}
For any fixed eager execution run that produces a computation tape \(T\) with forward values \(v_i\) and a backward pass that computes gradients \(g_i\), there exists a well-typed SSA/DAG graph \(G\) such that:
\begin{itemize}
  \item \textbf{Forward correspondence:} the eager forward values agree with the graph denotation (\(v_i = \llbracket G\rrbracket(x)_i\) for all nodes \(i\)).
  \item \textbf{Backward correspondence:} the eager backward gradients agree with graph backpropagation (\(g_i = \mathrm{backpropCtx}(G,x,seed)_i\) for all nodes \(i\)).
\end{itemize}
\end{BlueTheorem}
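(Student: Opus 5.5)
The plan is to build the graph $G$ directly from the tape, by a function $\mathsf{tapeToGraph}(T)$, and then prove both correspondences by induction over the tape in the order it was recorded. Each tape entry $e_j$ stores an operator tag $\tau_j$, parent references $p_j \subseteq \{0,\ldots,j-1\}$, the forward value $v_j$, and a cotangent accumulator. The construction makes $e_j$ into node $j$ with kind $\tau_j$, parents $p_j$, and output shape $\mathrm{shape}(v_j)$. Input and parameter leaves become \texttt{input}/\texttt{const} nodes, and their values go into the environment $\rho$.

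\emph{Well-typedness.} In eager mode an entry is appended only after all of its parents have already been computed. Consequently every parent id is strictly smaller than the entry's own id, which is exactly the \texttt{wellOrdered} predicate. Shape correctness is also inherited: the eager operator call typechecked at record time, so $\sig(\tau_j)$ matches the parents' shapes.

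\emph{Forward correspondence.} I will prove by strong induction on $i$ that
\[
  \llbracket G\rrbracket(x)_i = \llbracket \tau_i\rrbracket(v_{p_{i,1}},\ldots,v_{p_{i,k}}) = v_i .
\]
The first equality is the definition of topological evaluation combined with the induction hypothesis on the parents. The second equality holds because eager execution computes $v_i$ by calling the same primitive denotation $\llbracket\tau_i\rrbracket$. This is where the shared operator semantics is essential, and I would state it as a lemma saying that the eager op for tag $\tau$ is definitionally $\llbracket\tau\rrbracket_\alpha$.

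\emph{Backward correspondence.} This is the main obstacle, because the eager backward pass mutates accumulators in tape order while $\mathrm{backpropCtx}$ is defined as a reverse sweep over $G$. I would prove that the two loops are the same fold. Both iterate $j$ from $|T|-1$ down to $0$, and at step $j$ both add $\mathrm{VJP}_{\tau_j,i}(v_{p_j},\bar v_j)$ into slot $i$ for each parent $i\in p_j$. The proof uses an invariant after processing suffix $\{j,\ldots,|T|-1\}$: the eager accumulator table equals the graph cotangent context pointwise. The step case uses two facts: the forward values fed to the VJPs agree, by the forward part, and both implementations use the same VJP rule for $\tau_j$. Duplicate parent occurrences and fan-out need care. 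I would show that both sides accumulate with the same additive update and handle a repeated parent by iterating over the parent list identically, rather than over a set. If the eager accumulation order differs from the graph's, I would fall back on commutativity and associativity of $+$ over $\mathbb{R}$, and note that this is where a Float instance would need an identical-order argument. Finally, instantiating the invariant at the empty suffix with the initial seed gives $g_i = \mathrm{backpropCtx}(G,x,seed)_i$.
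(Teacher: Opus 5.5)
Your proposal is correct and follows the route the paper indicates. The paper gives only a one-line description (a completed eager run "induces" a well-typed SSA/DAG graph whose evaluation and backpropagation match the tape), and your argument fills it in the natural way: read the tape off as a graph in tape order, get forward agreement by induction along that order from the shared operator denotations, and get backward agreement because both procedures perform the identical reverse-order accumulation of the same VJP contributions. Since you build $G$ with the tape's own node order, your fallback to commutativity and associativity over $\mathbb{R}$ is never actually invoked, so the argument is purely structural and does not rely on algebraic laws of the scalar domain.
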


\subsection{\TorchLean{} programs, Modules, and API design}
\label{app:torchlean-api}

The core idea behind \TorchLean{} is that a neural network model is a backend-generic \emph{program} built from a small, well-defined \texttt{Ops} interface. This avoids maintaining separate semantics for eager execution and compiled graphs: the same \TorchLean{} program can be interpreted by an eager backend that records a computation tape (as in PyTorch), or by a compiler backend that statically builds an SSA/DAG graph representation suitable for proofs and verification. Both paths share the same semantic meaning.

To make this concrete, consider a simple model consisting of a linear layer followed by a vector softmax, written as a backend-generic \TorchLean{} program:
\begin{lstlisting}[style=torchleanBox]
def softmaxModel {a : Type} [Context a] [DecidableEq Shape] :
    TorchLean.Program a (paramShapes ++ [xShape]) yShape :=
  fun {m} _ _ =>
    fun w b x => do
      let logits <- TorchLean.linear (m := m)
        (inDim := inDim) (outDim := outDim) w b x
      TorchLean.softmax (m := m) (s := yShape) logits
\end{lstlisting}
This definition is polymorphic in the scalar type \(\alpha\) (which can be Float, IEEE32Exec, or \(\mathbb{R}\)) and in the backend monad \(m\), which allows the same model to be executed in different contexts. The resulting term is structured enough to be compiled into our intermediate representation and fed directly to verification passes, reducing the gap between ``the model as written'' and ``the model as verified.''

For supervised learning workflows, we package models together with their training setup using a \texttt{ScalarModuleDef} structure. This bundles the initial parameters with a scalar loss function into a single object:
\begin{lstlisting}[style=torchleanBox]
structure ScalarModuleDef (paramShapes inputShapes : List Shape) where
  initParams : Torch.TList Float paramShapes
  loss :
    forall {a : Type}, [Context a] -> [DecidableEq Shape] ->
      TorchLean.Program a (paramShapes ++ inputShapes) Shape.scalar
\end{lstlisting}
We store initial parameters as Float literals primarily for ergonomic reasons: it keeps small examples readable, allowing us to write \texttt{tensorND!} blocks without constantly fighting type casts. The cast from Float to the chosen scalar backend is treated as part of the \emph{workflow layer} rather than part of the core semantics, which keeps the specification layer clean while maintaining flexibility.

Using this structure, a typical training loop is similar to PyTorch code. When instantiating a module, we choose both the scalar backend (via a \texttt{Float -> a} cast function) and the execution mode (\texttt{.eager} or \texttt{.compiled}):
\begin{lstlisting}[style=torchleanBox]
open Runtime.Autograd.TorchLean.Module

let m <- defn.instantiate (cast := cast) (backend := .eager)
for _ in List.range T do
  let _ <- ScalarModule.forward m xs
  let _ <- ScalarModule.backward m xs
  ScalarModule.step m lr xs
\end{lstlisting}
For stateful optimizers (e.g.\ Adam), we can step using an explicit optimizer state aligned with the parameter shapes:
\begin{lstlisting}[style=torchleanBox]
open Runtime.Autograd.TorchLean

let opt := Optim.adam (lr := lr) (beta1 := 0.9) (beta2 := 0.999) (epsilon := 1e-8)
let st <- ScalarModule.initOptim m opt
for _ in List.range T do
  let st' <- ScalarModule.stepWith m opt st xs
  -- continue with st'
\end{lstlisting}
We introduce this wrapper to match the expected workflow (define a model, run a training loop) while keeping the resulting computation connected to the same SSA/DAG graph semantics that our proofs and verifiers consume.

\begin{table*}[!t]
  \centering
  \caption{Autograd operator coverage (Fr\'echet-derivative level). ``Global'' means we have a \texttt{NodeFDerivCorrect} proof (a \texttt{HasFDerivAt} theorem without side conditions). ``Pointwise'' means we prove \texttt{NodeFDerivCorrectAt} under explicit hypotheses that rule out non-differentiable points (e.g., no zeros/ties). Operators listed under ``Not covered'' may still be executable (and may have a deterministic backprop convention), but are not claimed correct at the level of classical derivatives.}
  \label{tab:ad-op-coverage}
  \scriptsize
  \setlength{\tabcolsep}{4pt}
  \renewcommand{\arraystretch}{1.15}
  \begin{tabular}{@{}p{0.22\linewidth}p{0.30\linewidth}p{0.30\linewidth}p{0.14\linewidth}@{}}
    \toprule
    \textbf{OpKind / primitive} & \textbf{Global \texttt{HasFDerivAt}} & \textbf{Pointwise (side conditions)} & \textbf{Not covered} \\
    \midrule
    Linear / shape ops &
      \texttt{add}, \texttt{sub}, \texttt{mul\_elem}, \texttt{linear}, \texttt{matmul}, \texttt{conv2d},
      \texttt{broadcastTo}, \texttt{reshape}, \texttt{flatten}, \texttt{permute},
      \texttt{reduce\_sum}, \texttt{reduce\_mean}, \texttt{sum}, \texttt{concat} (curated axes),
      \texttt{swap\_first\_two}, \texttt{transpose3d\_last\_two} &
      -- &
      -- \\
    Smooth nonlinearities / losses &
      \texttt{tanh}, \texttt{sigmoid}, \texttt{exp}, \texttt{square},
      \texttt{sinh}, \texttt{cosh}, \texttt{softplus}, \texttt{gelu} (tanh approximation),
      \texttt{silu}, \texttt{softmax}/\texttt{log\_softmax} (curated axis),
      \texttt{layernorm} (curated axis), \texttt{mse\_loss},
      cross-entropy/NLL/BCE/KL pointwise losses,
      smooth surrogates \texttt{safe\_log} and \texttt{smooth\_abs} (with $\varepsilon>0$) &
      -- &
      -- \\
    Non-smooth / domain-sensitive &
      -- &
      \texttt{relu}, \texttt{abs}, \texttt{log}, \texttt{inv}, \texttt{sqrt},
      \texttt{max\_elem}/\texttt{min\_elem} (no ties) &
      -- \\
    Structured graph families &
      fixed-mask dropout, scaled dot-product attention, unmasked MHA, residual MHA,
      post-norm transformer sublayers, BatchNorm-like graphs, one-step Elman RNN,
      gather-row / embedding lookup adjointness &
      -- &
      full GPT/ViT stacks, masked causal decoder blocks, full recurrent/state-space sequence theorems,
      FNO spectral-conv training paths \\
    Nondifferentiable / workflow ops &
      -- &
      -- &
      \texttt{detach}, \texttt{rand\_uniform}, stochastic masks beyond fixed-mask dropout,
      pooling (\texttt{max\_pool2d}, \texttt{avg\_pool2d}), executable CUDA/cuBLAS/cuFFT kernels \\
    \bottomrule
  \end{tabular}
\end{table*}

Some workloads require \emph{integer-valued indices} sourced from data (e.g., class labels for classification losses, or token/row indices for embedding lookups). Mainstream frameworks typically represent these indices as integer tensors that participate in the same runtime graph, even though gradients do not flow through the indices themselves. \TorchLean{} makes this separation explicit: the differentiable graph remains single-dtype over the chosen scalar domain, while indices are provided through a separate \emph{non-differentiable} channel in the \texttt{Session} interface. In the \texttt{Session} interface, users supply scalar or batched indices as external inputs, and index-dependent operators (e.g., row-gather / embedding lookup) treat these indices as read-only selectors while gradients flow only through the floating-point tensor values (e.g., into the embedding matrix). This design avoids mixed-dtype graphs while still supporting standard ML patterns such as cross-entropy with integer labels and embedding-style table lookups.

\textbf{What is \texttt{fderiv} / \texttt{HasFDerivAt}?}
In Lean/mathlib, \texttt{fderiv} denotes the (Fr\'echet) derivative of a function between finite-dimensional real vector spaces, represented as a linear map; \texttt{HasFDerivAt} is the predicate that a function has a specified Fr\'echet derivative at a point. Informally, this is the standard Jacobian-level notion of differentiability: for $f:\mathbb{R}^n\to\mathbb{R}^m$, \texttt{fderiv} returns the Jacobian $Df(x)$ as a linear operator, and reverse-mode backpropagation computes its adjoint action on a cotangent seed. For non-smooth or domain-sensitive primitives (e.g., ReLU at $0$, $\log$ at nonpositive inputs), we prove a pointwise variant that requires explicit side conditions ruling out those problematic points.

\begin{lstlisting}[style=torchleanBox]
-- Sketch: indices are non-differentiable; gradients flow into emb only.
sess   := Session.new(...)
emb    := const(sess, W)         -- differentiable parameter tensor
idx    := input_indices(sess)    -- non-differentiable indices from data
rows   := gather_rows(sess, emb, idx)
loss   := ...
backward(loss)  -- dW accumulates; no gradients w.r.t. idx
\end{lstlisting}

\textbf{Operator coverage across layers.}
\TorchLean{} exposes a shared set of primitives through the operator-tagged IR (\texttt{NN.IR.OpKind}) for both eager execution and proof-linked compiled execution. 
Autograd correctness is established for a curated subset of these primitives at the Fr\'echet-derivative level: Table~\ref{tab:ad-op-coverage} summarizes which ops are proved globally (\texttt{NodeFDerivCorrect}/\texttt{HasFDerivAt}), which are proved pointwise under explicit side conditions (\texttt{NodeFDerivCorrectAt} for non-smooth or domain-sensitive ops), and which are not covered by classical-derivative proofs (though they may remain executable under deterministic conventions).
For verifier integration, the \TorchLean $\to$IR compiler targets a conservative forward, first-order fragment, while the LiRPA stack provides IBP broadly and a basic CROWN-style affine pass with sound fallbacks when tighter relaxations are not implemented. Appendix~\ref{app:torchlean-nodecorrect} gives the corresponding coverage details.

\textbf{Control flow, variable-length sequences, indexing, and state.}
Our verified/compiled semantics targets SSA/DAG computation graphs: a \emph{finite}, side-effect-free dataflow artifact evaluated in topological order. As a result, \emph{data-dependent branching and loops} are not directly part of the verifier IR. Instead, dynamic control flow is handled by \emph{reification} into a finite graph when the structure is known at compile time---for example, unrolling an RNN/GRU/LSTM cell for a fixed \texttt{seqLen}, or representing Transformers as acyclic attention blocks. This restriction matches the intended verification regime: certificates and proofs are stated about a fixed graph denotation $\llbracket G\rrbracket$, rather than about an open-ended program whose control flow depends on runtime data.

Indexing requires special care because most indexing operators consume \emph{integer-valued indices} (labels, token ids, embedding row selectors) that are inherently non-differentiable and introduce mixed-dtype graphs (float activations plus integer tensors). Mainstream ML runtimes permit such mixed graphs, but they complicate a semantic setting where we want a single, uniform scalar domain for the differentiable graph (to support clean denotations, autograd theorems, and bound propagation). \TorchLean{} therefore separates concerns: the differentiable graph remains single-dtype over the chosen scalar domain, while indices are supplied through an explicit \emph{non-differentiable} channel in the \texttt{Session} API (e.g., scalar/batched naturals for labels and lookup indices). Index-dependent ops (e.g., embedding lookup / row gather) treat these indices as read-only selectors; gradients flow only through the floating-point values (e.g., into the embedding table or downstream layers), not through the indices themselves. On the verifier path, \TorchLean{} supports conservative indexing patterns that admit a clean lowering to the IR (e.g., scalar/row gathers reduced to one-hot selection and \texttt{matmul}); general integer-indexed \texttt{gather}/\texttt{scatter}, variable-shape slicing, and shape-changing data-dependent indexing remain outside the verified IR.

Finally, \emph{state} is modeled explicitly rather than implicitly. The verifier semantics and most proofs assume layers are pure functions of their inputs (and any provided parameters), which is essential for treating the model as a mathematical object with a stable denotation. This is straightforward for affine layers and activations, but it requires care for layers with PyTorch style internal updates (e.g., BatchNorm running mean/variance). In \TorchLean{}, such updates are performed outside the backend-generic op set: batch statistics can be computed from data, and running buffers updated explicitly in the imperative session layer. This design keeps the denotation used by verification free of hidden mutation, and makes any remaining state/update assumptions and trust boundaries visible at the API level.

\subsection{Autograd verification}
\label{app:torchlean-autograd}

Training and verification pipelines routinely rely on gradients: optimization steps, sensitivity analyses, and derivative-based certificates (e.g., Lyapunov or PINN residual bounds) all assume that the computed derivatives match the model's intended semantics. In mainstream ML systems, backpropagation is trusted because it is heavily tested, but its correctness is rarely stated as a theorem about the exact computation being executed. \TorchLean{} makes this link explicit: we prove, for any well-typed SSA/DAG computation graph, that reverse-mode backpropagation computes the adjoint Fr\'echet derivative of the graph denotation. This turns gradient correctness into a \emph{semantic property} of the same graph that is executed and verified, enabling derivative-dependent guarantees to compose cleanly with the rest of the framework rather than resting on an implicit trust assumption.

The proof is organized in two stages that separate algorithmic correctness from analytic interpretation. This structure makes the development modular and reusable across different scalar domains.

\textbf{Stage 1: Algebraic adjointness (backend-generic).}
We first prove a graph-level adjointness law that does \emph{not} depend on real analysis.
The statement is parametric over an abstract scalar interface (our \texttt{Context} typeclass): any scalar domain that
supports the required ring/ordering operations (e.g., reals, floats, intervals) can instantiate the theorem.
At this level, reverse-mode backpropagation is characterized as computing a \emph{vector--Jacobian product} (VJP)
that is adjoint to the \emph{Jacobian--vector product} (JVP) with respect to a dot product on \emph{tensor contexts}
(i.e., heterogeneous tuples of tensors matching the graph's typed input interface).
Intuitively: pushing a perturbation forward (JVP) and pushing a cotangent backward (VJP) are dual operations, and their
duality is captured by a single inner-product identity.

\textbf{Contexts and dot products.}
Let $\langle \cdot,\cdot\rangle_{\Gamma}$ denote the dot product on the input context $\Gamma$ (a typed tuple of tensors),
defined by summing tensorwise dot products over all inputs/parameters. This dot product is bilinear and symmetric,
which is exactly the structure needed for adjointness statements.

\textbf{Forward/JVP/VJP as programs on a graph.}
For a well-typed SSA/DAG graph $G$ and input $x$, we define three evaluation procedures:
(i) forward evaluation $\mathrm{Eval}_G(x)$, (ii) forward-mode JVP $\mathrm{JVP}_G(x; \delta x)$, and
(iii) reverse-mode backprop/VJP $\mathrm{VJP}_G(x; \bar y)$.
All three follow the graph's topological order (forward) and reverse-topological order (backward).

\vspace{0.25em}
\begin{lstlisting}[style=torchleanBox]
# Forward evaluation (topological order)
Eval_G(x):
  for node i in topo_order(G):
    v[i] = op_i(v[parents(i)])
  return v[out]

# Forward-mode JVP (propagate tangents)
JVP_G(x; dx):
  (v, _) = Eval_G(x)            # reuse forward values
  dv[input] = dx;  dv[others] = 0
  for node i in topo_order(G):
    dv[i] = J_op_i(v[parents(i)]) * dv[parents(i)]
  return dv[out]

# Reverse-mode VJP / backprop (propagate cotangents)
VJP_G(x; seed):
  (v, _) = Eval_G(x)
  bar[out] = seed; bar[others] = 0
  for node i in reverse(topo_order(G)):
    for p in parents(i):
      bar[p] += VJP_op_i(p, v[parents(i)], bar[i])
  return bar[input]             # gradients w.r.t. inputs/params
\end{lstlisting}
\vspace{-0.25em}

\begin{BlueTheorem}{Graph adjointness (algebraic)}{alg-adjointness}
For any well-typed SSA/DAG graph $G$ (over any \texttt{Context}) and any $x$, $\delta x$, and cotangent seed $\bar y$,
\[
\big\langle \mathrm{JVP}_G(x;\delta x),\, \bar y \big\rangle
\;=\;
\big\langle \delta x,\, \mathrm{VJP}_G(x;\bar y) \big\rangle .
\]
\end{BlueTheorem}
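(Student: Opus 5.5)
The plan is to prove Theorem~\ref{thm:alg-adjointness} by induction on the number of nodes of $G$, following the topological order used by $\mathrm{Eval}_G$, and to reduce the graph identity to a \emph{local} adjointness law for each operator tag. Concretely, I would take as the per-node hypothesis (part of what \texttt{Context}-level node correctness provides) that for every tag $\tau$ with parents $p_1,\dots,p_k$, forward values $v$, tangents $\delta v_{p_j}$ and cotangent $\bar w$,
\[
\Bigl\langle J_\tau(v)\,(\delta v_{p_1},\ldots,\delta v_{p_k}),\ \bar w\Bigr\rangle
=\sum_{j=1}^{k}\Bigl\langle \delta v_{p_j},\ \mathrm{VJP}_{\tau,j}(v,\bar w)\Bigr\rangle .
\]
For linear and shape operators this is a finite rearrangement of sums. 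For pointwise operators it is commutativity of multiplication. Both use only the commutative-semiring structure of \texttt{Context}, which is why the theorem is backend-generic.

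The key device is a generalized invariant over \emph{partial} sweeps rather than the input/output statement directly. Extend the context dot product to the whole value table, $\langle\delta v,\bar v\rangle_{\le n}=\sum_{i<n}\langle \delta v_i,\bar v_i\rangle$. Let $\delta v$ be the full JVP tangent table and $\bar v^{(m)}$ the cotangent table after the reverse sweep has processed nodes $n-1,\dots,m$. I would prove
\[
\langle \delta v_{\mathrm{out}},\bar y\rangle=\sum_{i<m}\langle \delta v_i,\bar v^{(m)}_i\rangle+\sum_{i\ge m}\langle \delta v_i,\bar v^{(m)}_i\rangle_{\text{unprocessed}}.
\]
Stated more cleanly, $\sum_i\langle\delta v_i,\bar v^{(m)}_i\rangle$, summed over not-yet-processed nodes $i<m$ plus inputs, is invariant in $m$.

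The proof of invariance goes as follows.
\begin{enumerate}
\item At initialization ($m=n$) the sum is $\langle\delta v_{\mathrm{out}},\bar y\rangle$, since all other cotangents are zero.
\item Processing node $m-1$ removes the term $\langle \delta v_{m-1},\bar v_{m-1}\rangle$ from the sum.
\item By the JVP definition, $\delta v_{m-1}=J_{\tau}(v)\,\delta v_{\mathrm{parents}}$.
\item By the local law above, the removed term equals $\sum_j\langle \delta v_{p_j},\mathrm{VJP}_{\tau,j}\rangle$, which is exactly the amount the accumulation step \texttt{bar[p] += ...} adds.
\item The increase is matched because the dot product is bilinear in the cotangent argument. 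This handles fan-out, since repeated parents just contribute repeated summands.
\end{enumerate}
SSA well-orderedness (parents have smaller ids, as in \texttt{wellOrdered}) guarantees that each added contribution lands on a node still in the summation range. At $m=$ (number of input nodes) only input slots remain, where $\delta v=\delta x$ and the cotangents are $\mathrm{VJP}_G(x;\bar y)$. This yields the theorem.

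I expect the main obstacle to be the bookkeeping of this invariant over heterogeneous, dependently typed value tables. Concretely, this means defining the partial dot-product over a \texttt{Fin}-indexed table whose shapes vary per node, and showing that the functional update at a parent index commutes with the finite sum, including the duplicate-parent case. I would first try to phrase the table as a list processed by \texttt{foldr} over the reverse node order, so the invariant becomes a one-step lemma proved by list induction. Shape casts would be isolated in a single lemma saying that $\langle\cdot,\cdot\rangle$ on a table updated at $p$ by adding $u$ equals the old sum plus $\langle\delta v_p,u\rangle$.
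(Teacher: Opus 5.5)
Your proposal is correct and follows the paper's route: induction along the SSA/topological node order, composing per-node local adjointness laws, with bilinearity of the dot product accounting for fan-out accumulation. Your conserved sum over unprocessed nodes is the loop-invariant form of the paper's prefix-graph induction, and it states explicitly the strengthened hypothesis that such an induction needs, namely an arbitrary cotangent table as seed.

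One small correction: the displayed \texttt{Context} class supplies operations but no algebraic laws, and the local laws for \texttt{sub}, \texttt{neg}, softmax or layernorm need additive inverses. So the structure you actually rely on is a lawful commutative ring, not a semiring; for the same reason the identity cannot hold verbatim for floats or intervals.
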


\textbf{Stage 2: Analytic upgrade (real calculus).}
Stage~1 establishes a purely algebraic adjointness law over an abstract scalar interface. In Stage~2 we specialize to
$\mathbb{R}$ and connect that law to standard multivariate calculus. The key step is to relate shape-indexed tensor spaces
to finite-dimensional Euclidean spaces: for each tensor shape $s$, let $n=\mathrm{size}(s)$ and identify tensors
with vectors in $\mathbb{R}^n$ via a (total) vectorization map and its inverse. This lets us interpret graph evaluation
as a function $\mathsf{EvalVec}_G : \mathbb{R}^{N_{\text{in}}}\to \mathbb{R}^{N_{\text{out}}}$ between Euclidean spaces and
use the usual Fr\'echet derivative (Jacobian as a linear map).

\textbf{Vectorization and inner products.}
Let $\mathrm{vec}_s : \Tensor(\mathbb{R},s)\to \mathbb{R}^{\mathrm{size}(s)}$ be the flattening map (with inverse
$\mathrm{unvec}_s$). We define the tensor dot product $\langle a,b\rangle_s$ by summing coordinatewise products over the
shape; the bridge theorem states that this dot product coincides with the standard Euclidean inner product after
vectorization.

\vspace{0.25em}
\begin{lstlisting}[style=torchleanBox]
# Conceptual definition (shape-indexed)
vec_s   : Tensor(R,s) -> R^{size(s)}      # flatten
unvec_s : R^{size(s)} -> Tensor(R,s)      # reshape

# Bridge property
< a, b >_s  ==  < vec_s(a), vec_s(b) >_{R^{size(s)}}
\end{lstlisting}
\vspace{-0.25em}

\begin{BlueTheorem}{Tensor--vector inner-product bridge}{tensor-vector-bridge}
For any shape $s$ and tensors $a,b\in \Tensor(\mathbb{R},s)$,
\[
\langle a,b\rangle_s \;=\; \big\langle \mathrm{vec}_s(a),\,\mathrm{vec}_s(b)\big\rangle_{\mathbb{R}^{\mathrm{size}(s)}}.
\]
\end{BlueTheorem}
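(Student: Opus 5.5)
We will prove the bridge by structural induction on the shape $s$, mirroring the inductive definition of $\Tensor$ (constructors \texttt{scalar} and \texttt{dim}). Before starting, we fix the two definitions so they recurse the same way. The tensor dot product is defined recursively: $\langle a,b\rangle_{\mathtt{scalar}} = a\,b$, and $\langle a,b\rangle_{\mathtt{dim}(n,s)} = \sum_{i<n}\langle a_i,b_i\rangle_s$. The vectorization $\mathrm{vec}_{\mathtt{dim}(n,s)}$ is the row-major flattening. It places the block $\mathrm{vec}_s(a_i)$ at offsets $i\cdot\mathrm{size}(s),\ldots,(i+1)\mathrm{size}(s)-1$, using the equivalence $\Fin(n\cdot m)\simeq\Fin(n)\times\Fin(m)$ (mathlib's \texttt{finProdFinEquiv}). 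If the repository defines the dot product non-recursively as a sum over all multi-indices, we first prove it agrees with the recursive form; that is a one-line reindexing.

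In the base case $s=\mathtt{scalar}$, we have $\mathrm{size}(s)=1$ and $\mathrm{vec}$ sends $a$ to the constant vector $(a)$. Both sides reduce to $a\,b$, since a sum over $\Fin 1$ collapses by \texttt{Fin.sum\_univ\_one}.

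In the inductive step $s=\mathtt{dim}(n,s')$, we expand the Euclidean inner product as $\sum_{k:\Fin(n\cdot\mathrm{size}(s'))}\mathrm{vec}(a)_k\,\mathrm{vec}(b)_k$. We reindex this sum along the product equivalence to $\sum_{i<n}\sum_{j<\mathrm{size}(s')}\mathrm{vec}_{s'}(a_i)_j\,\mathrm{vec}_{s'}(b_i)_j$, using \texttt{Equiv.sum\_comp} and \texttt{Fintype.sum\_prod\_type}. The inner sum is $\langle \mathrm{vec}_{s'}(a_i),\mathrm{vec}_{s'}(b_i)\rangle$, which equals $\langle a_i,b_i\rangle_{s'}$ by the induction hypothesis. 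Summing over $i$ then gives $\langle a,b\rangle_s$ by definition.

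The main obstacle is the reindexing step, specifically making the definitional unfolding of $\mathrm{vec}$ at index $\texttt{finProdFinEquiv}(i,j)$ reduce to $\mathrm{vec}_{s'}(a_i)_j$. In Lean this typically fails because of dependent-type casts: $\mathrm{size}(\mathtt{dim}(n,s'))$ equals $n\cdot\mathrm{size}(s')$ only propositionally, or the factors appear as $\mathrm{size}(s')\cdot n$. To handle this, we first prove a standalone lemma that computes $\mathrm{vec}$ on the product index, discharging casts with \texttt{Fin.cast} and \texttt{simp}. Only after that lemma is in place do we carry out the sum manipulation. Using \texttt{EuclideanSpace} versus plain \texttt{Fin n → ℝ} only changes which inner-product simp lemma we apply (\texttt{PiLp.inner\_apply}), not the structure of the argument.
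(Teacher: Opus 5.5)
Your proof is correct and takes the route the paper's setup implies: structural induction on the shape, where the \texttt{dim} case splits the flattened index into (block, offset) via the row-major product equivalence and then reduces to the induction hypothesis blockwise. The paper states this theorem without a written proof, but it describes exactly this shape-recursive pattern for tensor lemmas ("the scalar case is immediate, and the \texttt{dim} case reduces to the induction hypothesis pointwise"), and isolating a lemma for $\mathrm{vec}$ at a product index before manipulating the sums is the right way to deal with the size casts.
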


\textbf{From local calculus to graph calculus.}
For each primitive operator (node kind), we prove a standard calculus fact: its forward-mode JVP coincides with applying
its Fr\'echet derivative to a tangent vector at the current input (and similarly, its VJP coincides with applying the adjoint
of that derivative to a cotangent seed). Combining these node-level facts with the SSA/DAG structure yields a global theorem
for any well-typed graph. In finite dimensions, the adjoint of a linear map is just the transpose with respect to the Euclidean
inner product; operationally, this is exactly what reverse-mode backprop computes.

\begin{BlueTheorem}{Backpropagation computes the adjoint derivative}{backprop-fderiv}
Let $G$ be a well-typed SSA/DAG graph over $\mathbb{R}$ and let $\mathsf{EvalVec}_G$ denote its vectorized denotation.
Assume the graph satisfies the local differentiability hypotheses for its primitives (i.e., \ttbr{GraphFDeriv\allowbreak Correct}\,$G$).
Then for any input $x$ and cotangent seed $\bar y$,
\[
\mathsf{VJP}_G(x;\bar y)
\;=\;
\bigl(D\,\mathsf{EvalVec}_G(x)\bigr)^\top\,\bar y,
\]
where $D\,\mathsf{EvalVec}_G(x)$ is the Fr\'echet derivative (Jacobian as a linear map) and $(\cdot)^\top$ denotes its adjoint
(transpose) with respect to the standard inner product.
\end{BlueTheorem}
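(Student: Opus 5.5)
The plan is to derive the theorem from Theorem~\ref{thm:alg-adjointness} (algebraic adjointness) and Theorem~\ref{thm:tensor-vector-bridge} (inner-product bridge). The remaining analytic input is a single claim: the forward-mode JVP, instantiated over $\mathbb{R}$, computes the Fr\'echet derivative of the graph denotation. Concretely, I would first prove
\[
\mathrm{vec}\bigl(\mathrm{JVP}_G(x;\delta x)\bigr) \;=\; D\,\mathsf{EvalVec}_G(x)\,\bigl(\mathrm{vec}(\delta x)\bigr)
\]
for all tangents $\delta x$. With this in hand, the target identity reduces to linear algebra. For every $\delta x$,
\[
\langle \mathrm{vec}(\delta x), \mathrm{vec}(\mathsf{VJP}_G(x;\bar y))\rangle
= \langle \delta x, \mathsf{VJP}_G(x;\bar y)\rangle_\Gamma
= \langle \mathrm{JVP}_G(x;\delta x), \bar y\rangle
= \langle D\,\mathsf{EvalVec}_G(x)\,\mathrm{vec}(\delta x), \mathrm{vec}(\bar y)\rangle .
\]
The first equality is the bridge, summed over the context components. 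The second is algebraic adjointness. The third is the bridge again combined with the JVP claim. The right-hand side equals $\langle \mathrm{vec}(\delta x), (D\,\mathsf{EvalVec}_G(x))^\top \mathrm{vec}(\bar y)\rangle$ by the definition of the adjoint in Euclidean space. Because $\mathrm{vec}$ is surjective (it has inverse $\mathrm{unvec}$), the two vectors agree when paired against every vector, so nondegeneracy of the inner product gives the claimed equality.

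The JVP claim itself I would prove by induction over the SSA node list in topological order. The strengthened invariant states that for every node $i$, the map $x\mapsto v_i(x)$, vectorized, has Fr\'echet derivative at $x$ given by $\delta x\mapsto \mathrm{vec}(dv_i)$, where $dv_i$ is the tangent computed by $\mathrm{JVP}_G$.
\begin{itemize}
\item \emph{Base cases.} For input nodes the map is a coordinate projection, which is linear, so \texttt{HasFDerivAt} follows immediately. For constant nodes the derivative is $0$, matching the initialization $dv=0$.
\item \emph{Inductive step.} Node $i$ applies $\mathrm{op}_i$ to its parents. By well-orderedness the parents have smaller ids, so the induction hypothesis gives \texttt{HasFDerivAt} for the tuple of parent values; combining the componentwise facts into the product space uses \texttt{HasFDerivAt.prod}. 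The hypothesis $\ttbr{GraphFDeriv\allowbreak Correct}\,G$ supplies \texttt{NodeFDerivCorrect} for $\mathrm{op}_i$ at the parent values: $\mathrm{op}_i$ has Fr\'echet derivative equal to the local $J_{\mathrm{op}_i}$ used in the JVP loop. The chain rule (\texttt{HasFDerivAt.comp}) then yields the invariant for node $i$.
\end{itemize}
Reading off the output node and applying uniqueness of Fr\'echet derivatives (\texttt{HasFDerivAt.fderiv}) gives the JVP claim with $D$ meaning \texttt{fderiv}.

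I expect the main obstacle to be bookkeeping rather than analysis. The graph evaluator threads a heterogeneous value table indexed by node ids and shapes, and this has to be related to a single Euclidean function of the flattened input context. Dependent-type casts also arise wherever shapes are compared. My first attempt would be to state the invariant over the whole prefix environment, vectorized as one product space, rather than per node. Appending a node then becomes composition with $(e\mapsto (e,\mathrm{op}_i(\pi(e))))$, where $\pi$ projects onto the parents' coordinates. This keeps each inductive step a single chain-rule application on a linear projection followed by the local op.

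Fan-out should need no separate argument. Cotangent accumulation is already handled on the VJP side by Theorem~\ref{thm:alg-adjointness}, and on the JVP side reusing a value is simply a projection that appears more than once.
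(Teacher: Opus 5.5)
Your proposal is correct and follows the paper's own argument: Stage~1 algebraic adjointness (Theorem~\ref{thm:alg-adjointness}) and the tensor--vector bridge (Theorem~\ref{thm:tensor-vector-bridge}) reduce the claim to showing that the real JVP computes the Fr\'echet derivative, and that fact is proved by induction over the SSA node list, extending a correct prefix by one locally-correct node via the chain rule. Your last step, pairing against arbitrary tangents and using nondegeneracy of the Euclidean inner product, spells out the paper's remark that the adjoint derivative, including fan-out accumulation, follows from the bilinear adjointness identity.
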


\noindent
\emph{Interpretation.} The theorem states that reverse-mode backprop is not merely an implementation heuristic: it computes
the mathematically correct cotangent propagation associated with the derivative of the graph denotation. For non-smooth or
domain-sensitive primitives (e.g., ReLU at $0$, $\log$ at nonpositive inputs), we use a pointwise variant with explicit side
conditions that rule out problematic points.

\begin{figure*}[t]
\centering
\begin{tikzpicture}[
  >=Latex,
  node distance=9mm and 12mm,
  scale=0.92, transform shape,          
  font=\footnotesize,                   
  box/.style={draw, rounded corners, align=center, inner sep=3.5pt},
  topbox/.style={box, text width=0.26\textwidth},  
  opbox/.style={draw, rounded corners, align=left, inner sep=3pt,
                font=\footnotesize, text width=0.26\textwidth},
  lab/.style={font=\scriptsize, fill=white, inner sep=0.8pt}
]

  \node[topbox] (alg) {
    \textbf{Algebraic layer}\\[-0.2em]
    $\alpha$-generic tensors\\
    $\langle \mathrm{JVP}_G(x;\delta x),\,\bar y\rangle
      =\langle \delta x,\,\mathrm{VJP}_G(x;\bar y)\rangle$
  };

  \node[topbox, right=of alg] (bridge) {
    \textbf{Tensor--vector bridge}\\[-0.2em]
    $\Tensor(\mathbb{R},s)\cong \mathbb{R}^{\mathrm{size}(s)}$\\
    $\langle a,b\rangle_s=\langle \mathrm{vec}(a),\mathrm{vec}(b)\rangle$
  };

  \node[topbox, right=of bridge] (global) {
    \textbf{Global theorem}\\[-0.2em]
    $\mathrm{VJP}_G(x;\bar y)
      =(D\,\mathsf{EvalVec}_G(x))^\top\,\bar y$
  };

  \node[opbox, below=16mm of alg] (linear) {
    \textbf{Linear / shape ops}\\[-0.2em]
    $\bullet$ affine / linear layers\\
    $\bullet$ matmul, conv2d\\
    $\bullet$ reshape/permute/reductions
  };

  \node[opbox, below=16mm of bridge] (elem) {
    \textbf{Elementwise ops}\\[-0.2em]
    $\bullet$ smooth: $\tanh,\sigma,\exp$\\
    $\bullet$ pointwise: ReLU (off $0$), $\log$ (on $>0$)\\
    $\bullet$ smooth surrogates: safe-$\log$, smooth-$|\,\cdot\,|$
  };

  \node[opbox, below=16mm of global] (spec) {
    \textbf{Reductions / special}\\[-0.2em]
    $\bullet$ softmax/log-softmax (curated)\\
    $\bullet$ losses: MSE, cross-entropy\\
    $\bullet$ normalization (curated)
  };

  \draw[->] (alg) -- node[lab, above, pos=0.45] {vectors} (bridge);
  \draw[->] (bridge) -- node[lab, above, pos=0.45] {calculus} (global);

  \draw[->, dashed] (linear.north) -- node[lab, left, pos=0.55] {proved} (alg.south);
  \draw[->, dashed] (elem.north)   -- node[lab, left, pos=0.55] {proved} (bridge.south);
  \draw[->, dashed] (spec.north)   -- node[lab, right, pos=0.55] {proved} (global.south);

\end{tikzpicture}
\vspace{-0.35em}
\caption{Two-stage autograd proof architecture with primitive coverage. Stage~1 proves adjointness between
$\mathrm{JVP}_G$ and $\mathrm{VJP}_G$ over abstract scalars; Stage~2 bridges tensors to Euclidean spaces and upgrades to the
Fr\'echet-derivative statement $\mathrm{VJP}_G(x;\bar y)=(D\,\mathsf{EvalVec}_G(x))^\top \bar y$.
Bottom row summarizes operator classes covered by local derivative lemmas; see Table~\ref{tab:ad-op-coverage} for details.}
\label{fig:fderiv-architecture}
\end{figure*}

\subsection{Node correctness and primitive operation proofs}
\label{app:torchlean-nodecorrect}

The global backprop theorem is proved once for SSA/DAG graphs by composing \emph{local} facts about each primitive.
Accordingly, each primitive operator (node kind) must supply a correctness lemma connecting its implemented JVP/VJP rules
to the mathematical derivative of its forward map. We use two variants: a \emph{global} form for primitives that are smooth
everywhere, and a \emph{pointwise} form for non-smooth or domain-sensitive primitives (handled in the next paragraph).

\begin{BlueTheorem}{Global node correctness (smooth primitives)}{node-fderiv-correct}
Let $\mathrm{op}$ be a primitive with forward map $f:\mathbb{R}^n\to\mathbb{R}^m$, and implemented tangent and cotangent
rules $\mathrm{JVP}_{\mathrm{op}}(x;\delta x)$ and $\mathrm{VJP}_{\mathrm{op}}(x;\bar y)$ (as used by the Stage~1 algorithms).
We say $\mathrm{op}$ is \emph{globally correct} if for every $x$ there exists a (continuous) linear map $D f(x)$ such that
\[
\mathrm{JVP}_{\mathrm{op}}(x;\delta x) \;=\; D f(x)\,\delta x \quad \text{and} \]
\[
\mathrm{VJP}_{\mathrm{op}}(x;\bar y) \;=\; \bigl(D f(x)\bigr)^\top \bar y
\]
for all tangents $\delta x$ and cotangents $\bar y$.
Equivalently, $\mathrm{JVP}_{\mathrm{op}}$ and $\mathrm{VJP}_{\mathrm{op}}$ satisfy the adjointness identity
$\langle \mathrm{JVP}_{\mathrm{op}}(x;\delta x),\bar y\rangle = \langle \delta x,\mathrm{VJP}_{\mathrm{op}}(x;\bar y)\rangle$
under the standard inner product.
\end{BlueTheorem}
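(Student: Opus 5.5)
The statement is essentially a definition, and the only real content is the closing ``Equivalently'' clause. I would read that clause as follows: once $\mathrm{JVP}_{\mathrm{op}}(x;\cdot) = Df(x)$ holds, i.e.\ the JVP clause is kept and only the VJP clause is replaced, the VJP clause is equivalent to the adjointness identity. The plan is to prove this equivalence for each fixed $x$, working in the tensor-shaped spaces and moving to $\mathbb{R}^n$, $\mathbb{R}^m$ only through Theorem~\ref{thm:tensor-vector-bridge}. That theorem identifies $\langle\cdot,\cdot\rangle_s$ with the Euclidean inner product after $\mathrm{vec}_s$. For an op with several inputs, the input space is a context $\Gamma$, and I would apply the bridge componentwise, since $\langle\cdot,\cdot\rangle_\Gamma$ is a sum of tensorwise dot products and $\mathrm{vec}$ of a context is the concatenation of the component vectorizations.

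\emph{Forward direction.} Assume $\mathrm{JVP}_{\mathrm{op}}(x;\delta x)=Df(x)\,\delta x$ and $\mathrm{VJP}_{\mathrm{op}}(x;\bar y)=(Df(x))^\top\bar y$. Then
\[
\langle \mathrm{JVP}_{\mathrm{op}}(x;\delta x),\bar y\rangle=\langle Df(x)\,\delta x,\bar y\rangle=\langle \delta x,(Df(x))^\top\bar y\rangle=\langle\delta x,\mathrm{VJP}_{\mathrm{op}}(x;\bar y)\rangle .
\]
The middle equality is the defining property of the adjoint. In mathlib this is the adjoint of a continuous linear map on a finite-dimensional inner product space, \texttt{ContinuousLinearMap.adjoint\_inner\_right}, transported along the bridge.

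\emph{Reverse direction.} Assume the JVP clause and the adjointness identity for all $\delta x,\bar y$. Fix $\bar y$ and set $r=\mathrm{VJP}_{\mathrm{op}}(x;\bar y)-(Df(x))^\top\bar y$. Subtracting the adjoint identity for $Df(x)$ from the hypothesis gives $\langle\delta x,r\rangle=0$ for every $\delta x$, using bilinearity of $\langle\cdot,\cdot\rangle_\Gamma$. Taking $\delta x=r$ gives $\langle r,r\rangle=0$. By positive-definiteness, which is immediate after vectorization via the bridge, $r=0$. Note that linearity of $\mathrm{VJP}_{\mathrm{op}}$ in $\bar y$ comes out as a consequence rather than being assumed.

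The main obstacle is bookkeeping, not analysis. First, $\langle\cdot,\cdot\rangle_\Gamma$ lives on shape-indexed heterogeneous tuples, so nondegeneracy and bilinearity must be lifted through both the tensor recursion on \texttt{Shape} and the context list. I would do this by induction on the shape and on the context list, reducing everything to Theorem~\ref{thm:tensor-vector-bridge} plus the Euclidean facts. Second, one must confirm that the transpose $(\cdot)^\top$ in the statement is the mathlib adjoint conjugated by $\mathrm{vec}$/$\mathrm{unvec}$. I would first prove a lemma that $\mathrm{vec}\circ T^\top\circ\mathrm{unvec}=(\mathrm{vec}\circ T\circ\mathrm{unvec})^\dagger$ and then derive both directions from it.
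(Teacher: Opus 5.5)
Your argument is correct, but the paper gives essentially no proof to compare it with. The paper presents the statement as a definition, the informal counterpart of the Lean predicate \texttt{NodeFDerivCorrect} (described in Table~\ref{tab:ad-op-coverage} as a \texttt{HasFDerivAt} fact). It asserts the ``Equivalently'' clause without argument and supports it only primitive by primitive: for the affine layer, $Df(x)=W$ comes from \texttt{hasFDerivAt\_affine}, and the JVP/VJP equalities reduce to linear-algebra identities. Your proof supplies the missing general argument.

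Your reading of the clause is forced rather than optional. $Df(x)$ is meant to be the derivative of $f$, but the adjointness identity never mentions $f$: the pair $\mathrm{JVP}_{\mathrm{op}}=\mathrm{VJP}_{\mathrm{op}}=0$ satisfies it for every $f$. So adjointness can replace only the VJP clause, not the whole definition.

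Your reverse direction is the single-node version of how the two-stage design upgrades Stage~1 adjointness to the Stage~2 statement: adjointness plus ``the JVP is the derivative'' yields ``the VJP is its adjoint'' by nondegeneracy. It also shows that once the node rules are known to be algebraically adjoint, a per-op proof needs only the JVP fact.

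One simplification: $f$ and $Df(x)$ are already stated on $\mathbb{R}^n$ and $\mathbb{R}^m$. The shape/context induction and the $\mathrm{vec}/\mathrm{unvec}$ conjugation lemma are therefore needed only to transport the result to the tensor-level rules. In the vectorized setting each direction is one line.
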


This is the exact local hypothesis required by the Stage~2 upgrade: once each node kind used in a graph satisfies the
above property, the previously stated graph theorem (Theorem~\ref{thm:rev-correct} / Theorem~\ref{thm:backprop-fderiv}) follows by SSA/DAG
composition.

\textbf{Example (affine layer).}
For $f(x)=Wx+b$, the derivative is the constant linear map $D f(x)=W$. Our implementation sets
$\mathrm{JVP}_{\mathrm{op}}(x;\delta x)=W\delta x$ and $\mathrm{VJP}_{\mathrm{op}}(x;\bar y)=W^\top \bar y$,
so the local theorem holds immediately. In Lean, the differentiability fact is discharged using the standard lemma that affine
maps have constant Fr\'echet derivatives (e.g., \texttt{hasFDerivAt\_affine}), after which the equalities above reduce to
linear-algebra identities.

\textbf{Pointwise correctness and kink points.}
For non-smooth or domain-sensitive primitives we use a pointwise predicate (our \ttbr{NodeFDeriv\allowbreak CorrectAt}): the same local condition as in Theorem~\ref{thm:node-fderiv-correct} holds, but only at a specific evaluation point $x$ under explicit side conditions that guarantee classical differentiability (e.g., ReLU requires all relevant coordinates $\neq 0$; $\log$ requires inputs $>0$; $\mathrm{inv}$/$\mathrm{div}$ require nonzero denominators; $\max/\min$ require no ties). \begin{BlueTheorem}{Pointwise node correctness (non-smooth primitives)}{node-fderiv-correct-at}
Let $\mathrm{op}$ be a primitive with forward map $f:\mathbb{R}^n\to\mathbb{R}^m$ and implemented rules
$\mathrm{JVP}_{\mathrm{op}}(x;\delta x)$ and $\mathrm{VJP}_{\mathrm{op}}(x;\bar y)$.
Fix a point $x$ and assume an explicit side condition $\mathcal{H}_{\mathrm{op}}(x)$ that guarantees classical
differentiability at $x$ (e.g., for ReLU: all relevant coordinates $\neq 0$; for $\log$: inputs $>0$; for $\max/\min$: no ties).
Then there exists a linear map $D f(x)$ such that for all $\delta x$ and $\bar y$,
\[
\mathrm{JVP}_{\mathrm{op}}(x;\delta x) \;=\; D f(x)\,\delta x
\qquad\text{and}\]\[
\mathrm{VJP}_{\mathrm{op}}(x;\bar y) \;=\; \bigl(D f(x)\bigr)^\top \bar y.
\]
Equivalently,
\[
\big\langle \mathrm{JVP}_{\mathrm{op}}(x;\delta x),\,\bar y \big\rangle
\;=\;
\big\langle \delta x,\,\mathrm{VJP}_{\mathrm{op}}(x;\bar y) \big\rangle .
\]
\end{BlueTheorem}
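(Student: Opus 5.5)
The statement is really a schema: one local lemma per primitive listed under ``Pointwise'' in Table~\ref{tab:ad-op-coverage} (\texttt{relu}, \texttt{abs}, \texttt{log}, \texttt{inv}, \texttt{sqrt}, \texttt{max\_elem}/\texttt{min\_elem}). I would prove it once, generically, for the elementwise case, and then instantiate. The common shape is $f = \mathrm{map}\,\phi$ (or $\mathrm{zipWith}\,\psi$ for $\max/\min$), applied coordinatewise on $\Tensor(\mathbb{R},s)$. The implemented rules are $\mathrm{JVP}(x;\delta x)_i=\phi'(x_i)\,\delta x_i$ and $\mathrm{VJP}(x;\bar y)_i=\phi'(x_i)\,\bar y_i$. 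Here $\phi'$ is the executable derivative convention, e.g.\ $\mathbf{1}[x_i>0]$ for ReLU, $1/x_i$ for $\log$, $-1/x_i^2$ for $\mathrm{inv}$, $1/(2\sqrt{x_i})$ for $\sqrt{\cdot}$, and $\mathrm{sign}(x_i)$ for $\mathrm{abs}$.

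\textbf{Step 1 (scalar facts).} For each $\phi$, I prove \texttt{HasDerivAt}~$\phi$~$(\phi'(t))$~$t$ under the scalar side condition $\mathcal{H}_\phi(t)$.
\begin{itemize}
\item For ReLU at $t\neq 0$, $\phi$ agrees on a neighbourhood of $t$ with either the identity or the constant $0$, so \texttt{HasDerivAt.congr\_of\_eventuallyEq} gives the result.
\item \texttt{abs} follows the same pattern.
\item $\log$ at $t>0$, $\mathrm{inv}$ at $t\ne0$, and $\sqrt{\cdot}$ at $t>0$ come directly from mathlib (\texttt{Real.hasDerivAt\_log}, \texttt{hasDerivAt\_inv}, \texttt{Real.hasDerivAt\_sqrt}).
\item For $\max(a,b)$ with $a\neq b$, the map locally coincides with a projection, giving partial derivatives $(\mathbf{1}[a>b],\mathbf{1}[a<b])$.
\end{itemize}
This is the step where the side condition is actually consumed. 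It is also where I must confirm that the executable kink convention matches the true derivative off the kink, which it does trivially.

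\textbf{Step 2 (lift to tensors).} Using the vectorization of Theorem~\ref{thm:tensor-vector-bridge}, I identify $f$ with a map $\mathbb{R}^n\to\mathbb{R}^n$ whose coordinates each depend on a single coordinate. I then apply \texttt{hasFDerivAt\_pi} together with composition with the continuous linear projections. This shows that $f$ has Fr\'echet derivative $D f(x)=\mathrm{diag}(\phi'(x_i))$ whenever $\mathcal{H}_\phi(x_i)$ holds for every $i$; this conjunction is exactly $\mathcal{H}_{\mathrm{op}}(x)$. The JVP equation is then definitional after unfolding $\mathrm{vec}$. For the VJP equation, the adjoint of a diagonal map is itself, which I verify via the inner-product characterisation $\langle D f(x)\delta x,\bar y\rangle=\sum_i\phi'(x_i)\delta x_i\bar y_i=\langle\delta x,\mathrm{VJP}(x;\bar y)\rangle$. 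Uniqueness of the adjoint (\texttt{ContinuousLinearMap.adjoint} with \texttt{ext\_inner\_left}) then turns that identity into the stated equality. The same computation gives the ``equivalently'' form of the theorem, and conversely Theorem~\ref{thm:node-fderiv-correct}'s argument shows that the adjointness identity plus the JVP equation recovers the VJP equation.

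\textbf{Step 3 (binary ops and bookkeeping).} For $\max/\min$ the input is a pair context. Here $D f(x)$ acts as $(\delta a,\delta b)\mapsto \mathbf{1}[a>b]\delta a+\mathbf{1}[a<b]\delta b$ coordinatewise, and the VJP distributes $\bar y$ to the selected argument. The adjoint check uses the context dot product $\langle\cdot,\cdot\rangle_\Gamma$ as a sum over both inputs.

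I expect the main obstacle to be the plumbing in Step~2 rather than the analysis. Transporting \texttt{HasFDerivAt} across the $\mathrm{vec}/\mathrm{unvec}$ equivalence for recursively defined shape-indexed tensors, and across heterogeneous input contexts, requires a reusable lemma. That lemma says that coordinatewise maps on $\Tensor(\mathbb{R},s)$ have diagonal derivatives, and I would prove it by induction on the shape $s$. Once it is in place, each primitive reduces to its one-line scalar fact from Step~1.
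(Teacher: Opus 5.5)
Your proposal is correct and follows essentially the same route as the paper. The paper handles each pointwise primitive (ReLU, abs, log, inv, sqrt, max/min) with a scalar derivative fact that consumes the side condition $\mathcal{H}_{\mathrm{op}}$, lifts it to tensors through the vectorization bridge by showing the elementwise derivative is diagonal, and obtains the VJP as the adjoint of that diagonal map, treating binary ops such as max/min the same way on the pair context. You also correctly note that the adjointness identity alone recovers the VJP equation only together with the JVP equation, which sharpens the statement's loose ``equivalently''.
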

At kink points where the Fr\'echet derivative is undefined, we do \emph{not} claim a calculus-level result; instead, the runtime still defines a deterministic executable backprop convention (as in mainstream autodiff systems) and we justify these rules at the \emph{algebraic} level via the Stage~1 adjointness law, treating them as part of the internal semantics. Establishing that such conventions correspond to valid generalized derivatives (e.g., Clarke subgradients) at the kink points encountered during optimization is a separate, more delicate direction beyond Fr\'echet-derivative correctness.

\textbf{Higher-order derivatives.}
Our proved autograd theorem is a first-derivative result: it characterizes reverse-mode as computing the adjoint Fr\'echet derivative of the forward denotation. We do \emph{not} obtain second derivatives by recursively applying this theorem to the backward pass, since the backward graph can introduce non-smooth, branch-dependent primitives (e.g., ReLU gates) that complicate a clean calculus-level story. In the PINN case studies, we instead bound $u'$ and $u''$ via derivative-aware bound propagation on the original operator-tagged forward graph (a directional first-derivative pass and a dedicated second-derivative pass for low-dimensional inputs over the smooth operator subset used in the demos), avoiding differentiation of the backward graph altogether. The runtime includes executable higher-order utilities (e.g., forward-over-reverse/HVPs via dual-number techniques), but full higher-order correctness requires extending the local node library and proof obligations to higher derivatives (including handling or excluding kinks), and then lifting the same SSA/DAG composition arguments to those higher-order rules.

\textbf{Primitive operation proofs.}
Figure~\ref{fig:fderiv-architecture} summarizes the local derivative lemmas that feed the global SSA/DAG theorem.
Rather than proving autograd correctness per model, we prove correctness \emph{per primitive} and compose these facts
along the graph. The proof work falls into three recurring patterns.

\textbf{(1) Linear primitives (affine maps).}
For primitives whose forward maps are affine/linear (e.g., \texttt{linear}, \texttt{matmul}, \texttt{conv2d}),
the derivative is the corresponding linear map, and the VJP is its adjoint (transpose), exactly matching standard
backprop rules. Conv2D is handled by viewing convolution as a linear operator on flattened tensors and proving that
our implemented backward rule coincides with the adjoint of that operator.

\begin{BlueTheorem}{Example: affine node rule}{ex:affine-node}
For $f(x)=Wx+b$, we have $D f(x)=W$ and $\mathrm{VJP}(x;\bar y)=W^\top \bar y$.
\end{BlueTheorem}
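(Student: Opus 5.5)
The plan is to derive both claims of Theorem~\ref{thm:ex:affine-node} from the fact that $f(x)=Wx+b$ is affine, and then identify the implemented VJP rule with the transpose via the adjointness characterization of Theorem~\ref{thm:node-fderiv-correct}.

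\textbf{Step 1: the derivative.} Write $f = L_W + c_b$, where $L_W(x)=Wx$ is a continuous linear map on $\mathbb{R}^n$ (automatic in finite dimensions) and $c_b$ is the constant map with value $b$. A continuous linear map is its own Fr\'echet derivative at every point, and a constant map has derivative zero. Additivity of \texttt{HasFDerivAt} then gives $Df(x)=W$ for every $x$, independent of $x$. In Lean this is the affine lemma already mentioned (\texttt{hasFDerivAt\_affine}), or alternatively \texttt{ContinuousLinearMap.hasFDerivAt} combined with \texttt{HasFDerivAt.add\_const}.

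\textbf{Step 2: the cotangent rule.} The implemented rule is $\mathrm{VJP}(x;\bar y)=W^\top\bar y$. By Theorem~\ref{thm:node-fderiv-correct}, it suffices to check
\[
\langle W\,\delta x,\bar y\rangle=\langle \delta x, W^\top \bar y\rangle
\]
for all $\delta x$ and $\bar y$. This is the finite-sum exchange $\sum_i\sum_j W_{ij}\delta x_j\bar y_i=\sum_j \delta x_j\sum_i W_{ij}\bar y_i$. Because the adjoint with respect to the Euclidean inner product is unique (the inner product is nondegenerate), this identity forces $(Df(x))^\top\bar y=W^\top\bar y$. The JVP rule $W\,\delta x$ matches $Df(x)\,\delta x$ directly from Step~1.

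\textbf{Main obstacle.} The substantive step is the representation bridge, not the calculus. The layer acts on shape-indexed tensors $\Tensor_\alpha(d_{\mathrm{in}})$, while the derivative lives on $\mathbb{R}^{d_{\mathrm{in}}}$, so both the linear-map identification and the dot product must be transported along $\mathrm{vec}$ and $\mathrm{unvec}$. For the dot product I would invoke Theorem~\ref{thm:tensor-vector-bridge}. For the map, I would prove $\mathrm{vec}(\mathrm{Linear}_{W,b}(\mathrm{unvec}\,x)) = Mx+\mathrm{vec}(b)$, where $M$ is the matrix read off from $W$, by extensionality over $\Fin(d_{\mathrm{out}})$ and unfolding the sum $\sum_{j<d_{\mathrm{in}}}$. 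Once that coordinate identity is in place, Steps~1 and~2 apply verbatim.
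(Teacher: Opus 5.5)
Your proposal is correct and follows the paper's route: $Df(x)=W$ comes from the standard affine-map lemma (\texttt{hasFDerivAt\_affine}, or equivalently your linear-plus-constant decomposition), after which the cotangent rule reduces to the linear-algebra identity $\langle W\delta x,\bar y\rangle=\langle \delta x,W^\top\bar y\rangle$. The paper does not redo the $\mathrm{vec}/\mathrm{unvec}$ transport for each node as you propose; it handles that once, uniformly, through its Stage~2 tensor--vector bridge, and your version is consistent with that.
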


\textbf{(2) Elementwise primitives (coordinatewise calculus).}
For coordinatewise nonlinearities (e.g., $\tanh$, sigmoid, $\exp$), we lift scalar derivatives to tensors/vectors by
showing that the derivative of the elementwise map is diagonal in the standard basis.
For non-smooth or domain-sensitive elementwise ops (e.g., ReLU, $\log$, $\sqrt{\cdot}$, $1/x$), we use the pointwise
variant away from kink/singularity points (as summarized in Table~\ref{tab:ad-op-coverage}).

\begin{BlueTheorem}{Example: elementwise lifting (informal)}{ex:elemwise}
If $f:\mathbb{R}\to\mathbb{R}$ is differentiable at each coordinate of $x\in\mathbb{R}^n$, then
$D(f^{\odot})(x)$ is the diagonal linear map with entries $f'(x_i)$.
\end{BlueTheorem}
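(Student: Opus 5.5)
The plan is to reduce the vector statement to the scalar one coordinate by coordinate and then assemble the pieces with mathlib's machinery for maps into a product. Write $f^{\odot}:\mathbb{R}^n\to\mathbb{R}^n$ for $(f^{\odot}(x))_i = f(x_i)$, and let $\Lambda_x$ be the candidate derivative, the diagonal linear map
\[
(\Lambda_x h)_i = f'(x_i)\,h_i .
\]
The goal is $\mathtt{HasFDerivAt}\;f^{\odot}\;\Lambda_x\;x$. Once we have it, the identification $D(f^{\odot})(x)=\Lambda_x$ follows from uniqueness of the Fr\'echet derivative.

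The key steps, in order, are these.

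\begin{enumerate}
\item \textbf{Pi characterization.} Since $\mathbb{R}^n$ is (definitionally, up to the \texttt{EuclideanSpace}/\texttt{Fin n} $\to\mathbb{R}$ identification used in the tensor--vector bridge of Theorem~\ref{thm:tensor-vector-bridge}) a finite product, we use the mathlib lemma \texttt{hasFDerivAt\_pi}. It states that a map into $\prod_i \mathbb{R}$ has derivative $L$ at $x$ iff each coordinate $x\mapsto f(x_i)$ has derivative $\pi_i\circ L$ at $x$. This reduces the goal to $n$ scalar-valued statements.

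\item \textbf{Chain rule for each coordinate.} The $i$-th coordinate is $f\circ\pi_i$, where $\pi_i$ is the continuous linear projection; its derivative is $\pi_i$ itself, via \texttt{ContinuousLinearMap.hasFDerivAt}. By hypothesis $f$ has derivative $f'(x_i)$ at $x_i$, which is \texttt{HasDerivAt}, converted to \texttt{HasFDerivAt} with the linear map $t\mapsto f'(x_i)t$. The chain rule (\texttt{HasFDerivAt.comp}) then gives the derivative $h\mapsto f'(x_i)h_i$. This is exactly $\pi_i\circ\Lambda_x$.

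\item \textbf{Transfer to tensors.} Transport the result back to tensors through $\mathrm{vec}_s/\mathrm{unvec}_s$. These are linear isometric equivalences, so composing with them preserves \texttt{HasFDerivAt}. Elementwise tensor maps commute with vectorization by definition of $\mathrm{vec}$, which we would prove by induction on the shape $s$.

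\item \textbf{Node hypotheses.} Read off the local hypotheses of Theorem~\ref{thm:node-fderiv-correct}. The implemented rules are $\mathrm{JVP}(x;\delta x)_i=f'(x_i)\delta x_i$ and $\mathrm{VJP}(x;\bar y)_i=f'(x_i)\bar y_i$. The first equals $\Lambda_x\delta x$ by step 2, and the second equals $\Lambda_x^\top\bar y$ because a diagonal map is self-adjoint. Self-adjointness is a one-line computation $\sum_i f'(x_i)\delta x_i\bar y_i$ on both sides of the inner product.
\end{enumerate}

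The main obstacle I expect is not analytic but bookkeeping. The hardest part is step 3: making the shape-indexed inductive tensor type line up with a \texttt{Fin}-indexed product so that \texttt{hasFDerivAt\_pi} applies. I would first try to prove the statement directly on $\mathrm{Fin}(\mathrm{size}(s))\to\mathbb{R}$. I would then prove a single lemma, $\mathrm{vec}_s(\mathrm{map}\,f\,t)=f^{\odot}(\mathrm{vec}_s\,t)$, by induction on $s$ (scalar case trivial, $\mathtt{dim}$ case by the flattening index arithmetic). That lemma confines all index juggling to one place.

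For the pointwise variant used for ReLU or $\log$, the same argument goes through verbatim. It only requires differentiability of $f$ at each $x_i$, which is precisely the side condition $\mathcal{H}_{\mathrm{op}}(x)$ of Theorem~\ref{thm:node-fderiv-correct-at}.
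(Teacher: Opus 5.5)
Your proposal is correct and takes the same route as the paper, which only sketches this step: the paper lifts the scalar derivative coordinatewise and observes that the derivative of the elementwise map is diagonal in the standard basis, and your reduction via \texttt{hasFDerivAt\_pi} plus the chain rule through the projections $\pi_i$ is exactly that argument made precise. Steps 3 and 4 go beyond what the statement requires. Since the paper differentiates only the vectorized denotation, step 3 needs just the identity $\mathrm{vec}_s\circ\mathrm{map}\,f\circ\mathrm{unvec}_s=f^{\odot}$ on $\mathrm{Fin}(\mathrm{size}(s))\to\mathbb{R}$, which your fallback already proves, and not an isometry between normed tensor spaces.
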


\textbf{(3) Reductions and normalization (coupled coordinates).}
Operations like softmax/log-softmax and normalization layers couple coordinates via a shared denominator or statistics,
so we prove their derivatives directly from their defining formulas and show that the resulting VJP matches the
implemented backward rule. Standard losses (e.g., MSE, cross-entropy) then follow by composition: we treat them as
maps from logits to a scalar and apply the chain rule on top of the network's derivative.

\begin{BlueTheorem}{Example: softmax differential (informal)}{ex:softmax}
Let $s=\mathrm{softmax}(x)$. Then $D\,\mathrm{softmax}(x)[dx]= s \odot \bigl(dx-\langle s,dx\rangle \mathbf{1}\bigr)$.
\end{BlueTheorem}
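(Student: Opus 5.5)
We prove the softmax differential formula by direct coordinate computation. Write $Z(x)=\sum_k e^{x_k}$ and $s_i(x)=e^{x_i}/Z(x)$. Since $Z>0$ everywhere, each $s_i$ is a quotient of smooth functions with nonvanishing denominator. Hence softmax is Fréchet differentiable at every $x$, and its derivative is determined by the partials. This is the ``global'' case of Theorem~\ref{thm:node-fderiv-correct}, so no side condition is needed.

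First I will compute the partials with the quotient rule. Using $\partial_j e^{x_i}=\delta_{ij}e^{x_i}$ and $\partial_j Z=e^{x_j}$, we get
\[
\partial_j s_i=\frac{\delta_{ij}e^{x_i}Z-e^{x_i}e^{x_j}}{Z^2}=s_i(\delta_{ij}-s_j).
\]
Applying this Jacobian to a tangent $dx$ gives
\[
\textstyle (D\,\mathrm{softmax}(x)[dx])_i=\sum_j s_i(\delta_{ij}-s_j)\,dx_j=s_i\bigl(dx_i-\sum_j s_j dx_j\bigr)=s_i\bigl(dx_i-\langle s,dx\rangle\bigr),
\]
which is the claimed formula $s\odot(dx-\langle s,dx\rangle\mathbf 1)$.

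In Lean, rather than assembling partials, I would build the \texttt{HasFDerivAt} statement compositionally:
\begin{itemize}
\item the exponential lifted coordinatewise has diagonal derivative $dx\mapsto e^{x}\odot dx$, by Theorem~\ref{thm:ex:elemwise};
\item the sum $Z$ is linear, so its derivative is $dx\mapsto\langle e^{x},dx\rangle$;
\item the inverse $1/Z$ has derivative $-Z^{-2}\langle e^{x},dx\rangle$, using $Z\neq0$;
\item the product rule for $e^{x}\cdot Z^{-1}$ then gives $e^{x}\odot dx/Z-e^{x}\langle e^{x},dx\rangle/Z^2$.
\end{itemize}
Rewriting $e^{x}/Z=s$ turns this expression into the claimed form.

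The main obstacle is bookkeeping rather than analysis. We must match the curated-axis tensor representation of softmax, indexed by \texttt{Fin} $n$, to the Euclidean vectorization of Theorem~\ref{thm:tensor-vector-bridge}, so that $\langle s,dx\rangle$ is the correct inner product. We must also show that the extensional equality of linear maps follows from a pointwise \texttt{funext} over coordinates. I would prove the coordinate identity first and then lift it through \texttt{vec}/\texttt{unvec}. Finally, I would note that the resulting map is self-adjoint, since its matrix $\mathrm{diag}(s)-ss^\top$ is symmetric. This gives the VJP rule $\bar y\mapsto s\odot(\bar y-\langle s,\bar y\rangle\mathbf 1)$, which discharges the node correctness obligation for softmax.
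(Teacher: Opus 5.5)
Your proposal is correct and follows the paper's approach. The paper also derives the softmax derivative directly from its defining formula with the shared denominator $Z$, then checks that the resulting VJP matches the implemented backward rule. In your version that VJP is $\bar y\mapsto s\odot(\bar y-\langle s,\bar y\rangle\mathbf 1)$, using the symmetry of $\mathrm{diag}(s)-ss^\top$. One wording fix: $Z$ is not itself linear. It is the linear sum map composed with the coordinatewise exponential, so $DZ(x)[dx]=\langle e^{x},dx\rangle$ comes from the chain rule, which is what your computation actually uses.
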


\noindent
Together, these primitive proofs populate the local hypothesis used by the global SSA/DAG theorem; the detailed
coverage matrix (global vs.\ pointwise) appears in Table~\ref{tab:ad-op-coverage}.

\subsection{Graph-level composition and the reverse-mode algorithm}
\label{app:torchlean-graphcompose}

This subsection explains how the \emph{graph-level} reverse-mode theorem (Theorem~\ref{thm:rev-correct}) is obtained from
the \emph{local} node lemmas (Theorems~\ref{thm:node-fderiv-correct} and~\ref{thm:node-fderiv-correct-at}) using SSA/DAG
composition. The proof is modular: once each primitive node kind is equipped with a correct JVP/VJP rule, correctness
lifts automatically to any well-typed graph built from those primitives.

\textbf{Induction over SSA/DAG structure.}
A well-typed SSA/DAG graph admits a topological evaluation order, so its denotation $\llbracket G\rrbracket$ can be seen as
a composition of node-level functions applied to previously computed values. The global proof proceeds by induction over the
graph (equivalently, over the node list in SSA order): assume a prefix graph is correct, then extend it by one locally-correct
node. The forward denotation of the extended graph is a composition, so its derivative follows by the chain rule, and the
adjoint derivative follows by reversing the order of composition. This is exactly what reverse-mode implements.

\textbf{Reverse-mode algorithm (VJP form).}
To connect the theorem statement to familiar autodiff intuition, we write the backpropagation procedure as a cotangent
propagation on the DAG. The symbol \emph{seed} (often written $\bar y$) denotes the cotangent supplied at the graph output:
it specifies which linear objective of the output we are differentiating. For scalar losses, the seed is typically $1$; for vector
outputs (e.g.\ logits), the seed can be any cotangent vector and yields the corresponding vector--Jacobian product.

\begin{lstlisting}[style=torchleanBox]
# Forward pass (topological order)
Eval_G(x):
  for node i in topo_order(G):
    v[i] = op_i(v[parents(i)])
  return v, v[out]

# Reverse pass (cotangent propagation)
VJP_G(x; seed):
  (v, _) = Eval_G(x)
  bar[out] = seed;  bar[others] = 0
  for node i in reverse(topo_order(G)):
    for p in parents(i):
      bar[p] += VJP_op_i(p, v[parents(i)], bar[i])
  return bar[input]      # gradients for inputs/params
\end{lstlisting}

\textbf{Why accumulation is correct on DAGs.}
If a node value is used by multiple downstream nodes, it receives multiple cotangent contributions in the reverse sweep.
The algorithm therefore \emph{adds} these contributions (the \texttt{+=} line). This is not an implementation detail: it is the
mechanism that makes reverse-mode correct for graphs with shared subexpressions. Algebraically, correctness follows from
the adjointness identity in Stage~1 (Theorem~\ref{thm:alg-adjointness}): the dot product is bilinear, so contributions from
different paths combine by summation.

\textbf{From VJP to gradients (the scalar-loss special case).}
Many presentations restrict to scalar losses $L:\mathbb{R}^n\to\mathbb{R}$ and identify reverse-mode with $\nabla L(x)$.
Our statement is more general: for $f:\mathbb{R}^n\to\mathbb{R}^m$ and a cotangent seed $\bar y\in\mathbb{R}^m$,
reverse-mode returns the adjoint derivative applied to the seed, $(D f(x))^\top \bar y$, i.e.\ a vector--Jacobian product
(Theorem~\ref{thm:rev-correct}). The scalar-loss gradient is recovered by taking $m=1$ and $\bar y=1$ (or by composing a
vector-valued network with a scalar loss and applying the chain rule). This formulation is essential for ML workloads where
intermediate quantities are tensor-valued (e.g.\ logits, features, attention blocks), but training and verification often require
derivatives of \emph{particular} scalar objectives derived from them.

\textbf{Parameter handling (typed parameter packs).}
Modern models have heterogeneous parameter packs: a linear layer carries a matrix and a bias vector, a Conv2D layer carries a
4D kernel and bias, and attention blocks carry multiple projection matrices. \TorchLean{} represents such packs as a
\emph{shape-indexed heterogeneous list} (a typed product) rather than a string-keyed map. A parameter pack is
typed by a list of shapes $[s_1,\dots,s_k]$ and contains one tensor of each corresponding shape; semantically this is the finite
product space $\Tensor(\alpha,s_1)\times\cdots\times\Tensor(\alpha,s_k)$. This choice makes parameters part of the same
typed interface as ordinary graph inputs: compilation treats them as additional inputs to the SSA/DAG graph, and reverse-mode
returns cotangents for \emph{both} data inputs and parameters in the same structured form. Optimizers then become
shape-preserving transformations on parameter packs (e.g., a pointwise update that zips parameters with their gradients),
so a mismatch between a parameter tensor and its gradient is ruled out by the typechecker rather than discovered at runtime.
By contrast, common ML runtimes store parameters in mutable, string-keyed containers for convenience, which pushes the
shape contract into informal conventions; in \TorchLean{}, the shape contract is intrinsic, and entire classes of ``wrong tensor
paired with wrong gradient'' bugs become unrepresentable.

\subsection{Core Definitions and IR Semantics}
\label{app:torchlean-coredefs}

We focus on the core definitions that the rest of \TorchLean{} builds on: (i) the \emph{shape-indexed} tensor
semantics used as the reference meaning for proofs and verification, (ii) the common scalar interface, and (iii) the
operator-tagged SSA/DAG IR. We include these excerpts because extending \TorchLean{} (adding ops, adding backends, or adding
verification rules) requires understanding the semantic core these components share.

\textbf{Shape-indexed tensors (why this representation).}
The most basic design decision is to make tensor shapes part of the \emph{type}. In mainstream ML systems, shapes are
runtime values and shape mismatches appear late (as runtime errors or, worse, silent broadcasting mistakes). In contrast,
we want the shape contract to be an invariant of the logic: if an expression typechecks, it \emph{cannot} be ill-shaped.
This yields cleaner theorem statements (no repeated ``shape matches'' premises) and makes the semantic core robust to
conventions that are easy to get wrong (layout, reshaping discipline, implicit broadcasting).

\emph{Why functional tensors rather than flat arrays?}
For the semantic layer, we prioritize \emph{proof ergonomics} and \emph{semantic clarity}. We represent a tensor structurally
as a total indexing function: a tensor of shape \texttt{dim n s} is literally a function \texttt{Fin n -> Tensor a s}. This makes
theorems about tensor programs follow the same recursion as the datatype. For example, commutativity of elementwise
addition is proved by induction on the shape: the scalar case is immediate, and the \texttt{dim} case reduces to the
induction hypothesis pointwise. With a flat array representation, such proofs require explicit index arithmetic and
bounds reasoning, which is significantly harder to automate and more brittle as the operator library grows. Finally, the
spec layer intentionally does \emph{not} commit to a concrete storage layout (row-major vs.\ column-major), so it remains a
stable reference meaning across execution backends.

\begin{lstlisting}[style=leanBox,caption={Core shape-indexed tensor definitions (semantic layer).},label={lst:shape-tensor-core}]
namespace Spec

/-- Shapes: scalar, or "n copies of a subshape". -/
inductive Shape where
  | scalar : Shape
  | dim    : Nat -> Shape -> Shape

/-- Shape-indexed tensors, represented structurally. -/
inductive Tensor (a : Type) : Shape -> Type where
  | scalar : a -> Tensor a .scalar
  | dim    : forall {n s}, (Fin n -> Tensor a s) -> Tensor a (.dim n s)
\end{lstlisting}

\textbf{How to read this definition.}
The constructor \texttt{Tensor.scalar} is a scalar value.
The constructor \texttt{Tensor.dim} stores an index function that returns the $i$-th subtensor. Thus:
\[
\Tensor(\alpha,\texttt{dim}\ n\ s)\ \equiv\ (\texttt{Fin}\ n \to \Tensor(\alpha,s)),
\]
i.e., a length-$n$ vector of subtensors of shape $s$.
This immediately gives canonical encodings of common ML shapes:
a vector of length $n$ is \texttt{dim n scalar}; a matrix $m\times n$ is \texttt{dim m (dim n scalar)};
and a batch of images (batch size $B$, channels $C$, height $H$, width $W$) is
\texttt{dim B (dim C (dim H (dim W scalar)))}.

\textbf{Why this pays off in proofs.}
Most tensor operations are defined by recursion on shape, so proofs follow the same structure. For instance, elementwise
addition on \texttt{dim n s} tensors is defined pointwise on the \texttt{Fin n} index and then recursively on $s$; extensional
equality reduces tensor equality to pointwise equality of index functions. This makes basic algebraic properties (associativity,
commutativity, distributivity) and shape-preservation properties easy to state and to prove.

\textbf{Efficiency note (bridging to array-backed execution).}
The functional tensor representation is chosen for the \emph{semantic reference} used by proofs and verification; it is not
intended as the fastest execution format. For executable workflows, \TorchLean{} additionally provides a materialized,
array-backed representation and a semantics-preserving bridge between the two, so programs can run efficiently while
still referring to the same underlying meaning. We discuss this representation, its compilation path, and the resulting
performance trade-offs in Section~C (and use it throughout the runtime experiments).

\textbf{Scalar polymorphism (one model, many semantics).}
Most specification-level definitions are polymorphic in the scalar type $a$, so the \emph{same} model code can be
interpreted over multiple numeric domains: $\mathbb{R}$ for proof level reference semantics, executable floating-point
domains for runtime demos, and abstract/rounded domains for sound bounds and error envelopes. To make this practical,
\TorchLean{} collects the numeric structure required by neural-network operators into a single typeclass,
\texttt{Context a}. Intuitively, \texttt{Context a} is ``the interface a scalar type must implement to run NN code.''

\begin{lstlisting}[style=leanBox,caption={Scalar interface for NN code (\texttt{Context}).},label={lst:context}]
class Context (a : Type) extends
  Add a, Sub a, Mul a, Div a, Neg a,
  LT a, LE a, One a, Zero a,
  MathFunctions a, Numbers a where
  decidable_gt : DecidableRel (fun x y : a => x > y)
\end{lstlisting}

This interface includes: (i) ring-like arithmetic (\texttt{Add/Sub/Mul/Div/Neg}) used throughout linear algebra and losses;
(ii) order structure (\texttt{LT/LE} and decidability) needed for conditionals and piecewise primitives (e.g., ReLU, \texttt{max/min});
(iii) constants (\texttt{Zero/One}) and numeric literals (\texttt{Numbers}); and (iv) transcendental functions
(\texttt{MathFunctions}, e.g., \texttt{exp}, \texttt{log}, \texttt{tanh}, \texttt{sqrt}, \texttt{sin}, \texttt{cos}).
By making the required scalar operations explicit, we can state and reuse theorems without committing to a particular
numeric representation: a lemma proved for \texttt{Context a} specializes uniformly to $\mathbb{R}$, executable float models,
and rounding/interval domains used by verification.

\textbf{Graph IR structure (why SSA/DAG).}
Our verification substrate is an operator-tagged SSA/DAG intermediate representation (IR) with explicit node kinds and output
shapes. The IR is intentionally simple: it is the \emph{single semantic target} reused across
execution, differentiation, and verification. More precisely, the IR is (i) \emph{operator-tagged} so each node carries an explicit
primitive identifier (an opcode) and any parameters needed to interpret it, and (ii) in \emph{SSA/DAG form} so each
intermediate value is defined exactly once and dataflow is acyclic.

\emph{Why a DAG?} Most verifier pipelines (IBP/LiRPA/CROWN) operate on feedforward computation graphs: sound bounds
are propagated along edges, which requires a well-defined evaluation order. A DAG gives exactly this: nodes admit a
topological order, so evaluation and bound propagation are deterministic and total. This also makes proofs modular:
semantic properties can be established by induction over the node list (SSA order), and verification passes can be defined
as simple forward (or forward+backward) sweeps over the same graph object. Cycles and implicit control flow, by contrast,
require separate fixpoint semantics (and additional invariants) for both execution and verification; \TorchLean{}
handles such behavior by \emph{reification} into finite graphs when needed (e.g., unrolling a fixed-length recurrent cell).

\emph{Why SSA?} SSA (static single assignment) ensures every intermediate value has a unique definition. This simplifies
both implementation and reasoning: backpropagation and certificate checking can attach metadata (values, bounds, dual
variables) to node IDs without ambiguity, and gradient contributions from multiple consumers are accumulated by summation
in the reverse sweep. SSA therefore makes ``what does this gradient/bound refer to?'' a structural property of the IR.

\textbf{Core definition.}
Each node records: (i) an operation kind, (ii) the IDs of its parent nodes (its inputs), and (iii) a declared output shape.
OpKinds carry any parameters required for interpretation (e.g., \texttt{conv2d} stores channel counts, kernel size, stride,
padding; \texttt{softmax} stores an axis).

\begin{lstlisting}[style=leanBox,caption={Operator tagged SSA/DAG IR (core excerpt).},label={lst:ir-core}]
inductive OpKind where
  | input
  | linear
  | relu
  | conv2d (inC outC kH kW stride padding : Nat)
  | softmax (axis : Nat)
  | layernorm (axis : Nat)
  | mse_loss

structure Node where
  id       : Nat
  parents  : List Nat
  kind     : OpKind
  outShape : Spec.Shape
\end{lstlisting}

\begin{table*}[!h]
\centering
\caption{High-level comparison between PyTorch \cite{pytorch} and \TorchLean{}.}
\label{tab:torchlean-vs-pytorch}
\rowcolors{2}{gray!10}{white}
\begin{tabular}{@{}>{\columncolor{gray!25}}p{0.20\textwidth}p{0.38\textwidth}p{0.38\textwidth}@{}}
\toprule
\rowcolor{gray!30}
\textbf{Aspect} & \textbf{PyTorch} & \textbf{\TorchLean{}} \\
\midrule
\rowcolor{blue!10}
Shapes & Dynamic runtime shapes; many errors are runtime exceptions. & \textcolor{teal!70!black}{\textbf{Shapes are part of types}} (\texttt{Tensor $\alpha$ s}); many mismatches are untypeable. \\
\rowcolor{white}
Dtypes & Many numeric and integer dtypes; mixed-dtype graphs are common. & \textcolor{teal!70!black}{\textbf{Scalar-polymorphic}} single-dtype graphs (one \(\alpha\) per run); integer indices handled via a separate non-differentiable channel (\texttt{NatRef}/\texttt{NatVecRef}) in sessions. \\
\rowcolor{blue!10}
Execution modes & Eager by default, with compilation/export toolchains (TorchScript, ONNX, AOT). & \textcolor{teal!70!black}{\textbf{Eager tape backend}} and \textcolor{teal!70!black}{\textbf{proof-linked compiled}} SSA/DAG backend share one semantics; the compiled graph is the verifier target by construction. \\
\rowcolor{white}
Autograd status & Widely tested and trusted, but not formally proved correct. & \textcolor{teal!70!black}{\textbf{Reverse-mode correctness theorem}} for well-typed SSA/DAG graphs; eager runs linked to proved graphs. \\
\rowcolor{blue!10}
Indexing & Tensor-valued indexing/slicing/gather/scatter across dtypes (e.g.\ \texttt{LongTensor}). & Typed indexing primitives and session-level Nat channels for labels/indices; not yet PyTorch-complete for tensor-valued integer indices. \\
\rowcolor{white}
Breadth/performance & Very broad op surface and ecosystem; highly optimized CPU/GPU kernels. & Curated op surface focused on verification-relevant primitives; extensible via a ``new op'' workflow; Lean execution prioritizes clarity/verification over performance. \\
\bottomrule
\end{tabular}
\end{table*}

\textbf{Typing and denotation.}
Well-typedness checks that each node's parent shapes match what its \texttt{OpKind} expects and that \texttt{outShape}
matches the primitive's output contract. The denotation $\llbracket G\rrbracket$ evaluates nodes in topological order
(which is well-defined because the graph is acyclic) and is total on well-typed graphs. This denotational semantics is the
object used by proofs (e.g., autograd correctness) and by verification passes (e.g., IBP/CROWN transfer rules): bounds and
certificates are interpreted against \emph{the same} primitive meanings that execution uses.

\textbf{Losses} 
Much of the machine learning literature phrases training in terms of a scalar loss \(L(\theta)\) and its gradient \(\nabla_\theta L\), but in a tensor-typed setting, we found it important to make the ``scalar'' part explicit: losses are tensor programs followed by reductions. This design choice makes the structure of losses clear and enables precise reasoning about gradients. In the library, losses are ordinary \TorchLean{} programs that compute a tensor-valued quantity and then reduce it to a scalar. We provide both a primitive scalar loss (\texttt{mse\_loss}) and a small \texttt{Loss} helper layer that mirrors common PyTorch conventions: MSE, cross-entropy/negative-log-likelihood variants (one-hot targets and index-based targets), and binary cross-entropy (including a stable ``with logits'' form), each with explicit reduction (\texttt{mean} or \texttt{sum}).

\textbf{Extending \TorchLean{} across the stack.} Adding a new op is a predictable, checkable workflow rather than an ad hoc engineering task. When we add a primitive to \TorchLean{}, we treat it as a complete commitment: it should have a spec meaning, a typing rule, and the transfer rules needed by theorems/verification. The workflow has five steps: (1) \textbf{Spec semantics:} Define the operation as a total function on shape-indexed tensors in the spec layer, with an explicit shape contract. If the operation is non-smooth or domain-sensitive (e.g.\ \(\log\), division, max), either define a safe/smoothed variant intended for theorem statements, or adopt the pointwise hypothesis style and document the required preconditions. (2) \textbf{IR support:} Add an op tag to the IR (\texttt{OpKind}) and define its typing rule (input and output shapes). This is what makes compilation and verification passes recognize the op uniformly. (3) \textbf{Autograd over \(\mathbb{R}\):} Provide local JVP/VJP rules and prove the local adjointness law (or the pointwise variant). Once this lemma exists, the global theorem applies to any graph using the op. (4) \textbf{Verification transfer rules:} For IBP, define a sound transfer function on boxes. For affine relaxations, either implement and prove the relaxation, or import bounds as certificates and check the certificate constraints in Lean. (5) \textbf{Numeric backends:} Decide which execution backends support the op (Float, IEEE32Exec), and (when relevant) add local rounding/error lemmas so the op participates in graph level NF bounds.

\subsection{\TorchLean{} vs.\ PyTorch}
\label{app:torchlean-vs-pytorch}
This comparison focuses on goals. PyTorch is an industrial execution framework optimized for throughput, hardware utilization, and ecosystem breadth; \TorchLean{} is a semantic interface whose main objective is to make training code and verifier-time artifacts coincide so that guarantees are stated about the \emph{executed} artifact. Accordingly, \TorchLean{} does not attempt to compete with PyTorch on industrial-scale throughput; it prioritizes a precise semantic link between user programs, IR graphs, runtime artifacts, and proofs. Optional CUDA kernels and fast runtime paths support larger examples, but the key contribution is the semantic substrate: performance-oriented backends can be added behind explicit contracts without changing what is being verified. \TorchLean{} is also not ``just an API'': the design choices directly enable the verification story. Compilation is a first-class path that lowers programs to a well-typed operator-tagged SSA/DAG IR, and eager execution records a tape that we prove corresponds to an equivalent well-typed IR graph (Theorem~\ref{thm:eager-graph-correspondence}). Once a graph exists, whether it was obtained eagerly or via lowering, the same semantic object is consumed by autograd theorems, bound-propagation passes, and certificate checkers.

\section{Modern workflow theory and semantics: CUDA, attention, RL, diffusion, probability, SSL, and model families}
\label{app:modern-workflows}

In this section, we spell out the contracts behind the workflow examples used in the paper. For each family, we name the Lean object, write the equation or recurrence it denotes, and state what Lean checks versus what comes from a runtime or external producer. Concretely, attention is a masked softmax equation, an RL rollout is a finite recurrence, diffusion is a schedule plus a sampler step, and self-supervised learning is represented by a finite objective over masks or views.

\subsection{CUDA/native FFI and runtime contracts}
\label{app:cuda-boundaries}

The CUDA path is an execution feature, not a second proof semantics. Lean's kernel checks theorem terms and definitions written in Lean; it does not inspect CUDA source code, GPU machine code, cuBLAS, cuFFT, driver behavior, memory allocation, or the order in which floating-point reductions are scheduled. In \TorchLean{}, a native kernel is therefore attached to a Lean operator whose denotation is already defined at the specification or IR level. The formal object remains the graph denotation; the native call is an optional implementation path.

For an operator tag $\tau$ with specification
\[
  \llbracket \tau \rrbracket_{\alpha} :
  \Tensor_{\alpha}(s_1) \times \cdots \times \Tensor_{\alpha}(s_k)
  \to \Tensor_{\alpha}(s),
\]
a CUDA-backed implementation is treated through a conformance statement of the form
\[
  \mathsf{cuda}_{\tau}(x_1,\ldots,x_k)
  \approx
  \llbracket \tau \rrbracket_{\texttt{Float32}}(x_1,\ldots,x_k),
\]
under explicit preconditions about dtype, shape, layout, device memory, determinism, and precision mode. When this relation is not proved in Lean, it is recorded as a runtime assumption and tested by conformance/regression checks. Thus a theorem about $\llbracket G\rrbracket$ does not become a theorem about arbitrary GPU execution unless the needed conformance hypothesis is supplied.

The implementation uses Lean's FFI discipline: Lean declarations expose an opaque runtime handle, while C/CUDA wrappers allocate device buffers, launch kernels, and copy results across the host/device boundary. The following schematic code shows the shape of the interface; the real implementation fixes the buffer layout and performs the corresponding runtime checks.

\begin{lstlisting}[style=leanBox,caption={Schematic Lean-side boundary for a native CUDA primitive.}]
namespace Runtime.Autograd.Cuda

opaque Buffer : Type

@[extern "torchlean_cuda_bmm"]
opaque Buffer.bmm : Buffer -> Buffer -> IO Buffer

/-- Runtime-side check before calling the native kernel. -/
def bmmChecked (a b : TensorMeta) (ha : Buffer) (hb : Buffer) : IO Buffer := do
  checkMatmulShapes a.shape b.shape
  checkDType a.dtype .float32
  checkDType b.dtype .float32
  Buffer.bmm ha hb

end Runtime.Autograd.Cuda
\end{lstlisting}

\begin{lstlisting}[language=C,basicstyle=\ttfamily\scriptsize,frame=single,caption={Schematic C/CUDA wrapper; memory safety and launch behavior are outside Lean's kernel.}]
extern "C" LeanObj torchlean_cuda_bmm(LeanObj a, LeanObj b) {
  CudaBuffer* A = unwrap_buffer(a);
  CudaBuffer* B = unwrap_buffer(b);
  CudaBuffer* C = allocate_like_matmul(A, B);
  cublasSgemmStridedBatched(/* handles, shapes, strides, A, B, C */);
  return wrap_buffer(C);
}
\end{lstlisting}

This pattern is used for dense and batched matrix products, reductions, broadcasting, tensor views, convolution/pooling and transposed-convolution kernels, FFT/spectral operators, softmax/log-softmax, normalization kernels, attention-oriented and fused-attention paths, gather/scatter utilities, positional/RoPE helpers, selective-scan-style sequence kernels, and other runtime paths used by the examples. Several backward and reduction kernels use floating-point accumulation. Because floating-point addition is not associative, atomic accumulation can be mathematically standard while still not being bit deterministic across schedules. \TorchLean{} therefore exposes deterministic-reduction modes for selected paths and treats remaining scheduling choices as part of the deployment boundary.

The same policy applies to fused attention. The Lean specification can state and prove that a formal FlashAttention operator denotes standard masked scaled-dot-product attention. That equality is a statement about the mathematical operator in Lean. It is not a verification of a particular CUDA implementation, tiled online-softmax schedule, HBM/SRAM traffic pattern, or GPU kernel binary.

\subsection{FlashAttention, masks, and fused-attention denotation}
\label{app:flashattention}

The attention specification is stated once, following the standard scaled-dot-product attention abstraction and the FlashAttention denotational contract used in the codebase \citep{dao2022flashattention}. For query, key, and value tensors
\(Q \in \alpha^{n_q \times d}\), \(K \in \alpha^{n_k \times d}\), \(V \in \alpha^{n_k \times d_v}\), and an optional Boolean mask \(M\), standard scaled dot-product attention is the relation
\begin{equation}
  \operatorname{SDPA}(Q,K,V,M)
  = \operatorname{softmax}_{M}\!\left(\frac{QK^{\top}}{\sqrt d}\right)V .
\end{equation}
The mask semantics are part of the specification. In the formal definition, masked entries contribute zero probability after the row-wise masked-softmax step; causal masks are therefore ordinary finite predicates over query/key positions, not a convention hidden inside a backend call.

FlashAttention is modeled as a fused route to the same denotation. A configuration record stores scheduling metadata such as query/key tile sizes, but these fields are intentionally ignored by the mathematical denotation: tiling changes how a backend may compute the row, not what function the row denotes. At the spec level, \TorchLean{} names an online/tiled operator and proves the central equality
\begin{equation}
  \operatorname{onlineSoftmaxTiledAttention}(\mathit{cfg},\mathit{ctx})
  =
  \operatorname{scaledDotProductAttention}(\mathit{ctx}).
\end{equation}
It also exposes the fused forward operator theorem
\begin{equation}
  \operatorname{flashAttention}(\mathit{cfg},\mathit{ctx})
  =
  \operatorname{scaledDotProductAttention}(\mathit{ctx})
\end{equation}
and the corresponding VJP/backward contract
\begin{equation}
  \operatorname{flashAttentionBackward}(\mathit{cfg},\mathit{ctx},\bar y)
  =
  \operatorname{scaledDotProductAttentionBackward}(\mathit{ctx},\bar y).
\end{equation}
These equalities are the formal content behind the paper's statement that ``FlashAttention equals ordinary attention'' in \TorchLean{}: the claim is a denotational equality between a fused specification operator and the standard masked-attention specification. It is deliberately not a proof that a particular CUDA source file implements Dao-style IO-aware tiling correctly. The native CUDA path is regression-tested against the composed attention path and is documented as an FFI trust boundary; the theorem surface stays inside the specification layer.

This split is useful for compiler style reasoning. A graph rewrite may replace a composed \(QK^{\top}\)-mask-softmax-\(PV\) subgraph by a fused attention operator only if both sides share the same typed attention context and mask convention. The theorem above is exactly the small equality such a rewrite needs.

\begin{lstlisting}[style=leanBox,caption={Schematic attention contract: the fused operator is checked against masked SDPA.},label={lst:attention-contract}]
structure AttentionCtx (a : Type) where
  Q K V : Tensor a _
  mask  : Option (Fin _ -> Fin _ -> Bool)
  scale : a

def sdpa (ctx : AttentionCtx a) : Tensor a _ :=
  maskedSoftmax (ctx.scale * (ctx.Q.matmul ctx.K.transpose)) ctx.mask
    |>.matmul ctx.V

def flashSpec (cfg : FlashCfg) (ctx : AttentionCtx a) : Tensor a _ :=
  -- The schedule fields in cfg affect the implementation, not the denotation.
  sdpa ctx

theorem flashSpec_eq_sdpa (cfg : FlashCfg) (ctx : AttentionCtx a) :
    flashSpec cfg ctx = sdpa ctx := rfl
\end{lstlisting}

\subsection{Reinforcement-learning semantics: returns, Bellman operators, MDPs, and GridWorld}
\label{app:rl-semantics}

RL adds an environment boundary to the usual model/runtime boundary, so the appendix follows the standard discounted-dynamic-programming vocabulary of Bellman operators, MDPs, returns, and generalized advantage estimation \citep{bellman1957dynamic,puterman1994markov,sutton2018reinforcement,schulman2015gae}. \TorchLean{} separates three levels: (i) pure rollout algebra over lists and finite tensors, (ii) Lean based MDPs and environments, and (iii) external simulators that emit trajectories through a checked artifact boundary. Runtime collectors, optimizers, replay buffers, logging, and optional CUDA execution live outside this pure layer.

\paragraph{Rollout algebra.}
For a terminal flag \(d\), define the continuation mask
\begin{equation}
  m(d) = \begin{cases}0,&d=\texttt{true},\\ 1,&d=\texttt{false}.\end{cases}
\end{equation}
The one-step backup, TD target, and TD residual are
\begin{align}
  B_{\gamma}(r,b,d) &= r + \gamma\,m(d)\,b, \\
  y_t &= r_t + \gamma\,m(d_t) V(s_{t+1}), \\
  \delta_t &= y_t - V(s_t).
\end{align}
For a reward list \(r_0,\ldots,r_{T-1}\), \texttt{discountedReturnsFrom} is the right-fold recurrence
\begin{equation}
  G_T=b,\qquad G_t=r_t+\gamma G_{t+1}.
\end{equation}
With done flags, the recurrence becomes
\begin{equation}
  G_t=r_t+\gamma m(d_t)G_{t+1},
\end{equation}
so terminal episodes reset the bootstrap. Generalized advantage estimation is formalized by the reverse recurrence
\begin{equation}
  A_t = \delta_t + \gamma\lambda m(d_t) A_{t+1},
\end{equation}
where each \texttt{AdvantageStep} stores \(r_t\), \(V(s_t)\), \(V(s_{t+1})\), and \(d_t\). The corresponding return target is recovered by \(R_t=A_t+V(s_t)\). These definitions are intentionally over lists rather than fixed-shape tensors, because real rollouts have variable length due to termination/truncation; fixed-horizon tensor variants can be built on top when needed.

\paragraph{Deterministic finite MDPs.}
The simplest MDP layer uses finite state and action types \(S=\mathrm{Fin}(n)\) and \(A=\mathrm{Fin}(m)\). A deterministic finite MDP consists of an initial state, a total transition/reward/termination function
\begin{equation}
  \operatorname{step}: S\times A \to S\times \mathbb R \times \{\texttt{terminated},\texttt{truncated}\},
\end{equation}
and a discount factor. For a value table \(v:S\to \alpha\), the state-action value induced by one step is
\begin{equation}
  Q_v(s,a)=r(s,a)+\gamma m(d(s,a))v(s'(s,a)).
\end{equation}
The policy Bellman operator and optimality operator are
\begin{align}
  (T^{\pi}v)(s)&=Q_v(s,\pi(s)),\\
  (T^*v)(s)&=\max_{a\in A}Q_v(s,a),
\end{align}
where the finite maximum is implemented with \texttt{Finset.sup'} over a nonempty finite action space. Value functions and action-value tables are typed tensors, so these operators can be used alongside the same tensor infrastructure as ordinary neural models.

\paragraph{Finite stochastic MDPs.}
The finite stochastic layer replaces the deterministic successor with a typed transition row \(P(\cdot\mid s,a)\). A validity record requires nonnegative entries, rows summing to one, and \(0\leq\gamma<1\). The expected next value and Bellman action value are
\begin{align}
  \mathbb E[v(s')\mid s,a]
    &= \sum_{s'\in S} P(s'\mid s,a)v(s'),\\
  Q_v(s,a)
    &= r(s,a)+\gamma m(d(s,a))\sum_{s'\in S}P(s'\mid s,a)v(s').
\end{align}
This layer gives the proof and runtime code a finite tensor representation for stochastic dynamics while avoiding the full measure-theoretic overhead when the state space is finite.

\begin{lstlisting}[style=leanBox,caption={Schematic finite MDP and Bellman operator.},label={lst:finite-mdp-contract}]
structure FiniteMDP (S A : Type) where
  reward : S -> A -> Float
  trans  : S -> A -> S -> Float      -- row-stochastic transition table
  done   : S -> A -> Bool
  gamma  : Float

def expectedNext (M : FiniteMDP S A) (v : S -> Float) (s : S) (a : A) : Float :=
  Finset.univ.sum (fun s' => M.trans s a s' * v s')

def bellmanPi (M : FiniteMDP S A) (pi : S -> A) (v : S -> Float) (s : S) : Float :=
  let a := pi s
  M.reward s a + M.gamma * contMask (M.done s a) * expectedNext M v s a
\end{lstlisting}

\paragraph{Measure-theoretic Markov-kernel MDPs.}
For continuous or general measurable spaces, \TorchLean{} defines an MDP with measurable state space \(S\), action space \(A\), and a mathlib Markov kernel \(\kappa:S\times A\rightsquigarrow S\). The expected next value is the integral
\begin{equation}
  \mathbb E[v(s')\mid s,a] = \int_S v(s')\, d\kappa(s,a)(s'),
\end{equation}
and the Bellman policy operator is
\begin{equation}
  (T^{\pi}v)(s)=r(s,\pi(s))+
  \gamma m(d(s,\pi(s)))\int_S v(s')\,d\kappa(s,\pi(s))(s').
\end{equation}
The validity assumptions require \(\kappa\) to be a Markov kernel, reward and termination to be measurable, and \(0\leq\gamma<1\). The corresponding proof layer states standard discounted Bellman facts for bounded value functions, including monotonicity and contraction in the sup metric:
\begin{align}
  \|T^{\pi}v-T^{\pi}w\|_{\infty}
    &\leq \gamma\|v-w\|_{\infty},\\
  \|T^*v-T^*w\|_{\infty}
    &\leq \gamma\|v-w\|_{\infty}.
\end{align}
Because \(\gamma<1\), these contraction theorems support uniqueness of the fixed point for the Bellman operator under the stated metric assumptions.

\paragraph{GridWorld as a Lean based environment.}
The concrete GridWorld uses coordinate states \((\mathrm{row},\mathrm{col})\), four actions (up, down, left, right), border-clamped transitions, a goal cell, and a discount factor. Reaching the goal returns reward \(0\) and terminates; otherwise a transition receives reward \(-1\). The environment has three views: a pure \texttt{Env} view with explicit latent state, a deterministic finite MDP view obtained by flattening \((\mathrm{row},\mathrm{col})\) to \(\mathrm{Fin}(h\cdot w)\), and a finite stochastic view whose transition rows are one-hot. This gives the paper a small RL case where environment semantics, Bellman operators, value tables, and rollout summaries all live inside Lean.

\paragraph{External simulators.}
For CartPole-, Pong-RAM-, or other Gymnasium-style workflows, the simulator is not hidden inside a theorem. Python or another runtime produces observations, actions, rewards, log probabilities, and done/truncation flags. Lean checks the rollout schema and then consumes the artifact for losses or graph properties. Thus the verified object is a policy/value graph property, a Bellman/return calculation, or a checked rollout contract; the external simulator remains a named producer.

\subsection{Diffusion, sampling, probability-flow ODEs, and sampler stability}
\label{app:diffusion-semantics}

Diffusion models exercise a different kind of semantics from classifiers, so we separate deterministic sampler equations from probability-law statements in the style of DDPM, DDIM, and score/probability-flow formulations \citep{ho2020ddpm,song2020denoising,song2021score}. The objects of interest are schedules, noising maps, denoisers, reverse transitions, samplers, and sometimes probability laws. \TorchLean{} keeps these as scalar-polymorphic specification definitions so they can be evaluated with \texttt{Float}, executable finite-precision models, intervals, or \(\mathbb R\), depending on the theorem or experiment.

\paragraph{Discrete VP schedules and forward noising.}
A discrete variance-preserving schedule of length \(T\) stores per-step variances \(\beta_t\). The specification defines
\begin{align}
  \alpha_t &= 1-\beta_t,\\
  \bar\alpha_0 &= 1,\\
  \bar\alpha_{t+1} &= \bar\alpha_t\alpha_t.
\end{align}
The forward noising map \texttt{qSample} is the pure, total tensor function
\begin{equation}
  q_t(x_0,\varepsilon)
  = \sqrt{\bar\alpha_t}\,x_0
    + \sqrt{1-\bar\alpha_t}\,\varepsilon.
\end{equation}
At this spec level \(\varepsilon\) is an explicit tensor input. The probabilistic interpretation \(\varepsilon\sim\mathcal N(0,I)\) belongs either to the probability theory layer or to the runtime sampler that produces noise.

\begin{lstlisting}[style=leanBox,caption={Schematic diffusion schedule and forward noising map.},label={lst:diffusion-contract}]
structure VPSchedule where
  beta     : Nat -> Float
  alphaBar : Nat -> Float

def qSample (sch : VPSchedule) (t : Nat)
    (x0 eps : Tensor Float s) : Tensor Float s :=
  let c0 := sqrt (sch.alphaBar t)
  let c1 := sqrt (1.0 - sch.alphaBar t)
  Tensor.map2 (fun x e => c0 * x + c1 * e) x0 eps
\end{lstlisting}

\paragraph{Epsilon prediction and DDPM loss.}
A denoising model is represented by an epsilon-prediction function
\begin{equation}
  \varepsilon_{\theta}: (x,t)\mapsto \widehat\varepsilon.
\end{equation}
The standard epsilon-prediction training loss is a named wrapper around mean squared error:
\begin{equation}
  \mathcal L_{\varepsilon}(\theta;x_0,t,\varepsilon)
  =\left\|\varepsilon_{\theta}(q_t(x_0,\varepsilon),t/T)-\varepsilon\right\|_2^2.
\end{equation}
This does not require a special loss semantics: it reuses the existing tensor MSE definition but gives the diffusion convention a domain-specific name.

\paragraph{DDPM reverse steps.}
The epsilon-parameterized reconstruction is
\begin{equation}
  \widehat x_0(x_t,t)
  = \frac{x_t-\sqrt{1-\bar\alpha_t}\,\varepsilon_{\theta}(x_t,t/T)}{\sqrt{\bar\alpha_t}},
\end{equation}
with totalized safe division and nonnegative square-root wrappers in the executable definition. The reverse DDPM step \(x_t\mapsto x_{t-1}\) uses
\begin{align}
  \mu_t(x_t) &=
  \frac{1}{\sqrt{\alpha_t}}
  \left(x_t-\frac{\beta_t}{\sqrt{1-\bar\alpha_t}}
  \varepsilon_{\theta}(x_t,t/T)\right),\\
  \widetilde\beta_t&=
  \frac{1-\bar\alpha_{t-1}}{1-\bar\alpha_t}\beta_t,\\
  x_{t-1}&=\mu_t(x_t)+\sqrt{\widetilde\beta_t}\,z_t.
\end{align}
The full sampler is a right fold over the finite time indices with an explicit noise stream \(z_t\), so stochasticity is not implicit in the recursive definition.

\paragraph{DDIM and deterministic sampler semantics.}
The deterministic \(\eta=0\) DDIM step reuses the same \(\widehat x_0\) reconstruction and recomposes the previous sample by
\begin{equation}
  x_{t-1}
  =\sqrt{\bar\alpha_{t-1}}\,\widehat x_0
  +\sqrt{1-\bar\alpha_{t-1}}\,\varepsilon_{\theta}(x_t,t/T).
\end{equation}
\TorchLean{} also exposes DDIM as a real-valued \texttt{DynamicalSystem} step. This is useful because the same trajectory, fixed-point, Lipschitz, and contraction vocabulary used for state-space models can be applied to sampler transitions.

\paragraph{Probability-flow ODE.}
For continuous time, a linear VP schedule stores \(\beta_0,\beta_1\) and defines
\begin{align}
  \beta(t)&=\beta_0+t(\beta_1-\beta_0),\\
  \bar\alpha(t)&=\exp\!\left(-\beta_0t-\frac12(\beta_1-\beta_0)t^2\right),\\
  \sigma(t)&=\sqrt{1-\bar\alpha(t)}.
\end{align}
Under an epsilon parameterization, the probability-flow ODE right-hand side is specified as
\begin{equation}
  f_{\theta}(x,t)
  = -\frac12\beta(t)x
    + \frac{\beta(t)}{\sigma(t)}\varepsilon_{\theta}(x,t).
\end{equation}
The Euler step is the simple total tensor operation
\begin{equation}
  E_{\Delta t}(x,t)=x+\Delta t\, f_{\theta}(x,t),
\end{equation}
and the sampler integrates backward on a uniform grid with negative \(\Delta t\). A fixed-time Euler step is exposed as a \texttt{DynamicalSystem}, giving samplers a common interface with other discrete dynamical systems in \TorchLean{}.

\paragraph{Sampler boundary and stability theorems.}
The sampler proof layer records small facts that are important for checkers and model transformations. Zero-step samplers return the initial sample. DDIM and PF-ODE system adapters reduce definitionally to the corresponding step functions. For Euler integration, a generic metric bound has the form
\begin{equation}
  \|E_{\Delta t}(x)-E_{\Delta t}(y)\|_2
  \leq \|x-y\|_2 + |\Delta t|\,\|f(x)-f(y)\|_2.
\end{equation}
If the right-hand side is \(L\)-Lipschitz, then one Euler step is \((1+|\Delta t|L)\)-Lipschitz. The same adapter pattern transports Lipschitz or contractive-step facts to DDIM/PF-ODE systems. These theorems are deliberately modest: they do not prove convergence of a trained denoiser, but they make sampler transformations and verification targets mathematically explicit.

\subsection{Probability-theory layer for Gaussian forward processes}
\label{app:probability-semantics}

The pure diffusion specification treats noise as an explicit input. The probability theory layer adds the mathematical law when the noise is Gaussian. In a finite-dimensional Euclidean space \(E\), let \(Z\sim\mathcal N(0,I)\) and define
\begin{equation}
  X_t = c_0 x_0 + c_1 Z,
  \qquad
  c_0=\sqrt{\bar\alpha_t},\quad c_1=\sqrt{1-\bar\alpha_t}.
\end{equation}
The formal object is the push-forward of the standard Gaussian measure under the affine map \(z\mapsto c_0x_0+c_1z\). \TorchLean{} proves that this forward law is a probability measure and records the corresponding Gaussian structure for the affine image. Keeping this layer separate is important: tensor samplers can remain total and executable, while probabilistic claims are made only where the relevant measure-theoretic assumptions are available.

This is also the template for future probabilistic generative models. A VAE, normalizing-flow, or stochastic-diffusion theorem should distinguish (i) the deterministic tensor map used by the runtime, (ii) the distribution over the random source, and (iii) the theorem connecting the push-forward or transition kernel to the intended probabilistic model.

\subsection{Self-supervised objective algebra and anti-collapse guards}
\label{app:ssl-semantics}

Self-supervised learning is formalized at the objective contract level, covering MAE, JEPA, VICReg, and Barlow-style objective families \citep{he2022mae,assran2023ijepa,bardes2022vicreg,zbontar2021barlow}. We do not try to prove that self-supervised training discovers useful representations. Here we record the finite semantics of masks, views, target branches, predictive losses, and geometry guards, so an experiment or theorem cannot silently change the objective being optimized.

\paragraph{Finite masks and masked loss.}
For \(n\) patches or tokens, a mask is a predicate \(m:\mathrm{Fin}(n)\to\mathrm{Bool}\). The selected-index form is a list \(I=[i_1,\ldots,i_k]\), and the generic masked loss is
\begin{equation}
  \operatorname{maskedLoss}(I,\ell)=\sum_{i\in I}\ell(i).
\end{equation}
Lean proves the elementary list laws needed to treat the serialized mask as a finite objective: nil is zero, cons adds one term, appending index lists adds losses, reversing the index list preserves the sum, and if every selected per-index loss is zero then the masked loss is zero.

\begin{lstlisting}[style=leanBox,caption={Schematic masked objective used by MAE and JEPA-style losses.},label={lst:ssl-mask-contract}]
def maskedLoss (idxs : List (Fin n)) (lossAt : Fin n -> Float) : Float :=
  idxs.foldr (fun i acc => lossAt i + acc) 0.0

-- The concrete proof is by induction on xs.
theorem maskedLoss_append (xs ys : List (Fin n)) (lossAt : Fin n -> Float) :
    maskedLoss (xs ++ ys) lossAt = maskedLoss xs lossAt + maskedLoss ys lossAt
\end{lstlisting}

\paragraph{MAE.}
For a finite patch batch \(x:\mathrm{Fin}(n)\to\mathrm{Patch}\), predictions \(p_i\), and per-patch loss \(\ell\), the MAE objective is
\begin{equation}
  \operatorname{MAE}(I,x,p)=\sum_{i\in I}\ell(x_i,p_i).
\end{equation}
The formal theorems state that this loss decomposes over appended mask lists, is invariant under reversal of \(I\), is zero when every selected patch loss is zero, and that identity decoding gives exact reconstruction. These are small facts, but they pin down the intended semantics: masked reconstruction is a set-like objective over selected patches, not an artifact of an arbitrary serialization order.

\paragraph{JEPA and target-branch extensionality.}
JEPA-style objectives separate a context representation \(c\), target representations \(t_i\), a predictor \(p(c,i)\), and a representation loss \(d\). The finite objective is
\begin{equation}
  \operatorname{JEPA}(I,c,t,p)=\sum_{i\in I}d(t_i,p(c,i)).
\end{equation}
The target branch is modeled as an ordinary value at the objective boundary, matching the stop-gradient design intent. The theorem \texttt{jepaLoss\_target\_ext} states that if two target branches agree on selected indices, then the loss is identical. Thus unselected target coordinates cannot influence the objective.

\paragraph{Predictive-view contract.}
The generic contract unifies MAE and JEPA.  A \texttt{PredictiveViewContract} stores target indices \(I\), a context value, a target value at every finite index, a target encoder, a predictor, a per-index distance, and a nonnegative geometry guard. The objective is
\begin{equation}
  \operatorname{Obj}(C)=
  \operatorname{PredictiveLoss}(C)+\operatorname{GeometryGuard}(C).
\end{equation}
MAE is recovered by choosing the identity target encoder into patch/pixel space and zero geometry guard. JEPA is recovered by choosing the latent target representation as the target space. Lean proves both instance theorems: the predictive-view loss/objective reduces definitionally to the MAE or JEPA finite objective under these choices.

\paragraph{VICReg, Barlow-style guards, and view graphs.}
Geometry guards are added orthogonally to the prediction term. VICReg-style summaries include invariance, variance-floor, and covariance/redundancy components with weights \((\lambda,\mu,\nu)\). Barlow-style summaries penalize off-diagonal redundancy and diagonal mismatch. The current finite theory records local anti-collapse facts: collapsed coordinates pay a positive variance-floor penalty under a positive margin, the identity correlation summary has zero redundancy loss, and collapsed diagonal summaries incur a positive Barlow-style penalty. A view-graph energy layer represents finite alignment objectives
\begin{equation}
  E_G(z)=\sum_{(i,j)\in E(G)}\|z_i-z_j\|^2,
\end{equation}
which is zero for collapsed representations but becomes useful only when paired with a spread or variance guard. This makes the common SSL distinction precise: alignment alone permits collapse; the geometry guard is the formal object that rules it out locally.

\subsection{Sequence, generative, scientific, and diagnostic model families}
\label{app:model-zoo}

The examples now cover more than the verifier core. The sequence layer includes causal attention, multi-head attention, RoPE-style positional rotations, token/embedding boundaries, generation loops, recurrent layers, and selective-scan primitives for Mamba/S4-style state-space models. The central semantic issue is causality: future tokens should not change prefix outputs, cache offsets must agree with token positions, and chunked or fused scans should refine the same recurrence. \TorchLean{} exposes these as formal contracts and as regression tests. For state-space models, scan composition lemmas connect optimized block scans to sequential recurrence, and prefix/non-anticipation lemmas state that appending future inputs preserves emitted prefix outputs.

The generative and scientific layers use the same tensor language for different mathematical roles. Diffusion examples expose schedules, denoisers, sampler states, and reverse transitions. VAE/VQ-VAE/GAN-style examples expose latent variables, diagonal-Gaussian or KL terms, finite codebooks, nearest-code quantization, reconstruction losses, and generator/discriminator objectives. FNO/PINN examples use spectral operators, derivatives, and residual constraints; spline/ODE and Arb-backed workflows use external numerical producers followed by Lean-side certificate replay. We include these examples to show that modern ML workflows can be assigned named semantic objects that are suitable for later checking; the section is not a claim about leaderboard model quality.

The diagnostic layer makes semantic boundary failures explicit. The Bug Zoo covers cases such as causal-mask leakage, KV-cache/RoPE position drift, stable-loss boundaries, tokenizer/embedding mismatches, batch-invariance bugs, normalization-state bugs, Float32 edge cases, padding conventions in native kernels, and certificate/IR import mismatches. Each bug-shaped test is small on purpose: it isolates a boundary where a conventional ML system can silently change semantics. In \TorchLean{}, the corresponding property can be stated as a mask theorem, prefix-invariance check, schema check, shape invariant, finite-precision contract, or certificate replay obligation.

\section{Numerical Semantics}
\label{app:trust}

\paragraph{Motivation and trust boundaries.}
\label{app:trust-motivation}
We make numerical semantics explicit because it is easy to prove theorems about a real-valued model and then silently
execute a Float32 implementation whose behavior differs at exactly the corner cases that matter for verification
(rounding, overflow/underflow, NaN/Inf propagation, signed zeros, and library conventions).
IEEE~754 is the de facto standard for floating-point arithmetic: it specifies binary/decimal formats, rounding rules,
and exception behavior (including NaNs/Infs and their default handling).
In Lean, however, the built-in runtime floating-point types are \emph{opaque to the kernel}:
they are intended for computation and are implemented by external runtime code rather than reducible definitions in the logic \cite{leanfloatref}.
As a result, we treat the scalar type $\alpha$ as an explicit parameter and require every theorem or executable demo to
declare which numeric semantics it is using; this turns ``what arithmetic are we reasoning about?'' into part of the
statement rather than an implicit convention.

\textbf{IEEE~754 (what it standardizes).}
IEEE~754 defines floating-point numbers as signed significands with bounded exponents (e.g., binary32/Float32),
together with \emph{rounding modes} (typically round-to-nearest-even) and \emph{exceptional values} such as $\pm\infty$
and NaNs. It also specifies default behaviors for exceptional operations (e.g., division by zero, invalid operations)
and comparison/order conventions. For verification, many ``real-analysis proofs'' do not apply
verbatim in IEEE arithmetic: operations are rounded, may overflow/underflow, and may produce non-finite values whose
propagation rules are part of the semantics.

\textbf{Motivation from Flocq.}
Our design is inspired by Flocq, a mature Coq library for reasoning about floating-point arithmetic.
Flocq cleanly separates (i) the \emph{format} (which numbers are representable: radix, exponent bounds, subnormals)
from (ii) the \emph{rounding operator} (how an exact real result is mapped to a representable value), and it provides
theorem infrastructure for compositional error bounds (\citealp{boldo2011flocq}).
We adopt this same separation because neural network verification needs both:
\emph{(a)} theorem-friendly ``round-on-$\mathbb{R}$'' models to state and compose numerical error envelopes,
and \emph{(b)} executable, bit-level semantics to make corner cases (NaNs, signed zeros, overflow) concrete when running
full demos and checkers.

\begin{tcolorbox}[
  colback=black!2, colframe=black!35, boxrule=0.6pt, arc=2pt,
  left=6pt,right=6pt,top=5pt,bottom=5pt
]
\textbf{Summary: four numeric layers in \TorchLean{}.}
We provide (i) \emph{trusted runtime floats} for speed, (ii) an \emph{executable IEEE-754 model} for explicit Float32 behavior,
(iii) \emph{proof level rounding models} for compositional error reasoning, and (iv) an \emph{external validated-numerics oracle}
for rigorous high-precision enclosures of difficult primitives (notably transcendentals).
\end{tcolorbox}

\begin{tcolorbox}[
  colback=blue!3, colframe=blue!55!black, boxrule=0.6pt, arc=2pt,
  left=6pt,right=6pt,top=5pt,bottom=5pt
]
\textbf{(A) Runtime execution: \texttt{Float}/\texttt{Float32} (fast, but opaque).}
Lean exposes \texttt{Float} (binary64) and \texttt{Float32} (binary32) as runtime types intended for efficient execution.
Their operations are implemented by external/runtime code (e.g., compiled to underlying machine/C operators) rather than
by kernel-reducible definitions, so we treat them as an \emph{explicit trust boundary} for proofs \cite{leanfloatref}.
\end{tcolorbox}

\begin{tcolorbox}[
  colback=purple!3, colframe=purple!65!black, boxrule=0.6pt, arc=2pt,
  left=6pt,right=6pt,top=5pt,bottom=5pt
]
\textbf{(B) Executable Float32 semantics: \texttt{IEEE32Exec} (bit-level IEEE).}
For full demos that require explicit Float32 behavior, we implement \texttt{IEEE32Exec}, a Lean-defined,
bit-level IEEE-754 binary32 kernel. This model includes signed zeros, subnormals, NaNs/Infs, and the IEEE rounding/exception
behavior for core arithmetic. It is \emph{executable} inside Lean and therefore gives an explicit meaning to ``Float32 execution''
independent of a particular runtime implementation. The remaining ``hardware gap'' (showing a target platform refines this exact
software model) is handled as an explicit conformance interface rather than being silently assumed.
\end{tcolorbox}

\begin{tcolorbox}[
  colback=green!3, colframe=green!55!black, boxrule=0.6pt, arc=2pt,
  left=6pt,right=6pt,top=5pt,bottom=5pt
]
\textbf{(C) Proof-relevant rounding models: \texttt{FP32}/\texttt{NF} (round after each primitive).}
For theorem statements and error analysis, we use ``round-on-$\mathbb{R}$'' models inspired by Flocq: each primitive is specified
as ``compute in $\mathbb{R}$, then round'' under a chosen format/rounding mode, enabling compositional error envelopes through
entire graphs.
\texttt{FP32} instantiates this with IEEE-style binary32 rounding (finite-only, i.e., it does not model NaN/Inf),
while \texttt{NF} generalizes the same idea to user-chosen radices/exponent functions/rounding operators.
\end{tcolorbox}
\begin{tcolorbox}[
  colback=orange!3, colframe=orange!70!black, boxrule=0.6pt, arc=2pt,
  left=6pt,right=6pt,top=5pt,bottom=5pt
]
\textbf{(D) Validated numerics oracle: Arb/FLINT (rigorous transcendental enclosures).}
For rigorous enclosures of transcendental functions (e.g., $\tanh,\exp,\log,\sqrt,\sin,\cos$) at user-chosen precision,
we integrate an Arb/FLINT backend via \texttt{python-flint} behind an explicit oracle boundary. The oracle returns certified
ball/interval enclosures at a specified working precision (in bits), which we can consume as certificate-like artifacts in
verification pipelines. This layer is \emph{not} kernel-reducible and is treated as an external trust boundary; its role is to
provide high-quality validated bounds where implementing tight transcendental enclosures natively is cumbersome, while
keeping the trusted computing base explicit.
\end{tcolorbox}

\paragraph{Why we implement \emph{both} (bit-level and round-on-$\mathbb{R}$).}
These layers serve different proof/verification needs:
\emph{bit-level execution} (\texttt{IEEE32Exec}) is ideal for making ``what happens on Float32?'' concrete in checkers and demos,
including edge cases (NaNs, signed zeros) that real analysis ignores; \emph{round-on-$\mathbb{R}$ models} (\texttt{FP32}/\texttt{NF})
are ideal for theorem statements and compositional error envelopes because they expose the rounding operator as a mathematical
object that can be bounded and composed (in the style of Flocq (\citealp{boldo2011flocq})).
Together, they let us support both the verification community (explicit executable semantics) and the floating-point proof
community (compositional rounding/error reasoning) under a single semantic umbrella.

\textbf{Lean-specific note (why opacity matters).}
Because Lean's runtime floats are not encoded in the logic, the kernel cannot reduce or reason about them without additional
axioms; in particular, floating-point operations are implemented externally (and thus are not definitionally equal to any
mathematical model inside Lean).
This is why \TorchLean{} separates ``fast execution'' from ``proved semantics'' and makes the trust boundary explicit in both
the main text and Table~\ref{tab:trust-semantics}.

\textbf{One surface name, switchable semantics.}
To keep model code uniform while making numerical assumptions explicit, \TorchLean{} exposes a single surface notion of
``Float32'' with a selectable semantic mode. This lets the same model/program be instantiated with (i) an executable
bit-level IEEE-754 semantics for full runs, or (ii) a proof level rounding model for theorem statements and
compositional error envelopes, without rewriting the model.

\begin{lstlisting}[style=leanBox,caption={Selecting Float32 semantics at the type level.},label={lst:f32-modes}]
open TorchLean.Floats

-- Executable Float32 semantics (bit-level IEEE-754 binary32):
abbrev ExecF32 : Type := F32                 -- = IEEE32Exec

-- Proof Float32 semantics (round-on-R, finite-only):
abbrev ProofF32 : Type := F32 .fp32          -- = FP32
\end{lstlisting}

\textbf{Why multiple float modes are necessary.}
No single floating-point representation serves all verification goals. IEEE~754 defines the concrete behavior of deployed
floating-point arithmetic, including rounding and exceptional values (NaNs/Infs, signed zeros, subnormals). 
For \emph{executable demos} and certificate checking where these corner cases matter, we want an explicit, runnable model of
binary32 semantics---hence \texttt{IEEE32Exec}. For \emph{theorem statements} about numerical stability and graph level error
budgets, we instead want a proof-level ``round after each primitive'' model over $\mathbb{R}$ that supports compositional
error reasoning in the style of verified floating-point libraries such as Flocq---hence \texttt{FP32}/\texttt{NF}.
Finally, Lean's built-in runtime floats (\texttt{Float}/\texttt{Float32}) are fast but opaque to the kernel and therefore live on
the explicit trust/validation side of the interface.

\textbf{Float32 mode selection.}
At the API level we package this choice behind a small mode enum, so use sites are explicit about which semantics they rely on:
\begin{lstlisting}[style=leanBox,caption={A single ``Float32'' name with explicit semantics.},label={lst:Float32-select}]
inductive Float32Mode where
  | fp32 | ieee754Exec

abbrev Float32 (mode : Float32Mode := .ieee754Exec) : Type :=
  match mode with
  | .fp32        => FP32
  | .ieee754Exec => IEEE32Exec
\end{lstlisting}

\textbf{Proof-level rounding model (\texttt{NF}).}
For graph level error bounds we use a proof level model that rounds after every primitive operation. Conceptually, an
\texttt{NF} value stores a real number together with a chosen \emph{format} (radix and exponent function) and \emph{rounding}
operator; each primitive is specified as ``compute in $\mathbb{R}$, then round,'' and local rounding/error lemmas compose
over SSA/DAG graphs.

\noindent
\emph{Practical rule of thumb.} Use \texttt{IEEE32Exec} when you want executable Float32 behavior (including corner cases)
to match an IEEE-style semantics; use \texttt{FP32}/\texttt{NF} when you want theorem statements with explicit, compositional
rounding error envelopes; use runtime \texttt{Float}/\texttt{Float32} for fast prototyping where the numeric backend is an
explicitly trusted assumption. 

\begin{lstlisting}[style=leanBox,caption={Schematic runtime conformance contract for deployed Float32.},label={lst:f32-runtime-contract}]
-- A theorem using this class is conditional on the deployment target.
class RuntimeFloat32MatchesIEEE32Exec : Prop where
  add_finite : forall x y,
    finite x -> finite y -> finite (x + y) ->
    toIEEE32 (runtimeAdd x y) = IEEE32Exec.add (toIEEE32 x) (toIEEE32 y)
  mul_finite : forall x y,
    finite x -> finite y -> finite (x * y) ->
    toIEEE32 (runtimeMul x y) = IEEE32Exec.mul (toIEEE32 x) (toIEEE32 y)
  relu_exact : forall x,
    toIEEE32 (runtimeRelu x) = IEEE32Exec.max (toIEEE32 x) IEEE32Exec.posZero
\end{lstlisting}

We use this form intentionally. A graph theorem can be unconditional about \texttt{IEEE32Exec}, while a deployed-runtime theorem carries a visible assumption such as \texttt{RuntimeFloat32MatchesIEEE32Exec}. The assumption records the target configuration: rounding mode, denormal policy, fused operations, reduction order, and library implementation choices.

\subsection{Whole-graph NF bounds and hardware soundness}
\label{app:nf-compose}

This section explains two complementary pieces of our numerical story:
(i) how we \emph{compose} per-operator rounding/error lemmas into graph level error budgets in a ``round after each primitive'' model (NF),
and (ii) how we connect an \emph{executable} bit-level Float32 semantics to that proof model on the finite (non-NaN/non-Inf) path, and what
remains to relate either of them to \emph{hardware} execution.

\paragraph{Whole-graph NF bounds: local errors compose over SSA/DAG.}
NF is a proof level ``round after each primitive'' semantics: each primitive is specified as ``compute in $\mathbb{R}$, then round'' under a
chosen format/rounding operator. We relate NF values to real semantics via an explicit error relation $\,\approx_\varepsilon\,$, and we prove
graph level bounds by induction in SSA/topological order (and analogously for backward sweeps when needed).

\begin{tcolorbox}[
  colback=black!2, colframe=blue!55!black, boxrule=0.6pt, arc=2pt,
  left=6pt,right=6pt,top=5pt,bottom=5pt
]
\textbf{Local-to-global error propagation (NF idea).}
Define a scalar approximation relation
\[
x_{\mathrm{nf}} \approx_\varepsilon x_{\mathbb{R}}
\quad\Longleftrightarrow\quad
|x_{\mathrm{nf}}-x_{\mathbb{R}}|\le \varepsilon,
\]
and lift it to tensors (pointwise or via a norm such as $\ell_\infty$).
Each primitive operator $\mathrm{op}$ comes with a lemma of the form:
\[
\text{if } u_j \approx_{\varepsilon_j} v_j \text{ for all inputs } j,\ \text{ then }\ 
\mathrm{op}_{\mathrm{nf}}(u)\approx_{\varepsilon'} \mathrm{op}_{\mathbb{R}}(v),
\]
where the derived $\varepsilon'$ is computed from operator-specific stability bounds.
For linear/affine primitives, $\varepsilon'$ follows from standard Lipschitz constants; for nonlinear primitives, it follows from local
derivative bounds or global Lipschitz envelopes (on the region of interest).
Whole-graph bounds are proved by induction over the SSA/DAG order: the induction hypothesis provides bounds for parent nodes, and the node lemma
yields the bound for the current node.
\end{tcolorbox}

\textbf{Executable IEEE model $\rightarrow$ rounding-on-$\mathbb{R}$ model (internal refinement).}
NF/FP32-style reasoning is theorem-friendly because rounding is an explicit mathematical operator on $\mathbb{R}$, but it does not capture
IEEE corner-case behavior (NaNs/Infs, signed zeros, subnormals) directly. Conversely, a bit-level IEEE model is executable and makes those
corner cases concrete, but it is harder to use for compositional error proofs. We therefore establish an \emph{internal refinement} on the
\emph{finite} path: for the core Float32 arithmetic primitives (addition/subtraction/multiplication/division/sqrt and $\min/\max$),
we prove that interpreting the executable bit-level result as a real number agrees with applying the corresponding round-to-Float32 operator
to the real arithmetic result, under explicit side conditions that rule out NaN/Inf and overflow-to-Inf. \begin{BlueTheorem}{Internal refinement on the finite path}{ieee-to-round-refinement}
Let $\mathrm{ExecOp}_{32}$ be a core Float32 primitive implemented by the executable bit-level IEEE model
(e.g., $+,-,\times,/, \sqrt{\cdot}$, $\min/\max$), and let $\mathrm{RealOp}$ be the corresponding real operator. (For $\min/\max$ on finite values, $\mathrm{ExecOp}_{32}(x,y)$ returns exactly one of its inputs, so the $\mathrm{Round}_{32}$ wrapper below is vacuous; we state the refinement uniformly for all listed primitives.)
Let $\mathrm{toReal}$ interpret a finite Float32 value as a real number, and let $\mathrm{Round}_{32}$ be the IEEE-style
round-to-Float32 operator on $\mathbb{R}$.
Define $r(x,y)\;:=\;\mathrm{RealOp}\bigl(\mathrm{toReal}(x),\,\mathrm{toReal}(y)\bigr)$.
Assume the evaluation stays \emph{finite} (no NaN/Inf and no overflow-to-Inf) on the considered inputs.
Then for all inputs $x,y$ in the finite domain,
\[
\mathrm{toReal}\!\bigl(\mathrm{ExecOp}_{32}(x,y)\bigr)
\;=\;
\mathrm{Round}_{32}\!\bigl(r(x,y)\bigr).
\]
Moreover, these per-primitive refinements compose: for any expression $e$ built from the core primitives,
interpreting the executable result with $\mathrm{toReal}$ agrees with evaluating the corresponding real expression and
rounding at each primitive, under the same finiteness assumptions.
\end{BlueTheorem}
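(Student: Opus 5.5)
The proof goes in two layers: a per-primitive refinement lemma for each core operation, and a structural induction over expressions that composes them.

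\textbf{Per-primitive refinement.} The plan is to follow the Flocq-style decomposition of a bit-level IEEE operation into three stages. First, decode each finite binary32 operand into a sign, an integer significand $m$, and an exponent $e$, with $\mathrm{toReal}(x)=(-1)^s m\,2^{e}$. Second, show that the executable kernel computes an exact intermediate result, typically a wider integer significand together with guard, round, and sticky information. Its real value equals $r(x,y)$ for $+,-,\times$. For $/$ and $\sqrt{\cdot}$ it is a truncated quotient or root whose remainder bit faithfully records whether the result is exact. Third, show that the executable normalize-and-round routine (shift to the target exponent, taking subnormals into account, then round-to-nearest-even) coincides with $\mathrm{Round}_{32}$. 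Here $\mathrm{Round}_{32}$ is defined on $\mathbb{R}$ as the nearest-even rounding in the binary32 format with exponent function $\max(\mathrm{mag}(x)-24,-149)$.

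The key intermediate lemma is therefore a single statement about the shared rounding routine. For an integer significand $M$, exponent $E$, and a sticky bit that correctly characterizes the discarded remainder, $\mathrm{toReal}(\mathrm{roundPack}(s,M,E,\mathrm{sticky}))=\mathrm{Round}_{32}((-1)^s(M+\theta)2^{E})$ for any $\theta\in[0,1)$ consistent with the sticky bit, provided no overflow occurs. Once this lemma is proved, each operation only needs its own exact-arithmetic lemma. Signed zeros are handled by observing that $\mathrm{toReal}(\pm 0)=0$ and $\mathrm{Round}_{32}(0)=0$. For $\min/\max$ on finite inputs, $\mathrm{ExecOp}_{32}(x,y)\in\{x,y\}$, and order on finite encodings agrees with real order (with $\pm0$ mapping to $0$). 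It follows that $\mathrm{toReal}$ of the result is $\max$ or $\min$ of the reals, and that value is representable, so it is fixed by $\mathrm{Round}_{32}$.

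\textbf{Main obstacle.} I expect the rounding-routine lemma to be the hardest step, specifically the case analysis at the boundaries. These are the subnormal/normal transition, carry-out when rounding up produces $2^{24}$ and must renormalize, and exact ties requiring even selection. I would first prove that nearest-even rounding in a fixed exponent is characterized by comparing the discarded fraction with $1/2$ and checking the parity bit, then reduce every case to that characterization. For $\sqrt{\cdot}$ and $/$ I would pre-shift the operands so that the quotient has at least $26$ significant bits. That makes the sticky bit sufficient, because rounding depends only on the interval $(k,k+1)$ containing the scaled value and not on where inside it the value lies.

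\textbf{Composition.} For an expression $e$, define the rounded real evaluation $\llbracket e\rrbracket_{R}$ recursively by $\llbracket \mathrm{op}(e_1,e_2)\rrbracket_R=\mathrm{Round}_{32}(\mathrm{RealOp}(\llbracket e_1\rrbracket_R,\llbracket e_2\rrbracket_R))$, with leaves sent to $\mathrm{toReal}$ of their values. The claim is then proved by structural induction on $e$, strengthening the hypothesis so that every subexpression evaluates to a finite value. The finiteness assumption on the whole evaluation supplies this for each node, since NaN/Inf propagates. At each node, the induction hypotheses rewrite $\mathrm{toReal}$ of the children to their rounded-real values, and the per-primitive refinement closes the goal.
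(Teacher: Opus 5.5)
Your overall route matches the paper's: per-operator lemmas $\mathrm{toReal}(\mathrm{op}(x,y))=\mathrm{Round}_{32}(\cdots)$, lifted to expressions by structural induction. The paper only sketches this, so your three stages (decode, exact or sticky intermediate, shared round-and-pack routine) are a reasonable way to discharge the per-operator cases. However, one step of your composition argument is wrong, and your rounding lemma is missing a hypothesis.

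First, you derive node-wise finiteness from finiteness of ``the whole evaluation'' on the grounds that NaN/Inf propagates. NaN does propagate, but $\pm\infty$ need not: a finite value divided by $\pm\infty$ is $\pm 0$, and $\min/\max$ can discard an infinite argument. If only the final value were assumed finite, the composed claim would actually be false. Take $a=b=2^{64}$. The binary32 product $a\times b$ overflows to $+\infty$, so $1/(a\times b)$ evaluates to $+0$. The rounded-real evaluation, however, is $\mathrm{Round}_{32}(1/\mathrm{Round}_{32}(2^{128}))=2^{-128}$, which is a representable subnormal. So the absence of overflow at every node must be part of the hypothesis, not something you derive. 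The statement's phrase ``the evaluation stays finite'' supports that reading, and under it your induction goes through unchanged.

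Second, your shared rounding lemma, stated ``for any $\theta\in[0,1)$ consistent with the sticky bit'', is false without a precision side condition. For example, require $M\ge 2^{24}$ whenever the sticky bit is set, so that the binary32 rounding exponent at $(M+\theta)2^{E}$ is at least $E+1$. Without it, take $M=1$, $E=0$: the values $1.3$ and $1.7$ carry the same sticky bit but round to different floats. You do impose this condition for $/$ and $\sqrt{\cdot}$ by pre-shifting, but it belongs in the lemma itself. It must also be checked for $+$ and $-$ if the kernel aligns operands by shift-and-jam rather than computing the sum exactly.
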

This bridge is necessary because it lets us
run full demos under an explicit IEEE-754 semantics while still reusing the cleaner FP32/NF error-envelope lemmas whenever the execution
stays in the finite regime.

\paragraph{Hardware soundness: what remains and a pragmatic deployment path.}
IEEE~754 specifies formats, rounding modes, and exception behavior, but real deployments can diverge through compilation and platform choices
(e.g., flush-to-zero/denormals-are-zero, reassociation/``fast-math'', FMA contraction, or reduction order).
To claim that \emph{hardware} \texttt{Float32} execution inherits our theorems, one must additionally (a) fix a target semantics contract
(rounding mode, denormal policy such as FTZ/DAZ, contraction/reassociation policy, and reduction ordering), and (b) relate the compiled runtime's
observable Float32 results to the chosen executable model (or to the rounding-on-$\mathbb{R}$ model) for the operations actually used. This remains target-level future work.

\textbf{Transcendentals and IEEE-754 (what is and is not specified).}
IEEE~754 precisely specifies formats and core arithmetic (e.g., add/sub/mul/div, sqrt, comparisons, NaN/Inf behavior), but
\emph{elementary/transcendental functions} (e.g., \texttt{exp}, \texttt{log}, \texttt{sin}, \texttt{cos}, \texttt{tanh}) are largely
outside the standard's required semantics and are typically provided by system \texttt{libm} implementations.
\footnote{IEEE's own background material explicitly notes that many programs rely on library elementary functions and that the
standard does not specify them.}
As a result, even when two platforms are ``IEEE compliant'' for core arithmetic, their transcendental results can differ across
OS/compiler/libm versions or under different optimization flags. A deeper reason is the \emph{table-maker's dilemma}:
deciding the correctly-rounded result for an elementary function can require substantially more precision than the target
format in worst cases, which makes full bit-exact specification and implementation nontrivial in practice~\citep{correctlyrounded_survey,zimm_talk}.

\textbf{Our deliberate split: executable IEEE core vs.\ explicit transcendental policy.}
Accordingly, \TorchLean{} separates concerns.
For core IEEE arithmetic, \texttt{IEEE32Exec} provides a Lean-defined, bit-level model of binary32 behavior (including signed zeros,
subnormals, NaN/Inf propagation, and rounding), so the meaning of ``Float32 execution'' is explicit inside the prover.
For transcendental functions, we make the policy explicit rather than pretending it is uniquely determined by IEEE:
\vspace{-0.1cm}
\begin{itemize}[leftmargin=*, itemsep=0em]
\item \emph{Deterministic in-kernel implementations for common ML primitives.}
For functions heavily used in ML pipelines (notably \texttt{exp}/\texttt{log} and hyperbolic functions used in activations and normalizers),
we provide deterministic implementations with a fixed rounding/approximation policy so that full executions are reproducible under
\texttt{IEEE32Exec}.

\item \emph{Explicit delegation when necessary.}
For functions outside the verified kernel surface (e.g., full trigonometric stacks), we may delegate to Lean runtime \texttt{Float}
(or a chosen library implementation) and then round back to binary32, treating that choice as an \emph{explicit trust boundary}
rather than an implicit semantic fact.
\end{itemize}

\textbf{Proof models cover transcendentals via ``round the real function.''}
In parallel, the FP32/NF proof models include transcendentals definitionally as ``apply the real function, then round,'' which
supports theorem statements in terms of explicit error envelopes and interval enclosures under stated hypotheses. This mirrors
the classical verified-float approach: instead of depending on a particular \texttt{libm} implementation, the semantics is a mathematical
rounding operator applied to the real function, enabling compositional reasoning~\citep{correctlyrounded_survey,flocq_theos}.

\textbf{NaN payload caveat (why hardware conformance is subtle).}
IEEE~754 specifies NaN propagation at a high level (operations produce NaN when given NaN inputs), but the \emph{choice of NaN payload}
(and some signaling-vs-quiet details) is not fully uniform across real implementations \cite{ieee754_2019}.
Our executable kernel fixes a deterministic policy (including a specific quieting/selection rule), and we prove properties about that
policy. Consequently, any claim that a concrete compiler/runtime/hardware Float32 implementation refines \texttt{IEEE32Exec} must assume
or establish compatibility with this concrete NaN policy (or adopt a quotienting notion of observational equivalence that treats NaN
payload differences as irrelevant for the target theorems).

\section{CROWN verification: bounds, duality, and certificates}
\label{app:verification}

We expand Section~\ref{sec:method-verification} with implementation details for bound propagation and certificate checking. 

\subsection{IBP, CROWN/LiRPA, and certificate checking}
\label{app:verification-ibp-crown}

\textbf{Overview.} Our verifier operates on the shared operator-tagged SSA/DAG IR and proves properties of the graph denotation
$\llbracket G\rrbracket$ by establishing \emph{sound enclosures} for intermediate values and/or outputs.
We support two complementary verification modes:
\emph{(i) native bound propagation} implemented in Lean for demo-scale graphs, and
\emph{(ii) certificate checking}, where an external verifier produces a bound/certificate that Lean checks against
the same IR semantics. 
\begin{tcolorbox}[
  colback=black!2, colframe=black!40, boxrule=0.6pt, arc=2pt,
  left=6pt,right=6pt,top=5pt,bottom=5pt
]
\textbf{Acronyms.}
\begin{itemize}[leftmargin=*, itemsep=0.2em]
\item \textbf{IBP} (Interval Bound Propagation): propagate interval boxes $[l,u]$ forward through the network.
\item \textbf{CROWN}: propagate \emph{linear} upper/lower bounds through nonlinearities to obtain tighter output bounds.
\item \textbf{LiRPA} (Linear Relaxation-based Perturbation Analysis): a general framework that derives/propagates
linear relaxations (including CROWN- and DeepPoly-style rules) over general computational graphs.
\item \textbf{$\alpha$-CROWN}: tightens CROWN bounds by optimizing relaxation parameters.
\item \textbf{$\beta$-CROWN}: incorporates split constraints (branch-and-bound) efficiently by encoding splits with
optimizable parameters.
\end{itemize}
\end{tcolorbox}

\paragraph{IBP (interval bound propagation).}
Interval Bound Propagation (IBP) is the simplest sound enclosure method for neural computation graphs. Starting from an
\emph{input region}---typically an axis-aligned box $[l_0,u_0]$ (often used to over-approximate an $\ell_\infty$ ball)---IBP
computes, for every node $i$ in the graph, an interval enclosure $[l_i,u_i]$ such that the true node value satisfies
$v_i(x)\in[l_i,u_i]$ for all admissible inputs $x\in[l_0,u_0]$. The algorithm is a single forward sweep in topological order:
each primitive provides an \emph{interval transfer rule} that maps parent bounds to a sound output bound, maintaining the
invariant ``parents sound $\Rightarrow$ child sound.'' IBP is attractive because it is fast, compositional, and easy to make
formally sound, but it can be loose because plain intervals do not track correlations between coordinates (and thus can
over-approximate significantly after repeated mixing through linear layers).

\begin{tcolorbox}[
  colback=blue!3, colframe=blue!60!black, boxrule=0.6pt, arc=2pt,
  left=6pt,right=6pt,top=5pt,bottom=5pt
]
\textbf{Example IBP transfer rules (sketch).}
All rules below are \emph{sound}: they guarantee $\mathrm{op}(v)\in[l',u']$ for all $v\in[l,u]$.
\begin{itemize}[leftmargin=*, itemsep=0.15em]
\item \textbf{Affine / linear} $y = Wx + b$ (includes \texttt{linear} and flattened \texttt{matmul} cases). 
Write $W=W^+ - W^-$ with $W^+,W^-\ge 0$ elementwise. Then
\[
l_y = W^+ l_x - W^- u_x + b,\qquad
u_y = W^+ u_x - W^- l_x + b,
\]
which is the standard interval-arithmetic enclosure for linear maps.

\item \textbf{Monotone elementwise} (e.g., ReLU, $\exp$). 
Apply the function to endpoints:
\[
[l_y,u_y] = [f(l_x), f(u_x)] \quad\text{(componentwise).}
\]
For ReLU specifically, this yields $l_y=\max(l_x,0)$ and $u_y=\max(u_x,0)$.

\item \textbf{Reductions} (e.g., sum/mean over a fixed axis). 
Because reduction is linear, bounds reduce by summing/averaging endpoints along the axis:
\[
\mathrm{sum}([l,u]) = [\mathrm{sum}(l),\,\mathrm{sum}(u)]\]\[
\mathrm{mean}([l,u]) = [\mathrm{mean}(l),\,\mathrm{mean}(u)].
\]
\end{itemize}
\end{tcolorbox}

\begin{lstlisting}[style=leanBox,caption={Schematic interval objects and the linear IBP step.},label={lst:ibp-linear-step}]
structure Box (s : Shape) where
  lo : Tensor Float s
  hi : Tensor Float s

-- W = Wpos - Wneg, with Wpos,Wneg >= 0 elementwise.
def ibpLinear (W b : Tensor Float _) (x : Box inShape) : Box outShape :=
  let Wpos := Tensor.map (fun w => max w 0.0) W
  let Wneg := Tensor.map (fun w => max (-w) 0.0) W
  { lo := Wpos.matmul x.lo - Wneg.matmul x.hi + b,
    hi := Wpos.matmul x.hi - Wneg.matmul x.lo + b }

def ibpRelu (x : Box s) : Box s :=
  { lo := Tensor.map (fun z => max z 0.0) x.lo,
    hi := Tensor.map (fun z => max z 0.0) x.hi }
\end{lstlisting}

The Lean checker does not need to trust the producer's explanation of how a bound was found. It recomputes the local transfer step from parent boxes and compares the recomputed box with the claimed box. The following schematic pattern is the core of the checker.

\begin{lstlisting}[style=leanBox,caption={Schematic replay step for a node certificate.},label={lst:cert-replay-step}]
def checkNode? (G : Graph) (cert : Cert) (st : CertState) (i : Nat) :
    Except CheckError CertState := do
  let node      <- G.node? i
  let parents   <- st.lookupAll node.parents
  let expected  <- transfer node.kind parents
  let claimed   <- cert.boxFor i
  guard (sameShape expected claimed)
  guard (claimed.lo <= expected.lo + cert.tol)
  guard (expected.hi <= claimed.hi + cert.tol)
  pure (st.insert i claimed)
\end{lstlisting}

The inequalities are intentionally one-sided. A certificate may be looser than the recomputed bound and still be sound; it may not claim a lower bound above the recomputed lower bound or an upper bound below the recomputed upper bound unless the checker can prove that stronger claim.

\paragraph{IBP certificate soundness (graph dialect).}
To support certificate checking, we formulate IBP soundness for a \emph{safe} graph semantics that is explicitly partial:
node evaluation fails when required parent values/parameters are missing or when declared dimensions do not match.
We then define a per-node IBP \emph{certificate step} that deterministically recomputes each node's interval box from its parents' boxes.
The soundness proof follows the standard local-to-global pattern: under a topological order and a supported operator subset,
local semantic consistency and local certificate consistency imply a global enclosure invariant.

\begin{BlueTheorem}{IBP certificate soundness}{ibp-cert-sound}
Let $G$ be a topologically sorted computation graph over a supported operator fragment, and fix a parameter store.
Assume we are given:
\emph{(i)} a partial value semantics that computes node values by a one-step rule from parent values and parameters, and
\emph{(ii)} a one-step IBP rule that computes an interval box for each node from its parents' boxes and parameters.
Let $(v_i)_i$ be any assignment of semantic node values that is locally consistent with the one-step value rule, and let $(B_i)_i$ be any assignment of certified interval boxes that is locally consistent with the one-step IBP rule.
If each semantic input lies in its declared input box, then for every node $i$, whenever both $v_i$ and $B_i$ are defined, the value $v_i$ lies in $B_i$ componentwise.
\end{BlueTheorem}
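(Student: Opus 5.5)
We argue by strong induction on the node index $i$ in the given topological order, proving the invariant
\[
P(i):\quad v_i \text{ and } B_i \text{ both defined} \;\Longrightarrow\; v_i \in B_i \text{ componentwise.}
\]
Because $G$ is topologically sorted, every parent $p$ of node $i$ satisfies $p<i$. We may therefore assume $P(p)$ for all parents when proving $P(i)$. This is the same local-to-global pattern used for the graph denotation and for the SSA/DAG composition in Theorem~\ref{thm:rev-correct}.

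\textbf{Base case: input nodes.} Here $v_i$ is the semantic input and $B_i$ is the declared input box. The hypothesis that each semantic input lies in its declared box gives $P(i)$ directly.

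\textbf{Inductive step: non-input nodes.} Suppose both $v_i$ and $B_i$ are defined. By local consistency of the value semantics, $v_i$ is the one-step value rule applied to the parent values and parameters. Since this partial rule returned a value, every parent value $v_p$ is defined. By local consistency of the certificate, $B_i$ is the one-step IBP rule applied to the parent boxes. Since that rule succeeded, every parent box $B_p$ is defined. The induction hypothesis then gives $v_p\in B_p$ for each parent. Both rules read the same parameter store, so the claim reduces to one per-operator lemma:
\[
\text{if } v_p\in B_p \text{ for all parents } p, \text{ then } \mathrm{step}_{\mathrm{val}}(v_{\mathrm{par}})\in \mathrm{step}_{\mathrm{IBP}}(B_{\mathrm{par}}).
\]
This is the local soundness condition for abstract transformers from Section~\ref{sec:method-verification}.

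\textbf{Local soundness lemmas.} We prove this lemma by case analysis on the supported operator kinds.
\begin{itemize}
\item \emph{Affine and linear.} Split $W=W^+-W^-$. Then $y_k=\sum_j W^+_{kj}x_j-\sum_j W^-_{kj}x_j+b_k$. Monotonicity of multiplication by nonnegative scalars and of addition gives $W^+l-W^-u+b\le y\le W^+u-W^-l+b$.
\item \emph{Monotone elementwise operators} (ReLU, $\exp$, \ldots). These follow from monotonicity of $f$ applied coordinatewise.
\item \emph{Sum and mean reductions.} These follow from additivity of inequalities. For the mean we also use the positive scaling factor.
\item \emph{Shape operations} (reshape, flatten, etc.). These permute coordinates identically on values and on box endpoints, so membership is preserved.
\end{itemize}
Each case is proved pointwise by induction on the shape. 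The structural tensor representation reduces componentwise membership to the scalar case.

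\textbf{Main obstacle.} I expect the hardest part to be bookkeeping for the partial semantics, not the inequalities. The difficult points are:
\begin{itemize}
\item showing that success of the value step and the IBP step forces every parent lookup to be defined;
\item showing that both steps agree on dimensions, so a dimension mismatch cannot make one rule succeed while the other fails;
\item showing that both steps see the same parameters.
\end{itemize}
I would first prove a lemma that each one-step rule, on success, is exactly the mapped operator applied to the list of looked-up parent entries. This factors out the \texttt{Except}/\texttt{Option} plumbing and lets the per-operator monotonicity lemmas apply to plain tensors.

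A second concern is the scalar domain. If boxes are stored in a rounded domain, endpoint arithmetic must be interpreted with outward rounding, or directly over $\mathbb{R}$. I would state the theorem over exact scalars first and leave the floating-point version to the interval soundness lemmas of Appendix~\ref{app:trust}.
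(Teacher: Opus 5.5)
Your proposal is correct and takes essentially the same route as the paper. The paper's proof sketch likewise preserves the enclosure invariant one node at a time via the operator-specific IBP transfer rule, and uses the topological order so that all parent enclosures are already established before concluding by induction over node ids. Your added bookkeeping (parent definedness from success of the partial steps, and the per-operator monotonicity lemmas) is a more detailed unpacking of that same local-to-global argument. One item on your obstacle list is unnecessary: you need not show that a dimension mismatch cannot make one rule succeed while the other fails, because the conclusion is only asserted when both $v_i$ and $B_i$ are defined.
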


\paragraph{Whole-graph IBP soundness for the concrete implementation.}
Beyond certificate soundness, we also connect the theorem to the concrete implementations used in the demos.
We define total ``evaluate-by-id'' and ``propagate-by-id'' procedures, implemented by recursion on node id, and prove that they satisfy the local-consistency premises of Theorem~\ref{thm:ibp-cert-sound}. As a result, the computed IBP boxes enclose the computed semantic values whenever both procedures return values:

\begin{BlueTheorem}{Whole-graph IBP soundness for \ttbr{runIBP?}}{runibp-full pipeline}
Let \ttbr{evalGraphRec} be the total value evaluator for $G$ and \ttbr{runIBP?} the corresponding total IBP propagation.
Under the same structural assumptions (topological order, supported operators, and input enclosure), for every node $i$, whenever both procedures return a value $v_i$ and a box $B_i$, we have $v_i \in B_i$ componentwise.
\end{BlueTheorem}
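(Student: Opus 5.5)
The plan is to reduce Theorem~\ref{thm:runibp-full pipeline} to Theorem~\ref{thm:ibp-cert-sound}. That theorem already gives the global enclosure invariant for \emph{any} value assignment and \emph{any} box assignment that are locally consistent with the one-step rules. So it suffices to instantiate $v_i$ with the output of \ttbr{evalGraphRec} at node $i$ and $B_i$ with the output of \ttbr{runIBP?} at node $i$, both as partial (option/except-valued) families. We then discharge the two local-consistency premises and the input-enclosure premise.

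First, I will prove an unfolding lemma for each procedure. Both are defined by recursion on node id. Whenever \ttbr{evalGraphRec} returns $v_i$, it does so by looking up node $i$, recursively evaluating the parents (which have smaller ids, by topological order), and applying the one-step value rule to those parent values and the parameter store. This is exactly the local-consistency condition of Theorem~\ref{thm:ibp-cert-sound}. The analogous lemma for \ttbr{runIBP?} states that a returned box equals the one-step IBP rule applied to the parents' returned boxes. Each lemma is proved by unfolding the recursive definition once and case-splitting on the monadic binds: any failure in a parent lookup or a dimension check propagates as failure, so a successful result forces every parent call to have succeeded with the values being fed in. For input nodes, both procedures read the semantic input and the declared input box respectively, so the hypothesis that each input lies in its box transfers directly.

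The main obstacle I expect is the recursion itself. For Lean to accept the definitions, recursion on node id is probably implemented with fuel or well-founded recursion via the topological-order proof. The unfolding equations must then be shown to hold \emph{without} relying on any particular fuel value. The clean route is to prove a fuel-monotonicity lemma: if the evaluation succeeds with fuel $k$, it succeeds with the same result for every $k' \ge k$. This lets the parent calls inside node $i$'s computation be identified with the top-level calls at the parents' ids. If the definition instead uses \texttt{termination\_by} on the id, the equation lemmas come for free, and the remaining work is only the bookkeeping that the parents list visited during recursion is the same list the one-step rule consumes.

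With these lemmas in hand, we apply Theorem~\ref{thm:ibp-cert-sound} to the two assignments and obtain, for every node $i$ where both procedures return values, $v_i \in B_i$ componentwise, which is the claim. The supported-operator restriction is inherited unchanged: the one-step IBP rule is only asserted sound on that fragment, and unsupported operators make \ttbr{runIBP?} fail, so the conclusion holds vacuously for them.
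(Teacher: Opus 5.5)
Your proposal is correct and follows essentially the same approach as the paper. The paper also proves that the recursion-on-node-id procedures \ttbr{evalGraphRec} and \ttbr{runIBP?} satisfy the local-consistency premises of Theorem~\ref{thm:ibp-cert-sound}, and then applies that theorem, whose proof is the node-by-node induction in topological order; your fuel-versus-\texttt{termination\_by} discussion is implementation bookkeeping that the paper's direct recursion on node id sidesteps.
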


\noindent\emph{Proof idea (both theorems).}
The enclosure invariant is preserved one node at a time: assuming all parent enclosures hold, the operator-specific IBP transfer rule yields an enclosure for the current node consistent with its value semantics.
Topological order ensures that, at node $i$, all parent facts are already established, so the global result follows by induction over node IDs/topological order.

\paragraph{CROWN and LiRPA (linear relaxations).}
CROWN tightens IBP by tracking \emph{affine bounds} rather than pure intervals: instead of only maintaining
$v_i(x)\in[l_i,u_i]$ for each node, it maintains linear forms that bound each node as a function of the input,
e.g.\ $a^\top x + b \le v_i(x)\le c^\top x + d$ over the admissible input region. These affine bounds are obtained by
replacing nonlinear primitives with \emph{sound linear envelopes} on the current pre-activation interval and propagating
the resulting affine forms through linear operators. For piecewise-linear activations such as ReLU, the envelope is given by
valid upper/lower lines on $[l,u]$ (secant/tangent choices in the unstable regime $l<0<u$); for smooth activations (e.g.,
$\tanh$, sigmoid), one uses tangent/secant bounds or other valid global/region-restricted linear relaxations, yielding a
strictly tighter enclosure whenever correlations between coordinates matter.

LiRPA is the unifying viewpoint: it treats bound propagation as computing
and composing such linear relaxations over \emph{general computational graphs} (not just simple feedforward chains),
subsuming CROWN- and DeepPoly-style rules. Modern LiRPA implementations typically expose both (i) a \emph{forward} pass
that produces node wise affine bounds and (ii) an \emph{objective-dependent} backward/dual pass that tightens the bound
for a specific linear objective on the output (e.g., a robustness margin objective), since the best relaxation choices can
depend on the downstream objective.
In \TorchLean{}, our Lean based core mirrors this structure: we provide a proved-sound IBP layer and a basic CROWN/LiRPA
affine engine over the shared operator-tagged IR, with conservative fallbacks (e.g., deriving constant affine bounds from IBP)
when a specialized relaxation is not yet implemented, preserving soundness at the cost of tightness.

\begin{tcolorbox}[
  colback=green!3, colframe=green!55!black, boxrule=0.6pt, arc=2pt,
  left=6pt,right=6pt,top=5pt,bottom=5pt
]
\textbf{Example: ReLU linear relaxation (one neuron).}
Let $z\in[l,u]$ and $y=\mathrm{ReLU}(z)$.
A standard sound relaxation chooses linear bounds
\[
  y \ge \lambda_L z + \mu_L,\qquad
  y \le \lambda_U z + \mu_U,
\]
with coefficients chosen so the lines bound ReLU on $[l,u]$ (e.g., $\lambda_U=\frac{u}{u-l}$, $\mu_U=-\frac{ul}{u-l}$
for the secant upper bound when $l<0<u$; and a valid lower bound selected from the admissible family in this regime).
Propagating these envelopes yields affine bounds on downstream nodes and, ultimately, output enclosures.
\end{tcolorbox}

\begin{lstlisting}[style=leanBox,caption={Schematic ReLU relaxation coefficients used by CROWN.},label={lst:crown-relu-relax}]
structure ReluRelax where
  lowerSlope     : Array Float
  lowerIntercept : Array Float
  upperSlope     : Array Float
  upperIntercept : Array Float

def reluRelax (lo hi : Tensor Float s) : ReluRelax :=
  Id.run do
    let mut ls := #[]; let mut li := #[]
    let mut us := #[]; let mut ui := #[]
    for k in [:lo.numel] do
      let l := lo.data[k]!; let u := hi.data[k]!
      if 0.0 <= l then
        ls := ls.push 1.0; li := li.push 0.0
        us := us.push 1.0; ui := ui.push 0.0
      else if u <= 0.0 then
        ls := ls.push 0.0; li := li.push 0.0
        us := us.push 0.0; ui := ui.push 0.0
      else
        let secant := u / (u - l)
        let offset := -(l * u) / (u - l)
        -- Upper line is the secant. Lower line is chosen from
        -- a sound family, possibly optimized by alpha-CROWN.
        us := us.push secant; ui := ui.push offset
        ls := ls.push 0.0;    li := li.push 0.0
    return { lowerSlope := ls, lowerIntercept := li,
             upperSlope := us, upperIntercept := ui }
\end{lstlisting}

\paragraph{$\alpha/\beta$-CROWN as certificate-checked bound semantics (graph dialect).}
We implement an $\alpha/\beta$-CROWN certificate interface for the operator-tagged graph verifier dialect. Fix a compiled graph $G$ (nodes $i$ in topological order) and an input box $B$. CROWN family bounds at each node are represented as \emph{affine enclosures} over inputs: for a scalar node value $z_i$ we enclose it by two affine functions of the input $x$,
\[
\begin{aligned}
z_i(x) &\in [L_i(x),U_i(x)],\\
L_i(x) &= \langle a_i^{L},x\rangle+b_i^{L},\\
U_i(x) &= \langle a_i^{U},x\rangle+b_i^{U}.
\end{aligned}
\]
(For tensor nodes, the same form is applied componentwise.) Bounds are computed by a per-node step rule
\[
\begin{aligned}
\texttt{CrownStepNode?} :\;& S\times i\ \to \\
& \textsf{Option}(\textsf{AffBounds}),
\end{aligned}
\]
where $S$ denotes the checker state (context plus parent bounds) and \textsf{AffBounds} abbreviates the pair of affine maps above. This step extends $\alpha$-CROWN by optionally incorporating $\beta$ phase information at ReLU nodes: phase-fixed units use exact linear behavior (slope $0$ or $1$), while unstable units fall back to standard CROWN/$\alpha$-CROWN relaxations.

\paragraph{Certificate contents and the role of $\alpha$ and $\beta$.}
An $\alpha/\beta$-CROWN certificate provides, for each node $i$: (i) an IBP pre-activation interval $[l_i,u_i]$ (a local box enclosure); (ii) affine bounds (lower and upper affine maps of the input); (iii) $\alpha$ parameters for the unstable-ReLU \emph{lower} relaxation; and (iv) an optional $\beta$ phase vector for ReLU units. Intuitively, $\alpha$ selects a member of a sound lower-envelope family in the unstable regime ($l<0<u$), while $\beta$ encodes phase constraints (active/inactive) that, when consistent with IBP, permit \emph{exact} linear behavior on those units.

\paragraph{$\beta$ phases and phase consistency.}
For a ReLU node with pre-activation $z$ and post-activation $y=\mathrm{ReLU}(z)$, we interpret $\beta\in\{-1,0,1\}$ as
\[
\beta \equiv
\begin{cases}
-1 & \text{inactive }(z\le 0),\\
\ \ 0 & \text{unstable (no constraint)},\\
\ \ 1 & \text{active }(0\le z).
\end{cases}
\]
The step checks \emph{phase consistency} against the IBP pre-activation interval $[l,u]$:
\[
\begin{aligned}
\textsf{Consistent}(l,u,\beta)\;:\!\iff\;& (\beta=-1 \Rightarrow u \le 0)\\
&\wedge\;(\beta=1 \Rightarrow 0 \le l).
\end{aligned}
\]
If consistent, inactive/active phases use exact linearization (slope $0$ or $1$). Unstable units ($\beta=0$) fall back to the usual CROWN upper relaxation and $\alpha$-CROWN lower relaxation.

\paragraph{Phase-dependent ReLU relaxations.}
Write a linear bound as a pair $(s,b)$ representing the function $s z+b$. Define phase-dependent relaxations by
\[
\begin{aligned}
\overline{r}_{\beta}(l,u) &=
\begin{cases}
(0,0) & \text{if }\beta=-1\ \text{and}\ u\le 0,\\
(1,0) & \text{if }\beta=1\ \text{and}\ 0\le l,\\
\overline{r}(l,u) & \text{if }\beta=0,
\end{cases}\\[0.25em]
\underline{r}_{\alpha,\beta}(l,u,\alpha) &=
\begin{cases}
(0,0) & \text{if }\beta=-1\ \text{and}\ u\le 0,\\
(1,0) & \text{if }\beta=1\ \text{and}\ 0\le l,\\
\underline{r}_{\alpha}(l,u,\alpha) & \text{if }\beta=0.
\end{cases}
\end{aligned}
\]
Here $\overline{r}(l,u)$ is the standard CROWN triangular \emph{upper} relaxation (secant in the crossing case), and $\underline{r}_{\alpha}(l,u,\alpha)$ is the $\alpha$-CROWN \emph{lower} relaxation family (with $\alpha\in[0,1]$ used only when $l<0<u$). In the inactive/active cases the relaxations reduce to the exact affine graphs of ReLU on the corresponding half-line.

\begin{tcolorbox}[
  colback=green!3, colframe=green!55!black, boxrule=0.6pt, arc=2pt,
  left=6pt,right=6pt,top=5pt,bottom=5pt
]
\small
\textbf{Operator-level soundness (ReLU, $\beta$-aware).}
Let $x\in[l,u]$ and assume $\textsf{Consistent}(l,u,\beta)$. Let $(\overline{s},\overline{b})=\overline{r}_{\beta}(l,u)$ and
$(\underline{s},\underline{b})=\underline{r}_{\alpha,\beta}(l,u,\alpha)$ (with $0\le \alpha\le 1$), the scalar relaxations satisfy:
\[
\operatorname{ReLU}(x)\ \le\ \overline{s}\,x+\overline{b} \text{ and } \underline{s}\,x+\underline{b}\ \le\ \operatorname{ReLU}(x).
\]
\end{tcolorbox}
\paragraph{Certificate checking: replay-based producer/checker design.}
Certificate checking is the key mechanism for obtaining tight bounds without
enlarging the trusted computing base to include a complex optimizer.
An external verifier or solver acts purely as an \emph{untrusted producer}:
it searches for a certificate (bounds, affine envelopes, split structure,
dual variables) and serializes the result as a compact JSON artifact
containing per-node IBP boxes, affine bound coefficients, and optional
per-node $\alpha$ and $\beta$ data.

Lean then acts as the \emph{trusted checker}: it parses the artifact,
canonicalizes all numeric data to a fixed float grid, and
\emph{recomputes} each node's bound via the same step semantics that
defines $\llbracket G \rrbracket$, producing
$B_i^{\mathrm{Lean}} := \texttt{CrownStepNode?}(\ldots, i)
\in \textsf{Option}(\textsf{AffBounds})$ for each node $i$ in
topological order.
The checker accepts only if (i)~provided bounds match the recomputed
bounds after canonicalization, (ii)~parent bounds appear in topological
order, and (iii)~shapes and operator tags are consistent with the IR\@.
The external optimizer is never trusted: regardless of what search or
heuristics it uses internally, only the final artifact crosses the trust
boundary, and every claim is independently replayed in Lean.
This reduces the trusted computing base to the IR denotation plus the
small checker.

\smallskip
\noindent\textit{What the checker validates at each node.}\enskip
For each node $i$, the checker performs three checks in sequence.
First, it validates the \emph{schema}: node $i$ exists in the graph $G$,
its declared input and output shapes match the IR typing, and all
required parent nodes have already been processed.
Second, it validates \emph{local soundness}: the certified bound
$B_i$ is implied by the parent bounds and the \texttt{OpKind}-specific
enclosure rule, confirmed by comparing $B_i$ against the recomputed
$B_i^{\mathrm{Lean}}$ after float-grid canonicalization.
Third, after all nodes are processed, it validates \emph{goal reduction}:
the target property (e.g., a robustness margin, a Lyapunov decrease
condition, or a PINN residual bound) follows from the certified output
enclosure by a small, explicit Lean argument.
If any check fails, the certificate is rejected and the property is
reported as unverified; the system never silently accepts a malformed
artifact.

\smallskip
\noindent\textit{Certificate families.}\enskip
Certificates fall into two families:
\begin{itemize}[leftmargin=*, itemsep=0.2em]
  \item \textbf{Node-wise enclosure certificates} supply $[l_i, u_i]$
    and optionally affine forms $L_i(x), U_i(x)$ per node, sufficient
    to discharge an output-level property directly from the propagated
    enclosure. IBP and CROWN/LiRPA both fall here; the difference is
    whether bounds are intervals or affine functions of the input.
  \item \textbf{Branch-and-bound leaf certificates} partition the input
    region $B$ into leaves, each with its own node wise certificate.
    The checker validates coverage, per-leaf soundness, and that the
    target property holds on every leaf; global validity follows by
    case analysis. This handles properties no single relaxation can
    certify, at the cost of a larger artifact and checking time linear
    in the number of leaves.
\end{itemize}

\begin{tcolorbox}[
  colback=green!3, colframe=green!55!black, boxrule=0.6pt, arc=2pt,
  left=8pt, right=8pt, top=6pt, bottom=6pt,
  before upper={\parskip=4pt},
  breakable=false
]
\small
\textbf{One ReLU neuron under $\alpha/\beta$-CROWN.}
\medskip

Let $z \in [l, u]$ be the pre-activation and $y = \mathrm{ReLU}(z)$.
The phase $\beta \in \{-1, 0, 1\}$ encodes what the IBP interval
tells us: provably off, provably on, or uncertain.
\medskip
\noindent\textit{Inactive} ($\beta{=}{-1}$, $u \le 0$).\enskip
The interval lies entirely in the non-positive half-line, so
$\mathrm{ReLU}(z) = 0$ everywhere on $[l,u]$.
Both bounds are the exact zero line $(s^U, b^U) = (s^L, b^L) = (0,0)$;
no relaxation gap is introduced.
\medskip
\noindent\textit{Active} ($\beta{=}1$, $0 \le l$).\enskip
The interval lies entirely in the non-negative half-line, so
$\mathrm{ReLU}(z) = z$ exactly.
Both bounds are the identity $(s^U, b^U) = (s^L, b^L) = (1,0)$;
again, zero gap and exact propagation.

\medskip
\noindent\textit{Unstable} ($\beta{=}0$, $l < 0 < u$).\enskip
The interval straddles zero and a linear relaxation is unavoidable.
The upper bound is the CROWN secant connecting $(l,0)$ to $(u,u)$,
tight at both endpoints:
\[
  s^U = \frac{u}{u-l}, \qquad b^U = \frac{-ul}{u-l}.
\]
The lower bound is the $\alpha$-CROWN family
$(s^L, b^L) = (\alpha, 0)$ with $\alpha \in [0,1]$ chosen by the
external optimizer to tighten the downstream objective.
$\alpha{=}0$ gives the trivial bound $y \ge 0$; $\alpha{=}1$ gives
the identity $y \ge z$, which is tightest when $z$ is likely positive.
Intermediate values trade local tightness against the global
verification goal; in practice $\alpha$ is optimized by gradient
ascent on the final bound, treating each neuron's slope independently.
\medskip

\noindent\textit{Propagation.}\enskip
Affine forms compose exactly through linear layers using the
$W = W^+ - W^-$ decomposition, where $W^+_{ij} = \max(W_{ij},0)$
and $W^-_{ij} = \max(-W_{ij},0)$ route lower and upper input bounds
to the correct output bound without case splits.
Inactive and active neurons pass forms through without loss.
Only unstable neurons accumulate gap,
which shrinks as IBP intervals tighten. After processing all nodes
in topological order, the output node carries certified bounds
$L_{\mathrm{out}}(x), U_{\mathrm{out}}(x)$ over the full input
box $B$. The checker then discharges the target property---typically
$L_{\mathrm{out},y}(x) > U_{\mathrm{out},k}(x)$ for all $k \ne y$---via
a small explicit Lean argument whose only premises are the computed
scalar bounds, keeping the trusted kernel minimal.
\end{tcolorbox}

\begin{tcolorbox}[
  colback=red!3, colframe=red!60!black, boxrule=0.6pt, arc=2pt,
  left=6pt,right=6pt,top=5pt,bottom=5pt
]
\textbf{What the Lean checker validates (conceptual checklist).}
Given a certificate and the IR graph:
\begin{enumerate}[leftmargin=*, itemsep=0.15em]
\item \textbf{Schema + typing:} node IDs exist; shapes match; per-node bound vectors have the right dimension.
\item \textbf{Local soundness per node:} for each node $i$, the certified bound is implied by parent bounds and the
\texttt{OpKind}-specific enclosure rule (interval rule or affine relaxation rule).
\item \textbf{Goal reduction:} the final property (e.g., robustness margin, Lyapunov decrease) is derived from the certified
output enclosure by a small, explicit argument in Lean (e.g., margin check or inequality chaining).
\item \textbf{Splitting (if present):} each leaf region is within the root region; the claimed per-leaf bound is sound; and the
set of leaves covers (or over-approximates) the target region according to the certificate's declared semantics.
\end{enumerate}
\end{tcolorbox}

\paragraph{A concrete example (robustness margin).}
For a classifier with logits $z(x)\in\mathbb{R}^K$ and a target label $y$, a standard sufficient condition for certified
$\ell_\infty$ robustness on a region $R$ is:
\[
  \underline{z_y} \;>\; \max_{k\ne y}\overline{z_k},
\]
where $(\underline{z_k},\overline{z_k})$ are sound bounds on each logit over $R$.
A certificate can therefore supply logit bounds (from IBP/CROWN/LiRPA, optionally with splitting), and the Lean checker
discharges robustness by checking the margin inequality.

\paragraph{Certificate schemas (illustrative).}
Certificate-driven checking hinges on a stable schema: the artifact names the graph and supplies node-indexed data.
At a high level, a node wise enclosure certificate looks like:
\begin{lstlisting}[style=leanBox]
{
  "graph_id": "...",
  "input_region": { ... },
  "bounds": {
    "node_id": { "lo": [...], "hi": [...] },
    ...
  }
}
\end{lstlisting}
More advanced artifacts may include affine coefficients, objectives, and branch-and-bound trees. The checker described here focuses
on validating the enclosure constraints needed to discharge the target theorem, rather than replaying the optimizer's full
search/parameter-optimization internally. Planned extensions (e.g., richer dual-feasibility checks for full solver internals)
are described in Appendix~\ref{app:verification-schemas}.

\subsection{VNN-COMP / VNN-LIB interface (ONNX + VNNLIB)}
\label{app:vnncomp}

VNN-COMP is an annual community competition designed to enable fair, objective comparison of neural network verification tools by standardizing interfaces, benchmarks, and evaluation pipelines \cite{brix2024vnncomp,vnncompSite}. VNN-COMP instances are packaged in standardized formats: networks are provided as ONNX models and properties are specified in VNN-LIB, an SMT-LIB-style language that defines both syntax and semantics for satisfiability queries over neural networks \cite{brix2024vnncomp,vnnlibSite}. This regime is directly relevant to semantic drift: most verifiers consume exported ONNX artifacts and interpret operator semantics outside the training framework, and ONNX operator meaning is tied to explicit operator-set (opset) versioning. Our goal here is not to re-implement the entire ONNX/VNN-LIB toolchain inside Lean, but to make the conversion boundary explicit and keep the trusted core inside Lean.

Accordingly, we adopt a producer/checker workflow. A lightweight Python export step reads the ONNX network and the VNN-LIB specification and emits a compact JSON bundle containing the network structure/weights (in a typed graph form), the input region (typically an axis-aligned box), and the property matrices/constraints extracted from VNN-LIB. Lean then compiles this bundle into our operator-tagged SSA/DAG IR and checks a \emph{sufficient UNSAT} condition by replaying IBP/CROWN style bounds against the shared IR semantics. For these suites we run the checker under the fast runtime \texttt{Float} backend (binary64), since the objective is to demonstrate a semantic benchmarking interface rather than Float32 deployment conformance.

The check we perform follows the standard sound-but-incomplete pattern used by bound-propagation verifiers: if the propagated output enclosure implies that the VNN-LIB predicate cannot hold on the entire input region, we return \texttt{safe} (UNSAT proved); otherwise we return \texttt{unknown}. Soundness is semantic: whenever Lean reports \texttt{safe}, the conclusion is derived from the Lean denotation of the compiled IR, so the trusted computing base is the IR semantics plus the small checker, with the export step treated as an explicit, auditable boundary.

Table~\ref{tab:vnncomp-mini} reports a small MNIST-FC slice under this interface. Lean IBP is conservative, while objective-dependent CROWN refutes a subset of properties inside Lean. Importing optimized ReLU $\alpha$ slopes primarily improves runtime rather than refutation count, suggesting that further tightness typically requires richer artifacts such as improved intermediate bounds and/or splitting, consistent with broader lessons emphasized in VNN-COMP reports \cite{brix2024vnncomp}.

This interface follows the same certificate/checker style as the rest of \TorchLean{}. The Lean runner replays native IBP/CROWN style bounds on the shared IR; stronger untrusted producers (e.g., optimized $\alpha/\beta$-CROWN bounds, split certificates, or other solver artifacts) can be treated as certificate generators whose outputs are validated against the IR semantics. This gives a concrete path toward theorem-prover participation in standardized verification benchmarks by making semantics and checking the reference point, rather than re-implementing every optimizer heuristic in Lean \cite{brix2024vnncomp,vnncompSite}.

\subsection{Case Studies}
\label{app:verification-schemas}

\textbf{Certified robustness workflow (classifier).}
We illustrate the full verification pipeline on \emph{local} certified robustness for a classifier.
Let $f:\mathbb{R}^d\to\mathbb{R}^K$ denote the network's \emph{logit} function (so the predicted label is
$\arg\max_k f_k(x)$), and fix a test input $x_0$ with nominal label $y=\arg\max_k f_k(x_0)$.
Given a perturbation radius $\varepsilon$, we define the region of interest
$R=\{x:\|x-x_0\|_\infty\le \varepsilon\}$ (or an axis-aligned box that over-approximates it). Using IBP or CROWN/LiRPA,
we compute sound bounds $(\underline{z}_k,\overline{z}_k)$ such that
$f_k(x)\in[\underline{z}_k,\overline{z}_k]$ for all $x\in R$. IBP provides a fast interval enclosure baseline, while
CROWN/LiRPA tightens these bounds by propagating sound linear relaxations of nonlinearities and (optionally) using
objective-dependent back-substitution.

\begin{BlueTheorem}{Logit-margin robustness certificate}{margin-cert}
Let $f:\mathbb{R}^d\to\mathbb{R}^K$ be a classifier and $R$ an input region.
Assume we have sound logit bounds for all classes: for each $k$ and all $x\in R$,
$f_k(x)\in[\underline{z}_k,\overline{z}_k]$.
If
\[
\underline{z}_y \;>\; \max_{k\ne y}\overline{z}_k,
\]
then $\arg\max_k f_k(x)=y$ for all $x\in R$ (i.e., the prediction is certified $\ell_\infty$-robust on $R$).
\end{BlueTheorem}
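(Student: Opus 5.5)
The argument is a pointwise inequality chain. Fix an arbitrary $x\in R$. I will show that $f_y(x)>f_k(x)$ for every $k\neq y$. From that, $y$ is the unique maximizer of $k\mapsto f_k(x)$, so $\arg\max_k f_k(x)=y$.

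For a fixed $k\neq y$, the soundness hypothesis gives two facts. First, $f_y(x)\ge\underline{z}_y$, from the lower endpoint for class $y$. Second, $f_k(x)\le\overline{z}_k$, from the upper endpoint for class $k$. The finite maximum over $k\neq y$ satisfies $\overline{z}_k\le\max_{k'\ne y}\overline{z}_{k'}$. Chaining these with the margin hypothesis gives
\[
f_k(x)\;\le\;\overline{z}_k\;\le\;\max_{k'\ne y}\overline{z}_{k'}\;<\;\underline{z}_y\;\le\;f_y(x).
\]
So $f_k(x)<f_y(x)$. The key point is that the bounds hold uniformly over $R$, so the margin computed once from scalar bounds transfers to every $x$ in the region. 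No structure of $f$ beyond the enclosure is used, so the theorem is independent of whether the bounds came from IBP, CROWN, or a checked certificate; Theorem~\ref{thm:ibp-cert-sound} or the CROWN replay supplies the hypothesis.

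The main obstacle is the bookkeeping around $\arg\max$ and the finite maximum, not the inequality itself. Two points need care in the Lean formalization.

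\begin{enumerate}
\item \emph{Meaning of $\arg\max$.} I would state the conclusion as ``$\forall k\ne y,\ f_k(x)<f_y(x)$.'' From this, uniqueness of the maximizer follows, so any tie-breaking convention in an $\arg\max$ definition returns $y$.
\item \emph{The maximum over $k\ne y$.} This is a \texttt{Finset.sup'}-style maximum over a set that must be nonempty, so I need $K\ge2$. If $K=1$ the statement is trivial, since $y$ is the only class. To avoid the nonemptiness issue, I would replace the max by the equivalent hypothesis $\forall k\ne y,\ \overline{z}_k<\underline{z}_y$ via the lemma \texttt{Finset.sup'\_lt\_iff}.
\end{enumerate}

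Over floating-point bounds, the comparison must be read in $\mathbb{R}$ after \texttt{toReal}, or over the ordered scalar domain in which the enclosure is stated. With that, the proof is a short chain of order lemmas.
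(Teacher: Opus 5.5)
Your proposal is correct and matches the argument the paper intends. The paper gives no separate proof; it describes the margin check as discharged by a small, explicit Lean argument of inequality chaining, which is exactly your chain $f_k(x)\le\overline{z}_k\le\max_{k'\ne y}\overline{z}_{k'}<\underline{z}_y\le f_y(x)$ for each $k\ne y$, with your handling of the empty maximum when $K=1$ and of $\arg\max$ tie-breaking via strict dominance adding care the paper leaves implicit.
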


\textbf{How the certificate is checked.}
The certificate checking step is deliberately small and semantic: the checker validates that the reported bounds are a
sound enclosure of the IR semantics on $R$ (shape-consistent, op-consistent, and region-consistent), and then applies the
margin lemma above to conclude label invariance. This is exactly the ``infinite-to-finite'' reduction that makes bound
propagation useful: instead of enumerating all $x\in R$, we certify robustness by a finite set of inequalities on the
computed output enclosure. (This margin-style condition is standard in certified robustness pipelines.)

\textbf{Structure of the example.}
\begin{enumerate}[leftmargin=*, itemsep=0.25em]
\item \emph{Model definition:} define the classifier once in \TorchLean{}.
\item \emph{Lowering:} lower the same program to the shared operator-tagged SSA/DAG IR (the semantic target).
\item \emph{Bounding:} run IBP or CROWN/LiRPA on the IR to obtain logit enclosures over $R$.
\item \emph{Checking:} verify the enclosure constraints and discharge robustness via Lemma~\ref{thm:margin-cert}.
\end{enumerate}
The final claim is a Lean theorem about the denotation of the shared IR, not a post-hoc interpretation
of a tool output: external verifiers may \emph{produce} bounds, but Lean \emph{checks} that those bounds imply the semantic
property of interest on the executed artifact.

\textbf{Physics-Informed Neural Networks (PINNs).}
PINNs enforce physics by penalizing a PDE residual built from a neural field $u_\theta$ and its derivatives. We fix
(i) a residual operator $\mathcal{R}$ (e.g., for Burgers/heat-type equations), (ii) a domain $\Omega$, and (iii) a trained network $u_\theta$,
and aim to certify a uniform residual bound
$\sup_{x\in\Omega} |\mathcal{R}(u_\theta)(x)| \le \varepsilon$.
This is the standard PINN correctness goal: the learned model approximately satisfies the governing PDE across the domain.
The verification step is derivative-dependent, so we combine (a) a proved first-order reverse-mode result that anchors the meaning of
$\nabla u_\theta$ to the graph semantics (under the usual smoothness/pointwise side conditions), with (b) \emph{derivative-aware bound propagation}
on the shared operator-tagged graph that produces interval enclosures for $u_\theta$ and the specific derivatives appearing in $\mathcal{R}$
(including second derivatives for the low-dimensional smooth-operator subset used in the demos). We do \emph{not} obtain $u''$
by differentiating the backward-pass graph; instead, we bound the required derivatives directly via specialized bound-propagation passes on the
forward graph, and then combine these enclosures to conclude the residual inequality.

\begin{BlueTheorem}{Residual certificate pattern (PINNs)}{pinn-residual-cert}
Let $\Omega$ be a domain and suppose we have sound interval enclosures on $\Omega$ for every term that appears in the residual
$\mathcal{R}(u_\theta)$ (e.g., bounds on $u_\theta$, $\partial u_\theta/\partial t$, $\partial u_\theta/\partial x$, $\partial^2 u_\theta/\partial x^2$).
If these enclosures imply $|\mathcal{R}(u_\theta)(x)|\le \varepsilon$ for all $x\in\Omega$, then the PDE residual of $u_\theta$ is certified within
tolerance $\varepsilon$ on $\Omega$.
\end{BlueTheorem}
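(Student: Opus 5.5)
The argument is a pointwise one: interval enclosures plus monotonicity of real arithmetic, in the same style as the margin lemma (Theorem~\ref{thm:margin-cert}). First I make the hypothesis precise. The residual is written as a fixed real expression
\[
\mathcal{R}(u_\theta)(x)=\Phi\bigl(x,\,u_\theta(x),\,\partial_t u_\theta(x),\,\partial_x u_\theta(x),\,\partial_{xx} u_\theta(x)\bigr),
\]
built from the listed derivative terms by $+,-,\times$ and scalar constants. For Burgers, $\Phi = u_t + u\,u_x - \nu u_{xx}$.

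The enclosures give, for every $x\in\Omega$, that each term $T_j(x)$ lies in an interval $I_j=[l_j,u_j]$. The phrase ``these enclosures imply'' is formalized as follows: evaluating $\Phi$ in interval arithmetic on $(I_j)_j$ yields an interval $[L,U]$ with $-\varepsilon\le L$ and $U\le\varepsilon$. This is the finite inequality the checker tests.

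The core step is the fundamental inclusion property of interval arithmetic. If $T_j\in I_j$ for all $j$, then $\Phi(T)\in\widehat\Phi(I)$, where $\widehat\Phi$ is the interval extension of $\Phi$. I prove this by structural induction on the expression $\Phi$, using one soundness lemma per operation:
\begin{itemize}
\item addition adds endpoints;
\item negation swaps endpoints;
\item scaling by a constant $c$ chooses endpoints according to the sign of $c$;
\item multiplication takes the min and max of the four endpoint products.
\end{itemize}
These are exactly the local transfer rules whose soundness underlies Theorem~\ref{thm:ibp-cert-sound}. I reuse them rather than reprove them, since each is the per-node ``parents sound $\Rightarrow$ child sound'' step. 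Instantiating the inclusion at an arbitrary $x\in\Omega$ gives $\mathcal{R}(u_\theta)(x)\in[L,U]\subseteq[-\varepsilon,\varepsilon]$, and hence $|\mathcal{R}(u_\theta)(x)|\le\varepsilon$. Since $x$ was arbitrary, the supremum bound follows.

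The real obstacle is not this composition but ensuring that the enclosed quantities really are the derivatives of $\llbracket G\rrbracket$. The soundness of the dedicated first- and second-derivative bound passes must be stated against \texttt{HasFDerivAt}, or the iterated derivative, of the graph denotation. For the first derivative, I would anchor to Theorem~\ref{thm:backprop-fderiv}. This needs smoothness of every primitive on $\Omega$, which holds for the \texttt{tanh}/linear subset because those are globally correct nodes. The second-derivative pass needs its own per-node invariant propagating triples (value, first, second derivative) by the chain rule. I would prove this by the same induction over topological order, with enclosures for $f$, $f'$, $f''$ of each elementwise primitive.

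Within the present statement these enclosures are hypotheses, so the proof itself reduces to the interval-inclusion induction plus the final inequality. The one remaining care point is multiplication, needed for the nonlinear term $u\,u_x$. Its interval rule must be proved sound for intervals of arbitrary sign, which I would handle by a case split on the signs of the endpoints, or by the min/max-of-corners lemma.
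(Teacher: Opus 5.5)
Your proposal is correct and follows essentially the same route as the paper. The paper likewise splits the certificate into two obligations: (i) sound boxes for $u_\theta$, $u_t$, $u_x$, $u_{xx}$, supplied by the derivative-aware bound passes and taken as hypotheses, and (ii) ordinary interval arithmetic that combines those boxes along the residual formula, followed by a check that the resulting box lies in $[-\varepsilon,\varepsilon]$; your structural-induction inclusion argument spells out step (ii) in more detail than the paper's schematic checker does. One small note: interval multiplication of two variable intervals is not among the IBP transfer rules the paper lists, so the separate sign-case (or min/max-of-corners) lemma you mention at the end is genuinely needed, not reusable from Theorem~\ref{thm:ibp-cert-sound}.
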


\begin{lstlisting}[style=leanBox,caption={Schematic residual checker for a one-dimensional Burgers-style PINN.},label={lst:pinn-residual-checker}]
structure DerivBounds where
  u    : Box scalar
  ut   : Box scalar
  ux   : Box scalar
  uxx  : Box scalar

def burgersResidual (nu : Float) (b : DerivBounds) : Box scalar :=
  -- R(u) = u_t + u * u_x - nu * u_xx
  b.ut + intervalMul b.u b.ux - scaleBox nu b.uxx

def checkResidual? (nu eps : Float) (b : DerivBounds) : Bool :=
  let r := burgersResidual nu b
  r.lo >= -eps && r.hi <= eps
\end{lstlisting}

The checker separates two obligations. First, the derivative bound passes produce sound boxes for \(u\), \(u_t\), \(u_x\), and \(u_{xx}\) over the verification region. Second, ordinary interval arithmetic combines those boxes according to the PDE residual formula. This separation keeps the theorem statement short: any future residual operator can reuse the same pattern once its derivative boxes and algebraic interval rules are available.

\textbf{Neural controller (Lyapunov-style safety/stability).}
We also evaluate a two stage controller-verification workflow common in learning-enabled control: Stage~1 (training/search) proposes a feedback
controller $u(x)$ together with a Lyapunov candidate $V(x)$; Stage~2 (certification) proves region-based inequalities that imply safety/stability
(e.g., $V(x)\ge 0$ and $\dot V(x)\le -\rho(\|x\|)$ on a region). This aligns with recent neural-control verification pipelines that use bound
propagation and $\alpha/\beta$-CROWN-style tooling to certify Lyapunov conditions.
In our setting, we compute enclosures for $V(x)$ and
$\dot V(x)=\nabla V(x)\cdot f(x,u(x))$ over the region (with $\nabla V$ grounded by the same autograd semantics as execution), and the final claim
is discharged by checking the resulting inequalities as a theorem about the shared IR denotation.

\begin{BlueTheorem}{Lyapunov certificate pattern (region-based)}{lyapunov-cert}
Let $V$ be a candidate Lyapunov function for a closed-loop system $\dot x=f(x,u(x))$ on a region $R$.
If certified bounds over $R$ establish $V(x)\ge 0$ (and $V(x)=0$ only at the equilibrium) and $\dot V(x)\le 0$ (or a strict decrease condition)
for all $x\in R$, then the corresponding safety/stability property holds on $R$ under the chosen Lyapunov criterion.
\end{BlueTheorem}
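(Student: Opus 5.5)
The plan is to reduce the statement to the classical Lyapunov theorem applied to the true closed-loop vector field. All the neural and verification machinery should enter only through pointwise inequalities.

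\emph{Step 1: turn certified bounds into semantic inequalities.} By Theorem~\ref{thm:runibp-full pipeline}, or the corresponding CROWN replay soundness for the certificate dialect, the boxes computed for the nodes representing $V$ and $\dot V$ enclose the denotation $\llbracket G\rrbracket$ at every $x\in R$. The checker verifies $\underline{V}\ge 0$ and $\overline{\dot V}\le 0$, or $\le -\rho$ on the relevant subregions. From this we obtain, for all $x\in R$, the semantic facts $V(x)\ge 0$ and $\widehat{\dot V}(x)\le 0$. Here $\widehat{\dot V}$ is the graph-computed quantity $\mathrm{backprop}(G_V,x,1)\cdot f(x,u(x))$.

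\emph{Step 2: identify the computed derivative with the true one.} Apply Theorem~\ref{thm:rev-correct} to $G_V$, in its pointwise variant on the set where the side conditions hold. This gives $\mathrm{backprop}(G_V,x,1)=\nabla V(x)$, so $\widehat{\dot V}(x)=\langle\nabla V(x),f(x,u(x))\rangle$. By the chain rule, this is the derivative of $t\mapsto V(x(t))$ along any trajectory of $\dot x=f(x,u(x))$ that stays in $R$. The condition "$V=0$ only at the equilibrium" cannot come from interval bounds. I would take it as a separate hypothesis, or handle it by checking strict positivity on $R$ minus a small box around the equilibrium and arguing by structure on that box.

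\emph{Step 3: conclude by the Lyapunov criterion.} Along a trajectory inside $R$, $V(x(t))$ is nonincreasing by the mean value theorem, since its derivative is nonpositive. Sublevel sets $\{V\le c\}$ contained in $R$ are therefore forward invariant, which gives safety. Stability in the sense of Lyapunov follows from the standard argument: positive definiteness together with continuity gives small sublevel sets nested in small balls. Under the strict decrease condition, asymptotic stability follows from LaSalle's principle or a direct integral argument.

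The main obstacle is Step 2 at kinks. If $V$ or $u$ contains ReLU nodes, Theorem~\ref{thm:rev-correct} holds only pointwise, while the Lyapunov argument needs $\tfrac{d}{dt}V(x(t))$ along whole trajectories. My first approach is to assume $V$ is built from smooth primitives, so the global theorem applies, and to let $u$ be merely Lipschitz, which is enough for existence of solutions. Otherwise the statement must be read "under the chosen criterion" as an explicit hypothesis, either a Clarke-derivative condition or a measure-zero argument using absolute continuity of $V\circ x$.
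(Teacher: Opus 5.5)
Your argument is essentially sound under the hypotheses as stated, but it puts its weight somewhere other than where the paper does.

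\textbf{What the paper does.} It treats the passage from pointwise inequalities on $R$ to safety/stability as ``the chosen Lyapunov criterion'', i.e.\ a black box. Its own argument is only the finite reduction. The certificate partitions $R$ into finitely many boxes, and sound enclosures bound $V$ and $\dot V$ on each box. Lean then checks the per-box inequalities (e.g.\ $\underline V\ge 0$ and $\overline{\dot V}\le-\rho$ on boxes away from the equilibrium) \emph{together with} the coverage assumptions recorded in the certificate.

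\textbf{Where your route differs.} Your Step~1 compresses this into ``on the relevant subregions'' and never mentions coverage. Coverage is exactly what licenses concluding the inequalities for all $x\in R$, so you should make it explicit. Your Steps~2 and~3 do work the paper leaves implicit. You identify the certified quantity with the true orbital derivative via the reverse-mode correctness theorem, where the paper only remarks that $\nabla V$ is grounded in the autograd semantics. You then rederive the classical criterion.

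\textbf{What each buys.} The paper's route buys modularity: the Lean-side obligations stay finite and checkable, no ODE theory must be formalized, and the same pattern serves whatever criterion is plugged in. Yours buys an honest list of what that black box silently requires:
\begin{itemize}
\item existence of solutions;
\item differentiability of $V$ along trajectories, so ReLU kinks in $V$ are a real obstruction and smooth $V$ with Lipschitz $u$ is the natural setting;
\item a source for positive definiteness near the equilibrium, which interval bounds cannot supply.
\end{itemize}

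\textbf{Two details in Step~3.} If you carry Step~3 out in full, two points need tightening.
\begin{itemize}
\item Forward invariance of $\{V\le c\}$ should assume this set is compact and contained in the \emph{interior} of $R$. The decrease condition is only known on $R$, and the simple exit-time argument needs a neighborhood of the exit point inside $R$.
\item A uniform strict decrease $\dot V\le-\rho$ with $\rho>0$ cannot hold at the equilibrium, where $f=0$ forces $\dot V=0$. That is why the paper's checker imposes it only on boxes away from the equilibrium. With such a certificate, your argument shows that trajectories from a suitable sublevel set reach the excluded neighborhood in finite time and remain in a sublevel set afterwards. Asymptotic stability then needs a separate local argument inside that neighborhood, not LaSalle applied to all of $R$.
\end{itemize}
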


\begin{lstlisting}[style=leanBox,caption={Schematic Lyapunov inequality check over a certified region.},label={lst:lyapunov-checker}]
structure LyapBounds where
  Vlo   : Float
  Vhi   : Float
  VdotLo : Float
  VdotHi : Float

def checkLyapunovBox (b : LyapBounds) (rho : Float) : Bool :=
  -- Typical sufficient obligations on a box away from the equilibrium.
  (0.0 <= b.Vlo) && (b.VdotHi <= -rho)

def checkPartition (boxes : List LyapBounds) (rho : Float) : Bool :=
  boxes.all (fun b => checkLyapunovBox b rho)
\end{lstlisting}

This is the finite reduction used by the controller case study. The analytic Lyapunov criterion quantifies over a region; the certificate partitions the region into finitely many boxes and provides bounds for \(V\) and \(\dot V\) on each box. Lean checks the per-box inequalities and the coverage assumptions recorded in the certificate schema.

\textbf{Hopfield networks and global dynamics (complementary case study).}
Hopfield networks are recurrent, energy-based models whose core guarantees are \emph{global dynamical} statements: rather than certifying a local property around one input, we prove properties of entire \emph{trajectories} generated by repeatedly applying an update rule (e.g., asynchronous coordinate updates). Classically, one defines an energy/Lyapunov functional $E(x)$ and proves that each update step does not increase $E$; because the state space is finite, monotone energy then implies convergence to a fixed point (an attractor) under standard symmetry and zero-diagonal conditions. 

In the recent paper, Formalized Hopfield Networks and Boltzmann Machines \cite{cipollina2025hopfield}, the authors develop a Lean~4 formalization aimed
squarely at such global properties like convergence for deterministic Hopfield dynamics and ergodicity for stochastic Boltzmann machines. They emphasize that many ML formalizations mirror modern execution frameworks (sequential layers or DAG-style
computational graphs), which is convenient for feedforward computation but does not directly express recurrent update semantics; because their
focus is convergence/ergodicity of \emph{recurrent} models, they adopt a graph-based dynamical-systems perspective (directed graphs/Markov
kernels) integrated with mathlib/PhysLean~\citep{mathlib2020}.
Their discussion highlights a real tradeoff: representations optimized for feedforward dataflow can require extra
machinery (unfolding or fixed-point semantics) to model recurrence faithfully.

Our treatment is complementary: we express the Hopfield update operator and energy functional \emph{inside the same typed tensor/program semantics} used throughout \TorchLean{}, and we prove the standard monotone-energy and convergence-style results by reasoning about an explicit state-transition map (iteration) plus a Lyapunov decrease argument. These results show that an SSA/DAG semantic core for \emph{computation} can still support proofs about \emph{global} dynamical behavior.

\begin{BlueTheorem}{Hopfield energy decrease and convergence (informal template)}{hopfield-energy}
Let $x\in\{-1,+1\}^n$ be the network state, with symmetric weights $W=W^\top$ and zero diagonal, and thresholds $\theta$.
Define the energy
\[
E(x)\;=\;-\tfrac12\, x^\top W x\;+\;\theta^\top x.
\]
Consider asynchronous updates that flip a single coordinate $x_i$ according to the sign of its local field
$(Wx-\theta)_i$ (with a fixed tie-handling convention).
Then each update step satisfies $E(x^{t+1}) \le E(x^t)$, and hence any trajectory produced by repeatedly applying the update
rule reaches a fixed point in finitely many steps (a local minimum of $E$ under the update dynamics).
\end{BlueTheorem}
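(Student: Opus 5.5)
The plan is the classical Lyapunov argument: an exact one-step energy identity, a sign argument from the update rule, and then finiteness of the state space. Fix a state $x\in\{-1,+1\}^n$ and an update at coordinate $i$, producing $x'$ with $x'_j=x_j$ for $j\ne i$. Write $h_i(x)=(Wx-\theta)_i=\sum_j W_{ij}x_j-\theta_i$ and $\Delta=x'_i-x_i\in\{-2,0,2\}$. Expanding the quadratic form and using $W=W^\top$ gives
\[
E(x')-E(x)=-\Delta\,(Wx)_i-\tfrac12 W_{ii}\Delta^2+\theta_i\Delta .
\]
With $W_{ii}=0$ this reduces to $E(x')-E(x)=-\Delta\,h_i(x)$. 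Here symmetry merges the two cross terms $\tfrac12\Delta\sum_j(W_{ij}+W_{ji})x_j$ into $\Delta(Wx)_i$, and the zero diagonal kills the self-interaction term, so that $(Wx)_i$ involves only the unchanged coordinates.

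\textbf{Sign argument.} The update sets $x'_i=\operatorname{sgn}(h_i(x))$ under the fixed tie convention. There are two cases.
\begin{itemize}
\item If $h_i(x)\ne0$, then either $\Delta=0$ (no change), or $\Delta$ has the same sign as $h_i(x)$. In the latter case $-\Delta h_i(x)=-2|h_i(x)|<0$.
\item If $h_i(x)=0$, the change is $0$ regardless of the tie rule.
\end{itemize}
Hence $E(x^{t+1})\le E(x^t)$. Moreover, the energy strictly decreases whenever the state changes with $h_i\ne0$. In Lean I would prove the identity as a lemma over $\mathbb{R}$ using \texttt{Finset.sum} manipulations with \texttt{Function.update}, then do the case split on the sign.

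\textbf{Convergence.} The main obstacle is ties. If the convention is ``keep $x_i$ when $h_i=0$'', every actual change strictly decreases $E$. Since $\{-1,+1\}^n$ is finite, $E$ takes finitely many values, so along any trajectory only finitely many strict decreases, and hence finitely many changes, can occur. If instead the convention maps ties to $+1$, a flip $-1\to+1$ at zero field leaves $E$ unchanged. In that case I would use a lexicographic potential $(E(x),\,\#\{j:x_j=-1\})$: tie-flips preserve $E$ and strictly decrease the count, while every other change strictly decreases $E$. Either way a well-founded measure on a finite set strictly decreases at each change, giving termination by well-founded induction.

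\textbf{Fixed point.} A trajectory can still be ``stuck'' only in the scheduling sense. So I would state convergence as: under a fair schedule that visits every coordinate within each sweep of $n$ steps, after finitely many sweeps a full sweep produces no change. That final state is a fixed point of every coordinate update, i.e.\ a local minimum of $E$ under single flips.
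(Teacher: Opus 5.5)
Your proof is correct and follows the same classical Lyapunov route the paper invokes. The one-step identity $E(x')-E(x)=-\Delta\,h_i(x)$, which uses symmetry and the zero diagonal, gives monotone energy, and finiteness of $\{-1,+1\}^n$ gives termination. Your extra care makes precise what the paper's informal justification (``monotone energy plus finiteness implies convergence'') leaves implicit: you use a lexicographic potential for energy-neutral tie flips, and some such device is genuinely needed, since a toggling tie rule would break convergence. You also add a fairness assumption on the schedule, which is what lets you conclude that the terminal state is a fixed point of every coordinate update.
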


\section{Universal Approximation Theorems}
\label{app:uat}

\paragraph{Universal approximation (real-valued).}
A universal approximation theorem (UAT) formalizes the idea that a simple network class can approximate \emph{any} continuous
function on a compact domain to arbitrary accuracy. Classic results show that single-hidden-layer networks are dense in
$C(K)$ for compact $K\subseteq\mathbb{R}^d$ under mild activation assumptions \cite{cybenko1989approximation,hornik1991approximation};
more recent work characterizes approximation rates for ReLU networks \cite{yarotsky2017relu}.
In \TorchLean{}, we mechanize a standard real-valued ReLU UAT on compact domains and then specialize it to boxes such as $[-1,1]^d$.

\begin{BlueTheorem}{ReLU universal approximation on compact sets}{relu-uat-compact}
For every compact $K \subseteq \mathbb{R}^d$, every continuous $f:K\to\mathbb{R}$, and every $\varepsilon>0$,
there exist a width $m$ and a single-hidden-layer ReLU network $N_\theta$ such that
\[
  \sup_{x\in K}\,|N_\theta(x)-f(x)| < \varepsilon.
\]
\end{BlueTheorem}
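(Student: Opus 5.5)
Our plan is the classical route through uniform continuity and piecewise-linear interpolation, rather than the Hahn--Banach/Cybenko route, because it produces an explicit network and avoids measure-theoretic duality. Before approximating anything, we reduce the problem in two steps. First, by Tietze extension (available in mathlib), we extend $f$ to a continuous function on a closed box $[-M,M]^d \supseteq K$. Second, since $f$ is uniformly continuous on that box (Heine--Cantor), it suffices to approximate continuous functions uniformly on boxes. A sup bound on the box restricts to a sup bound on $K$, and the strict inequality $<\varepsilon$ is obtained by targeting $\varepsilon/2$.

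We then pass from a single hidden layer's scalar building blocks to the multivariate target. We will use the ridge-function density argument. The span of functions $x\mapsto g(\langle w,x\rangle)$, with $g$ continuous on $\mathbb{R}$ and $w \in \mathbb{R}^d$, is dense in $C([-M,M]^d)$. This follows because that span contains all $\cos(\langle w,x\rangle)$ and $\sin(\langle w,x\rangle)$, so it contains the trigonometric polynomials (or, alternatively, the polynomials, via $\langle w,x\rangle^k$ and polarization). By Stone--Weierstrass in mathlib, these are dense. Concretely, we will show the span of $x\mapsto \exp(\langle w,x\rangle)$ is a subalgebra that separates points and contains constants. This is the cleanest Stone--Weierstrass instance, since products of exponentials of linear forms are again exponentials of linear forms. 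So we first approximate $f$ within $\varepsilon/2$ by a finite sum $\sum_j c_j \exp(\langle w_j,x\rangle)$.

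Each univariate factor $t\mapsto \exp(t)$ must then be approximated on the compact interval $[-R,R]$ with $R=\max_j \|w_j\|_1 M$. We do this by its piecewise-linear interpolant on a uniform grid with mesh $h$. Any continuous piecewise-linear function on $[-R,R]$ with breakpoints $t_0<\dots<t_N$ equals
\[
a + \sum_{k} b_k\,\mathrm{ReLU}(t-t_k),
\]
where $b_k$ is the slope change at $t_k$. Uniform continuity of $\exp$ on $[-R,R]$ makes the interpolation error at most the modulus of continuity at $h$. Choosing $h$ so this error times $\sum_j|c_j|$ is below $\varepsilon/2$ finishes the estimate. Substituting $t=\langle w_j,x\rangle$ gives hidden units $\mathrm{ReLU}(\langle w_j,x\rangle - t_k)$. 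Collecting all $(j,k)$ pairs, with the constant absorbed into an output bias, yields a single-hidden-layer network $N_\theta$ of width $m=\sum_j N_j$.

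We expect the main obstacle to be formalization bookkeeping rather than mathematics: stating the ReLU representation of piecewise-linear interpolants with a clean induction on breakpoints, and matching the resulting sum to the paper's network datatype (weights, biases, output layer) so that $N_\theta$ literally evaluates to it. We would first prove the one-dimensional lemma ``every continuous $g$ on $[a,b]$ is a uniform limit of $a+\sum b_k\mathrm{ReLU}(t-t_k)$'' by induction on the number of grid cells, adding one hinge per cell. Then we would lift it by composition with linear maps, invoking mathlib's Stone--Weierstrass only once for the exponential subalgebra.
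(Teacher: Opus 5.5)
Your proof is correct, but it takes a different route from the one the paper points to. The paper gives no argument for this theorem. It only says that a standard ReLU UAT is mechanized on compact domains and then specialized to boxes such as $[-1,1]^d$, and it cites Cybenko and Hornik.

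The cited proofs are duality arguments. If the closure of the network span were a proper subspace, Hahn--Banach and Riesz representation would give a nonzero signed measure annihilating every unit $\sigma(\langle w,x\rangle+b)$. One then shows that the activation is discriminatory. For ReLU this requires first passing to the bounded sigmoidal combination $\mathrm{ReLU}(t)-\mathrm{ReLU}(t-1)$, since those results assume bounded activations.

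You replace all of this with one application of Stone--Weierstrass to the span of $\exp(\langle w,x\rangle)$. That span is closed under products, contains $1$ via $w=0$, and separates points via $w=x-y$. You then use one-dimensional hinge interpolation of $\exp$ on $[-R,R]$. What this buys:
\begin{itemize}
\item an explicit network, with units $\mathrm{ReLU}(\langle w_j,x\rangle-t_k)$, output weights $c_jb_k$ and output bias $a\sum_j c_j$;
\item no measure theory;
\item a one-dimensional lemma of exactly the hinge-template kind the paper later uses in its Float32 error theorems.
\end{itemize}
Neither route controls the width $m$. In yours it depends on $\sum_j|c_j|$ and $R$, which Stone--Weierstrass leaves uncontrolled.

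Some simplifications are worth making:
\begin{itemize}
\item The Tietze step is unnecessary. Stone--Weierstrass applies directly in $C(K)$, because the restricted exponentials still contain constants and separate points of $K$. The only thing the box supplies is the bound $|\langle w_j,x\rangle|\le R$, and that already follows from boundedness of $K$. Working on $K$ directly also matches the paper's order (compact sets first, boxes as a special case) rather than reversing it.
\item Uniform continuity of $f$ is never used. The only uniform continuity you need is that of $\exp$ on $[-R,R]$. There the Lipschitz bound $|e^s-e^t|\le e^R|s-t|$ lets you fix the mesh explicitly by $e^R h\sum_j|c_j|<\varepsilon/2$, which is easier to formalize than a modulus of continuity.
\item Take $R\ge 1$ so that the grid is nondegenerate when every $w_j=0$.
\end{itemize}
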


\subsection{Float32-execution Soundness}
\label{app:f32soundness}

\paragraph{Why the Float32 setting is harder.}
Real-valued UATs reason about exact arithmetic, whereas deployed models run under finite-precision float semantics.
In this setting, even the ``target function'' must be interpreted carefully: the executed program computes a \emph{rounded}
function, and abstract-interpretation semantics (e.g., intervals) must account for rounding and finite-range effects.
Recent work proves a floating-point analog of \emph{interval universal approximation} (IUA), showing that floating-point networks
can capture the \emph{direct image map} of a suitably rounded target function under interval semantics \cite{hwang2025fpua}.
In contrast, our goal in \TorchLean{} is to build a compositional, checkable foundation that supports practical verification
workflows under explicit Float32 semantics, without claiming the full ``exact-hull direct image'' property for arbitrary networks.

\textbf{Our claims.}
We do \emph{not} claim a full IUA theorem of the form ``the interval semantics returns the exact output-range hull for every input box''
for arbitrary networks under Float32 execution. Instead, we mechanize three ingredients that are sufficient for
approximation-style arguments and verification pipelines:
\begin{itemize}[leftmargin=*, itemsep=0.25em]
\item \textbf{Executable Float32 semantics in the prover.} An executable IEEE-754 binary32 model (\texttt{IEEE32Exec})
that gives a precise internal meaning to ``Float32 execution'' (including corner cases) for full demos and checking.
\item \textbf{Sound interval evaluation on a supported fragment.} A theorem-ready specification of interval semantics and a proved-sound
interval evaluator for a small ReLU-MLP fragment, enabling certified enclosures needed by robustness/PINN/controller checks.
\item \textbf{Error decomposition for Float32-execution approximation.} Theorem templates that bound execution error by an explicit sum of terms:
(real approximation error) $+$ (parameter/representation error, e.g.\ quantization) $+$ (IEEE rounding/execution error),
under explicit finiteness/no-NaN/no-Inf hypotheses.
\end{itemize}

\textbf{Note:} Where \cite{hwang2025fpua} targets a floating-point IUA result that exactly matches the interval \emph{direct image map} of a rounded
target function, our development emphasizes compositional error bounds and sound enclosures that integrate cleanly with certificate checking
and full verification workflows, without requiring the full exact-hull IUA guarantee for all networks.

\paragraph{Notation for the theorems below.}
Let $\mathbb{F}_{32}$ denote IEEE-style Float32 values (excluding NaNs/Infs when we explicitly assume \emph{finite execution}),
and let $\mathrm{toReal}:\mathbb{F}_{32}\to\mathbb{R}$ interpret a finite float as a real number.
A \emph{box} $B$ is an axis-aligned product of Float32 intervals (a valid element of the interval domain), and
$\gamma(B)\subseteq \mathbb{F}_{32}^d$ denotes its \emph{concretization} (the set of Float32 points represented by $B$).
We write $\nu:\mathbb{F}_{32}^d\to\mathbb{F}_{32}$ for point semantics (Float32 execution) and
$\nu^\sharp:\textsf{Box}_{32}^d\to \textsf{Box}_{32}$ for interval semantics (abstract interpretation over boxes).

\begin{BlueTheorem}{Exact-hull specification (Float32 interval semantics)}{iua-exact-hull-shape}
Fix $d\in\mathbb{N}$ and a NaN-free target function $\hat f:\mathbb{F}_{32}^d\to\mathbb{F}_{32}$.
Assume a point semantics $\nu$ and an interval semantics $\nu^\sharp$ such that $\nu^\sharp(B)$ is a box output for any valid input box $B$.
We say $\nu^\sharp$ returns the \emph{exact hull} of $\hat f$ on $B$ if, for every valid box $B\subseteq [-1,1]^d$, there exist witnesses
$m,M\in\mathbb{F}_{32}$ satisfying:
\begin{enumerate}[leftmargin=*, itemsep=0.2em]
\item (\textbf{Witnessed extrema}) $m$ and $M$ witness the range of $\hat f$ on the concretization:
      $m \le \hat f(x) \le M$ for all $x\in\gamma(B)$, and there exist $x_m,x_M\in\gamma(B)$ such that
      $\hat f(x_m)=m$ and $\hat f(x_M)=M$.
\item (\textbf{Hull equality}) the concretization of the abstract output box is exactly the interval hull:
\[
  \gamma(\nu^\sharp(B)) \;=\; \{\,y\in\mathbb{F}_{32}\mid m\le y \wedge y\le M\,\}.
\]
\end{enumerate}
\end{BlueTheorem}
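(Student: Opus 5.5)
Since the statement is a specification rather than an implication, the content to establish is that it is \emph{well-posed and satisfiable}. Concretely, I would show two things. First, for every valid box $B\subseteq[-1,1]^d$ and NaN-free $\hat f$, witnesses $m,M$ with property (1) always exist and are essentially unique. Second, there is at least one interval semantics $\nu^\sharp$ meeting both (1) and (2), so the exact-hull notion is not vacuous. The intended route is elementary finiteness: Float32 is a finite type, so everything reduces to finite minima and maxima.

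\textbf{Step 1 (finiteness and nonemptiness).} Since $\mathbb{F}_{32}$ is finite, $\gamma(B)$ is a finite subset of $\mathbb{F}_{32}^d$, and in Lean it can be presented as a \texttt{Finset}. Validity of $B$ (lower endpoint $\le$ upper endpoint in each coordinate, no NaN endpoints) should give $\gamma(B)\neq\emptyset$: the lower-corner point lies in it.

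\textbf{Step 2 (witnesses).} The image $\hat f(\gamma(B))$ is a finite nonempty set of NaN-free floats. Restricted to NaN-free values, the IEEE comparison is a total preorder. Taking \texttt{Finset.min'} and \texttt{Finset.max'} with respect to this preorder gives $m$ and $M$, together with attaining points $x_m,x_M$. This establishes property (1).

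Uniqueness holds only up to the preorder's equivalence, which identifies $-0$ and $+0$. I would therefore either quotient by that equivalence or fix a canonical choice, for example preferring $-0$ for $m$ and $+0$ for $M$. Either way, the set $\{y\mid m\le y\le M\}$ in (2) does not depend on the choice, since both zeros satisfy the same comparisons.

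\textbf{Step 3 (satisfiability).} Define a brute-force abstract semantics $\nu^\sharp_{\mathrm{hull}}(B):=[m,M]$, using the witnesses from Step 2. By the definition of the concretization of a one-dimensional box, $\gamma([m,M])=\{y\mid m\le y\wedge y\le M\}$, so property (2) holds by unfolding definitions. It remains to check that $[m,M]$ is a valid output box, which follows from $m\le M$ and NaN-freeness. I would also record the easy consequence that any $\nu^\sharp$ satisfying the specification is sound, i.e.\ $\hat f(\gamma(B))\subseteq\gamma(\nu^\sharp(B))$, which connects the specification to the IBP soundness theorems (Theorem~\ref{thm:ibp-cert-sound}).

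\textbf{Main obstacle.} I expect the delicate step to be the order-theoretic handling of IEEE values. $\le$ on \texttt{IEEE32Exec} is not antisymmetric because of signed zeros, and infinities must be kept in the admissible domain. So Step 2 cannot use a \texttt{LinearOrder} instance directly.

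I would first try to build a \texttt{LinearOrder} on the NaN-free subtype via a key map. The key would be \texttt{toReal} extended to $\pm\infty$, using $\mathrm{EReal}$, with signed zeros identified. I would then do min and max on keys and pull the witnesses back through the finite preimage. If that turns out awkward, the fallback is to prove existence of minimal and maximal elements directly by induction on the \texttt{Finset}, using only totality and transitivity of the comparison.
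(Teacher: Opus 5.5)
You are right that this is a definition, not a claim. The paper states what it means for $\nu^\sharp$ to return the exact hull and attaches no proof. It offers only the ``Why witnesses?'' remark: the witness-plus-hull-equality phrasing is chosen so as not to commit to how Float32 \texttt{min}/\texttt{max} behave at ties or NaN boundaries. So there is no paper argument to match. What you supply is an extra sanity check: existence of the witnesses for every valid box, plus non-vacuity via the brute-force semantics $\nu^\sharp_{\mathrm{hull}}(B)=[m,M]$. That check is sound:
\begin{itemize}
\item finiteness of $\mathbb{F}_{32}$ and non-emptiness of $\gamma(B)$ (from validity) reduce (1) to a finite extremum problem;
\item totality and transitivity of the IEEE comparison on NaN-free values then give (1); cleanest is \texttt{Finset.exists\_min\_image} with an $\mathrm{EReal}$-valued key that reflects $\le$, taking $m:=\hat f(x_m)$;
\item (2) is definitional for the hull box.
\end{itemize}
Your signed-zero analysis makes precise the paper's informal rationale. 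The set $\{y\mid m\le y\le M\}$ does not depend on which zero is chosen, which is exactly why the spec uses witnesses and the hull set rather than a box equation $\nu^\sharp(B)=[\min,\max]$.

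A few cautions. If $=$ in $\hat f(x_m)=m$ means bit-level equality, your ``prefer $-0$ for $m$'' applies only when $-0$ is actually attained in the image. Whether $[m,M]$ with an infinite endpoint counts as a valid box depends on the interval domain's validity predicate; the statement itself only asks that $\nu^\sharp(B)$ be a box. Your soundness corollary $\hat f(\gamma(B))\subseteq\gamma(\nu^\sharp(B))$ is relative to $\hat f$, not to the point semantics $\nu$. Note that $\nu$ does not appear in (1) or (2) at all, so the link to the IBP certificate-soundness theorem needs $\nu$ and $\hat f$ to agree on $\gamma(B)$. Finally, be clear about what non-vacuity buys. $\nu^\sharp_{\mathrm{hull}}$ is built from $\hat f$ itself, so it says nothing about whether the compositional interval semantics of any Float32 network attains the exact hull. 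That is the float-IUA direction, which the paper explicitly does not claim.
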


\noindent
\emph{Why witnesses?}
This avoids committing to a particular implementation of float \texttt{min}/\texttt{max} at ties or NaN boundaries: the statement only requires
existence of range witnesses and equality to the induced hull, which is the robust ``Eq.(14)-style'' specification used in float-IUA work.

\begin{BlueTheorem}{From real approximation to Float32-execution error (1D hinge template)}{ieee-approx-1d}
Let $f:[a,b]\to\mathbb{R}$ be a real target function and let $h_{\mathbb{R}}$ be a hinge-style 1D network (or piecewise-linear ReLU template)
parameterized by $\theta$.
Assume a real approximation bound
\[
  \sup_{x\in[a,b]}\,|f(x)-h_{\mathbb{R}}(x;\theta)| \le \varepsilon_{\textsf{approx}}.
\]
Let $h_{32}$ denote the corresponding Float32-executed program (same structure/parameters, executed in Float32) and assume
\emph{finite execution} on $[a,b]$ (no NaN/Inf and no overflow-to-Inf along the evaluation).
Then for every real input $x\in[a,b]$ and any Float32 encoding $x_{32}$ of $x$,
\[
  \bigl| f(x) - \mathrm{toReal}(h_{32}(x_{32};\theta)) \bigr|
  \;\le\;
  \varepsilon_{\textsf{approx}} \;+\; \varepsilon_{\textsf{exec}}(x),
\]
where $\varepsilon_{\textsf{exec}}(x)$ is an explicit rounding/execution envelope for this template on $[a,b]$
(e.g., a compositional bound obtained by summing per-primitive rounding terms under the finiteness hypothesis).
\end{BlueTheorem}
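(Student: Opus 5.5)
The proof is a triangle inequality that splits the total error into an approximation term and an execution term:
\[
|f(x)-\mathrm{toReal}(h_{32}(x_{32};\theta))| \le |f(x)-h_{\mathbb{R}}(x;\theta)| + |h_{\mathbb{R}}(x;\theta)-\mathrm{toReal}(h_{32}(x_{32};\theta))|.
\]
The first term is at most $\varepsilon_{\textsf{approx}}$ by hypothesis, since $x\in[a,b]$. Everything then reduces to showing that the second term is bounded by an explicit envelope, which we take as the definition of $\varepsilon_{\textsf{exec}}(x)$. The work is in producing that envelope compositionally, so that it is a concrete sum of per-primitive rounding terms rather than a tautology.

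To bound the execution term, I would first use the internal refinement theorem (Theorem~\ref{thm:ieee-to-round-refinement}). Under the finite-execution hypothesis, it shows that $\mathrm{toReal}(h_{32}(x_{32};\theta))$ equals the round-on-$\mathbb{R}$ (FP32) evaluation of the same hinge expression. Concretely, this is the real expression with $\mathrm{Round}_{32}$ applied after every add, mul and max, and with inputs $\mathrm{toReal}(x_{32})$ and $\mathrm{toReal}(\theta_{32})$. The compositional clause of that theorem applies because the hinge template
\[
h(x)=c+\sum_j w_j\max(0,a_jx+b_j)
\]
is built only from the core primitives. The problem is thereby moved entirely into the NF/FP32 proof model.

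Next, I would run the local-to-global NF error propagation from Appendix~\ref{app:nf-compose} along the fixed evaluation order of the template. I would set up three base relations.
\begin{itemize}
\item \emph{Input encoding:} $|\mathrm{toReal}(x_{32})-x|\le \delta_x$, where $\delta_x$ is half an ulp of $x$, or $0$ if $x$ is representable.
\item \emph{Parameter encoding:} an analogous representation or quantization error for each parameter.
\item \emph{Per-operation rounding:} each operation satisfies $|\mathrm{Round}_{32}(r)-r|\le u|r|$, or an absolute subnormal bound, which holds because there is no overflow.
\end{itemize}
Each node then satisfies a perturbation lemma.
\begin{itemize}
\item Affine nodes $a_jx+b_j$ propagate errors with Lipschitz constant $|a_j|$ plus a new rounding term.
\item $\max(0,\cdot)$ is $1$-Lipschitz and is exact on finite values.
\item The products $w_j(\cdot)$ scale the incoming error by $|w_j|$ plus a rounding term.
\item The accumulated sum adds a rounding term at each partial sum.
\end{itemize}
An induction over the node order gives an explicit $\varepsilon_{\textsf{exec}}(x)$ as the sum of these terms. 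The terms are evaluated at the real intermediate values, which are bounded on $[a,b]$ because the domain is compact and the parameters are fixed.

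The main obstacle is the reliance on the first step: the refinement holds only on the finite path, and the rounding bound $|\mathrm{Round}_{32}(r)-r|\le u|r|+\eta$ has to be justified uniformly, including in the subnormal range. I would handle this by using the combined relative-plus-absolute form with $\eta$ the subnormal spacing, so no normal-range assumption is needed. The finiteness hypothesis then supplies the no-overflow side condition at every node.

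The second subtlety is the quantifier "any Float32 encoding $x_{32}$ of $x$". I would interpret it as $\mathrm{toReal}(x_{32})=\mathrm{Round}_{32}(x)$, or more generally as any encoding within a stated tolerance $\delta_x$, and thread $\delta_x$ into the envelope.
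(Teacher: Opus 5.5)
Your proposal is correct and follows the paper's intended argument: split the error by the triangle inequality through $h_{\mathbb{R}}(x;\theta)$, bound the first term by $\varepsilon_{\textsf{approx}}$, and obtain $\varepsilon_{\textsf{exec}}(x)$ in two moves. First transfer the \texttt{IEEE32Exec} evaluation to the FP32 round-on-$\mathbb{R}$ model via the finite-path refinement. Then compose the per-primitive NF error lemmas in evaluation order, with input and parameter encoding folded in, as the paper's three-term decomposition also allows.

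One small correction: each node's rounding term is naturally $u|\tilde r|+\eta$, where $\tilde r$ is computed from the already-perturbed float intermediates, not from the exact real intermediates. To state the envelope in terms of the real intermediates you need the recursive bound $|\tilde r|\le|r|+e$, which adds harmless second-order terms and still gives an explicit envelope.
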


\label{app:mlp-worked}
\begin{BlueTheorem}{Three-term Float32-execution error decomposition (single-hidden-layer ReLU MLP)}{f32-three-term}
Let $f:K\to\mathbb{R}$ be a target function on a domain $K\subseteq\mathbb{R}^d$ and let $N_{\mathbb{R}}(\cdot;\theta_{\mathbb{R}})$ be a
real single-hidden-layer ReLU MLP. Let $N_{32}(\cdot;\theta_{32})$ be its Float32-executed counterpart with Float32 parameters $\theta_{32}$, and let
$\theta_{\mathbb{R}}' := \mathrm{toReal}(\theta_{32})$ be the real interpretation of those float parameters.
Assume:
\begin{enumerate}[leftmargin=*, itemsep=0.2em]
\item (\textbf{Real approximation}) \[\sup_{x\in K} |f(x)-N_{\mathbb{R}}(x;\theta_{\mathbb{R}})| \le \varepsilon_{\textsf{UAT}}\]
\item (\textbf{Parameter/representation gap}) \[\sup_{x\in K} |N_{\mathbb{R}}(x;\theta_{\mathbb{R}})-N_{\mathbb{R}}(x;\theta_{\mathbb{R}}')|
      \le \varepsilon_{\textsf{param}}\]
\item (\textbf{Float32 execution gap}) for all admissible inputs (under a finiteness hypothesis),
      \[|N_{\mathbb{R}}(x;\theta_{\mathbb{R}}')-\mathrm{toReal}(N_{32}(x_{32};\theta_{32}))| \le \varepsilon_{\textsf{exec}}\]
\end{enumerate}
Then for all $x\in K$ (with any Float32 encoding $x_{32}$), where any input-encoding/quantization error can be bounded separately and added to the right-hand side (or folded into $\varepsilon_{\textsf{exec}}$),
\[
  \bigl| f(x) - \mathrm{toReal}(N_{32}(x_{32};\theta_{32})) \bigr|
  \;\le\;
  \varepsilon_{\textsf{UAT}} + \varepsilon_{\textsf{param}} + \varepsilon_{\textsf{exec}}
\]
\end{BlueTheorem}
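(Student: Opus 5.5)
The plan is to reduce the statement to two applications of the triangle inequality in $\mathbb{R}$, using the three hypotheses exactly as stated. Fix $x\in K$ and a Float32 encoding $x_{32}$. Write $a := f(x)$, $b := N_{\mathbb{R}}(x;\theta_{\mathbb{R}})$, $c := N_{\mathbb{R}}(x;\theta_{\mathbb{R}}')$ and $e := \mathrm{toReal}(N_{32}(x_{32};\theta_{32}))$. Then
\[
  |a-e| \;\le\; |a-b| + |b-c| + |c-e|,
\]
which is the chain $|a-e|\le|a-b|+|b-e|$ followed by $|b-e|\le|b-c|+|c-e|$.

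Each term is bounded pointwise by specialising the corresponding hypothesis:
\begin{itemize}
\item For $|a-b|$, a supremum bound implies a pointwise bound at every $x\in K$. If the Lean statement phrases the hypothesis via $\sup$, I will use \texttt{le\_ciSup} (or \texttt{ciSup\_le\_iff}) together with bounded-above side conditions. I would rather restate the hypotheses in the pointwise form $\forall x\in K,\ |\cdot|\le\varepsilon$, which sidesteps boundedness entirely.
\item Hypothesis (2) bounds $|b-c|$ in the same way.
\item Hypothesis (3) bounds $|c-e|$, taking $x$ admissible with encoding $x_{32}$.
\end{itemize}
Adding the three inequalities via \texttt{add\_le\_add} gives the claimed bound $\varepsilon_{\textsf{UAT}}+\varepsilon_{\textsf{param}}+\varepsilon_{\textsf{exec}}$. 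The remark about input-encoding error adds one further term. If that error is reported separately, one more triangle step through $N_{\mathbb{R}}(\mathrm{toReal}(x_{32});\theta_{\mathbb{R}}')$ handles it, with the extra term bounded by a Lipschitz constant of the real MLP times $|x-\mathrm{toReal}(x_{32})|$. Otherwise it is simply absorbed into $\varepsilon_{\textsf{exec}}$ as the statement allows.

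The main obstacle is not the inequality chain but matching quantifier domains in hypothesis (3). That hypothesis is stated for ``admissible inputs under a finiteness hypothesis'', so I must check that every $x\in K$ paired with its chosen $x_{32}$ is admissible: the Float32 evaluation must be finite, and the real input fed to $N_{\mathbb{R}}(\cdot;\theta_{\mathbb{R}}')$ in (3) must be $x$ itself, not $\mathrm{toReal}(x_{32})$. I would first settle this by formulating (3) directly over pairs $(x,x_{32})$ with $x\in K$ and a finiteness predicate, so the conclusion inherits the same side condition.

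In the $\varepsilon_{\textsf{exec}}$ hypothesis itself, the compositional rounding bound would come from Theorem~\ref{thm:ieee-to-round-refinement} together with the NF local-to-global error lemmas. That derivation is not needed here, since the statement takes (3) as an assumption.
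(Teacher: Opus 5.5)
Your proposal is correct and is essentially the paper's intended argument. The theorem is stated as a template whose content is the chain $|f(x)-e|\le|f(x)-b|+|b-c|+|c-e|$, with each summand bounded by specializing hypotheses (1)--(3) at the given $x$ and $x_{32}$.

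Your extra care on two points is the right bookkeeping for a formal version. Using pointwise rather than $\sup$ forms of the hypotheses avoids Lean's conventions for real-valued suprema, and letting the conclusion inherit the finiteness/admissibility side condition of (3) covers the gap the statement glosses over. The Lipschitz step for input encoding is optional, since (3) already compares against $N_{\mathbb{R}}(x;\theta_{\mathbb{R}}')$ at the true $x$.
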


\vspace{-0.5cm}
\section{Limitations and Discussion}
\label{app:expanded-discussion}

\paragraph{Limitations and near-term roadmap.}
\textbf{Execution vs.\ training at scale.}
\TorchLean{} is semantic infrastructure, not a throughput-optimized training stack. Lean execution and the optional CUDA backend
prioritize a machine-checked semantic link between (i) the program users write, (ii) the IR graph we reason about, and (iii) the artifacts we
check, rather than matching industrial training throughput. The CUDA path accelerates selected kernels and examples behind explicit native runtime contracts; large-scale training can also be performed externally and imported as weights/structure, with Lean used for semantic checking, certificate validation, and proof-carrying artifacts. Scaling the native backend, kernel fusion, and internal bound optimization remains an engineering extension of the same semantic design.

\textbf{Verification scope and certificates.}
Our native verification layer covers a curated IR fragment with a proved-sound IBP core and a CROWN/LiRPA style affine engine, together with
certificate checking infrastructure for externally produced bound artifacts. We support an $\alpha/\beta$-CROWN certificate
dialect for the graph-based verifier: certificates may supply IBP pre-activation boxes, affine bounds, $\alpha$ parameters for unstable-ReLU
lower relaxations, and optional $\beta$ phase vectors that are checked for consistency and then replayed by the Lean step semantics. When
strongest current tightness is required beyond this interface, we rely on external optimizers as \emph{untrusted producers} and check their
exported bounds/leaf certificates against the shared IR semantics, keeping the trusted computing base to the Lean checker plus the IR
denotation. Extending the checker to validate additional solver families (e.g., richer dual-feasibility conditions, cutting-plane certificates,
or SDP-based relaxations) is left to future extensions; Appendix~\ref{app:verification} and
Appendix~\ref{app:verification-schemas} document the present schemas and extension points.

\textbf{Float32 trust boundary (the ``hardware gap'').}
We separate (i) real-valued reference semantics, (ii) proof level rounding-on-$\mathbb{R}$ models (FP32/NF) for compositional error
envelopes, and (iii) an executable bit-level Float32 kernel (\texttt{IEEE32Exec}). What remains is a \emph{target-specific} refinement from a
deployed \texttt{Float32} toolchain to the executable model: one must fix a deployment configuration (rounding mode, denormal policy, contraction/
reassociation, and reduction order constraints) and then discharge a conformance interface (by proof or explicit validation). Appendix~\ref{app:nf-compose}
summarizes the internal refinements we prove and the remaining target-level obligations.

\textbf{Deployment and code generation.}
Inside Lean, we close the training and verification semantic gap by making the operator-tagged IR the single semantic target. Deploying a verified model on an
embedded target still introduces a translation step: either run a small IR interpreter on-device, or generate C/Rust code from the IR and connect
its behavior back to the Lean denotation. Lean's compilation pipeline and runtime already support efficient compiled code, and there are emerging
toolchains that bridge verified Lean reasoning with real systems languages (e.g., Rust-to-Lean verification pipelines), suggesting concrete paths
toward deployment pipelines.

\end{document}